\documentclass[12pt]{amsart}
\usepackage{amsmath, amssymb, stmaryrd, setspace, nicefrac, multicol, amsthm, geometry}

\newtheorem{thm}{Theorem}
\newtheorem{lem}[thm]{Lemma}
\newtheorem{cor}[thm]{Corollary}
\newtheorem{prop}[thm]{Proposition}

\theoremstyle{definition}

\renewcommand{\A}{\mathcal{A}}

\newcommand{\C}{\mathcal{C}}
\newcommand{\D}{\mathcal{D}}

\newcommand{\m}{\mathbf{m}}
\newcommand{\M}{\mathbf{M}}

\renewcommand{\P}{\mathcal{P}}
\newcommand{\QQ}{\mathcal{Q}}

\newcommand{\cs}{2^\omega}
\newcommand{\str}{2^{<\omega}}
\newcommand{\uh}{{\upharpoonright}}
\newcommand{\halts}{{\downarrow}}
\newcommand{\diverges}{{\uparrow}}

\renewcommand{\tt}{\mathit{tt}}
\newcommand{\KA}{\mathit{KA}}
\newcommand{\Km}{\mathit{Km}}
\newcommand{\llb}{\llbracket}
\newcommand{\rrb}{\rrbracket}
\newcommand{\hs}{\ensuremath{\emptyset'}}

\title{Bridging Computational Notions of Depth}
\author{Laurent Bienvenu, Christopher P.\ Porter}
\date{\today} 
\begin{document}

\maketitle 
\begin{abstract}
In this article, we study the relationship between notions of depth for sequences, namely, Bennett's notions of strong and weak depth, and deep $\Pi^0_1$ classes, introduced by the authors and motivated by previous work of Levin.  For the first main result of the study, we show that every member of a $\Pi^0_1$ class is order-deep, a property that implies strong depth.  From this result, we obtain new examples of strongly deep sequences based on properties studied in computability theory and algorithmic randomness.  We further show that not every strongly deep sequence is a member of a deep $\Pi^0_1$ class.  For the second main result, we show that the collection of strongly deep sequences is negligible, which is equivalent to the statement that the probability of computing a strongly deep sequence with some random oracle is 0, a property also shared by every deep $\Pi^0_1$ class.  Finally, we show that variants of strong depth, given in terms of a priori complexity and monotone complexity, are equivalent to weak depth.
\end{abstract}

\section{Introduction}

Bennett introduced the notion of logical depth in \cite{Bennett1995} as a measure of complexity, formulated in terms the amount of computation time required to reproduce a given object.  Whereas the Kolmogorov complexity of a string $\sigma\in\str$ measures the length of the shortest input given to a fixed universal machine that reproduces $\sigma$ as its output, logical depth measures the number of steps it takes to recover $\sigma$ from this shortest input.  Bennett further defined a sequence $X\in\cs$ to be strongly deep if for every computable function $t$ the logical depth of almost all of the initial segments $X\uh n$ of $X$ is greater than $t(n)$.

Bennett established several fundamental facts about strongly deep sequences, namely that the halting set $K$ is strongly deep, that no computable sequence and no Martin-L\"of random sequence is strongly deep, and that strong depth is closed upwards under truth-table reducibility (a result he referred to as the \emph{slow growth law}). Bennett further introduced the notion of weak depth, where a sequence is weakly deep if it is not truth-table reducible to a random sequence.

An analogue of deep sequences for $\Pi^0_1$ classes, i.e., effectively closed subsets of $\cs$, was developed by the present authors in \cite{BienvenuP2016}.  The authors isolated the notion of a deep $\Pi^0_1$ class as a generalized of work of Levin \cite{Levin2013}, who implicitly showed that the $\Pi^0_1$ class of consistent completions of Peano arithmetic is deep.  The basic idea, made precise in the next section, is that a $\Pi^0_1$ class $\P$ is deep if the probability of computing some length $n$ initial segment of some member of $\P$ via some Turing functional equipped with a random oracle rapidly approaches zero as $n$ grows without bound.  In \cite{BienvenuP2016}, the authors proved a number of results about deep $\Pi^0_1$ classes, including an analogue of the slow growth law for deep $\Pi^0_1$ classes in the Medvedev degrees, as well as identifying a number of examples of deep $\Pi^0_1$ classes based on properties studied in computability theory and algorithmic information theory.

The aim of this study is to show that the relationship between strongly deep sequences and deep $\Pi^0_1$ classes is no mere analogy.  In particular, we prove that every member of a deep $\Pi^0_1$ class is strongly deep, from which it follows that we gain a significant number of newly identified examples of strongly deep sequences based on results from~\cite{BienvenuP2016}.  We further show that a strongly deep sequence need not be a member of a deep $\Pi^0_1$ class.  Next, as every deep $\Pi^0_1$ class is negligible, in the sense that the probability of computing a member of such a class with a Turing functional equipped with a random oracle is zero, in light of the fact that all members of deep $\Pi^0_1$ classes are strongly deep, it is reasonable to ask   whether the collection of strongly deep sequences is negligible.  We answer this question in the affirmative, while further showing that the collection of sequences that are deep with respect to any fixed time bound is not negligible.  Finally, we consider variants of strong depth given in terms of a priori and monotone complexity and demonstrate that these two variants are equivalent to Bennett's notion of weak depth (with the latter equivalence following from work by Schnorr and Fuchs~\cite{SchnorrF1977}).  

One takeaway we aim to emphasize in this study is the importance of the slow growth law for the study of depth, akin to the role of randomness preservation in the study of algorithmic randomness.  According to the latter, every sequence that is truth-table reducible to a sequence that is random with respect to a computable measure is itself random with respect to a computable measure, which is precisely the dual of the slow growth law for deep sequences.  We anticipate that the slow growth law will continue to be a useful tool in the study of notions of depth.

The outline of the remainder of this article is as follows.  In Section~\ref{sec-background} we provide background on notions from computability theory and algorithmic randomness that we draw upon in this study.  Next, in Section~\ref{sec-slow-growth}, we provide a short proof of the slow growth law for strongly deep sequence and use it to identify some hitherto unnoticed examples of strongly deep sequences from classical computability theory.  In Section~\ref{sec-members} we prove the main result of our study, namely that every member of a deep $\Pi^0_1$ class is strongly deep.  Here we observe some consequences of this result and separate several depth notions, showing in particular that not every strongly deep sequence is a member of a deep $\Pi^0_1$ class.  Section~\ref{sec-depth-negligible} contains our second main result, that the collection of strongly deep sequences is negligible, while in Section~\ref{sec-depth-variants} we show the equivalence of depth notions given in terms of a priori and monotone complexity with Bennett's notion of weak depth.

\section{Background}\label{sec-background}

\subsection{Turing functionals}

Recall that a \emph{Turing functional} $\Phi:\subseteq\cs\rightarrow\cs$ can be defined in terms of a c.e.\ set $S_\Phi$ of pairs of strings $(\sigma,\tau)$ such that 
if $(\sigma,\tau),(\sigma',\tau')\in S_\Phi$ and $\sigma\preceq\sigma'$, then $\tau\preceq\tau'$ or $\tau'\preceq\tau$.  For each $\sigma\in\str$, we define $\Phi^\sigma$ to be the maximal string  in $\{\tau: (\exists \sigma'\preceq\sigma) (\sigma',\tau)\in S_\Phi\}$ in the order given by $\preceq$.  To obtain a map defined on $\cs$ from the set $S_\Phi$, for each $X\in\cs$, we let $\Phi^X$ be the maximal $y\in\str\cup\cs$ in the order given by $\preceq$  such that $\Phi^{X \uh n}$ is a prefix of~$y$ for all~$n$. 
\subsection{Semimeasures}

A \emph{discrete semimeasure} is a function $m:\str\rightarrow [0,1]$ satisfying $\sum_{\sigma\in\str}m(\sigma)\leq 1$.  Similarly, \emph{a continuous semimeasure} is  a function $P:\str\rightarrow [0,1]$ satisfying (i) $P(\varnothing)\leq 1$ and (ii) $P(\sigma)\geq P(\sigma0)+P(\sigma1)$ for all $\sigma\in\str$.  Given a continuous semimeasure $P$ and some $S\subseteq \str$, we set $P(S)=\sum_{\sigma\in S} P(\sigma)$.

A  discrete semimeasure $m$ is computable if its output values $m(\sigma)$ are computable uniformly in the input $\sigma\in\str$ (and similarly for continuous semimeasures). Here we will also consider \emph{lower semicomputable} semimeasures (both discrete and continuous), where a function $f:\str\rightarrow[0,1]$ is lower semicomputable if each value $f(\sigma)$ is the limit of a computable, nondecreasing sequence of rationals, uniformly in $\sigma\in\str$.

An important development due to Levin \cite{LevinZ1970} was the identification of \emph{universal} semimeasures:  for discrete semimeasures, $m$ is universal if for every lower semicomputable measure $m_0$ there is some constant $c$ such that $m_0(\sigma)\leq c\cdot m(\sigma)$.  Similarly, a continuous semimeasure $M$ is universal if for every lower semicomputable measure $P$ there is some constant $c$ such that $P(\sigma)\leq c\cdot M(\sigma)$.  Hereafter, $\mathbf{m}$ and $\mathbf{M}$ will denote fixed universal discrete and continuous semimeasures, respectively.

%
%

\subsection{Initial segment complexity}\label{subsec-isc}

Recall that the prefix-free Kolmogorov complexity of a string $\tau\in\str$ is defined by setting $K(\tau)=\min\{|\sigma|\colon U(\sigma)\halts=\tau\}$, where $U$ is a fixed universal prefix-free machine (i.e., recall that a machine $M$ is prefix-free if for $\sigma,\rho\in\str$, if $M(\sigma)\halts$ and $\sigma\prec\rho$, then $M(\rho)\diverges$).  Moreover, we can define time-bounded versions of Kolmogorov complexity.  A function $t:\omega\rightarrow\omega$ is called a \emph{time bound} if $t$ is total and non-decreasing.  Then for a fixed computable time bound $t$, the $t$-time-bounded complexity of $\tau\in\str$ is defined by setting $K^t(\tau)=\min\{|\sigma|\colon U(\sigma)\halts=\tau\text{\;in $\leq t(|\tau|)$ steps}\}$.

Note that by Levin's coding theorem, $K(\sigma)=-\log\mathbf{m}(\sigma)+O(1)$ for all $\sigma\in\str$. A similar relationship holds for computable discrete semimeasures and time-bounded Kolmogorov complexity.  First, we define a time-bounded version of $\mathbf{m}$ as follows.  As $\mathbf{m}$ is lower semicomputable, for each $s\in\omega$, we have an approximation $\mathbf{m}_s$ of $\mathbf{m}$ (i.e., for each $\sigma\in\str$, $\mathbf{m}_s(\sigma)$ is the $s$-th rational number in computable sequence that converges to $\mathbf{m}(\sigma)$).  Then given a computable time bound $t$, we set $\mathbf{m}^t(\sigma)=\mathbf{m}_{t(|\sigma|)}$, which is clearly a computable semimeasure.  We will make use of the following lemma from \cite{BienvenuDM2023} (where for functions $f,g$, $f\leq^\times g$ means that there is some $c$ such that $f(n)\leq c\cdot g(n)$ for all $n\in\omega$).

\begin{lem}[\cite{BienvenuDM2023}]{\ }\label{lem-comp-semi}
\begin{itemize}
\item[(i)] For every computable discrete semimeasure $m$, there is some computable time bound $t$ such that $m\leq^\times \mathbf{m}^t$.  
\item[(ii)] For every computable time bound $t$, $2^{-K^t}$ is a computable discrete semimeasure.
\item[(iii)] For every computable time bound $t$, there is some computable time bound $t'$ such that $2^{-K^t}\leq^\times \mathbf{m}^{t'})$.  
\end{itemize}
\end{lem}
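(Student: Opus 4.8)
The three parts are essentially independent, so I would treat them in the stated order, noting in advance that (iii) will fall out of (i) and (ii) combined. The genuine content is in part (i), so I would start there. The plan is to begin from universality: a computable discrete semimeasure is in particular lower semicomputable, so there is a constant $c$ with $m(\sigma)\leq c\cdot\mathbf{m}(\sigma)$ for all $\sigma\in\str$. The obstacle is that $\mathbf{m}^t$ only exposes the single snapshot $\mathbf{m}_{t(|\sigma|)}(\sigma)$ of the approximation, so I must guarantee that the approximation has converged far enough by stage $t(|\sigma|)$. Using $\mathbf{m}_s(\sigma)\nearrow\mathbf{m}(\sigma)$ together with strict positivity of $\mathbf{m}$, for every $\sigma$ there is a stage at which $2c\cdot\mathbf{m}_s(\sigma)>m(\sigma)$, and such a stage can be found uniformly in $\sigma$ by dovetailing a search for a verifiable strict inequality between the rational $\mathbf{m}_s(\sigma)$ and the computable real $m(\sigma)$; this yields a total computable map $\sigma\mapsto s(\sigma)$. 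The crucial move is then that each length contains only finitely many strings, so $t(n)=\max\{s(\sigma)\colon |\sigma|\leq n\}$ is a total, \emph{non-decreasing} (hence legitimate) computable time bound with $t(|\sigma|)\geq s(\sigma)$. Monotonicity of $\mathbf{m}_s(\sigma)$ in $s$ then gives $\mathbf{m}^t(\sigma)=\mathbf{m}_{t(|\sigma|)}(\sigma)\geq\mathbf{m}_{s(\sigma)}(\sigma)\geq m(\sigma)/(2c)$, i.e.\ $m\leq^\times\mathbf{m}^t$ with constant $2c$.

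For (ii), I would establish computability and the semimeasure inequality separately. Given $\tau$, the value $T=t(|\tau|)$ is computable, and any program witnessing $K^t(\tau)$ has length at most $T$, since $U$ reads at most one input bit per step and so consumes at most $T$ bits in $T$ steps. Hence $K^t(\tau)$ reduces to the finite search of simulating $U(\sigma)$ for $T$ steps over all $\sigma$ with $|\sigma|\leq T$ and taking the least length among those with $U(\sigma)\halts=\tau$ (setting $K^t(\tau)=\infty$, so $2^{-K^t(\tau)}=0$, if there are none); this makes $\tau\mapsto 2^{-K^t(\tau)}$ computable. For the semimeasure bound, fix for each $\tau$ with $K^t(\tau)<\infty$ a shortest witnessing program $\sigma_\tau$, so $|\sigma_\tau|=K^t(\tau)$. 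The map $\tau\mapsto\sigma_\tau$ is injective (distinct outputs cannot arise from a single program) with image contained in $\mathrm{dom}(U)$, which is prefix-free; a subset of a prefix-free set is prefix-free, so Kraft's inequality gives $\sum_\tau 2^{-K^t(\tau)}=\sum_\tau 2^{-|\sigma_\tau|}\leq 1$.

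Part (iii) is then immediate: given $t$, part (ii) shows $2^{-K^t}$ is a computable discrete semimeasure, so applying part (i) to it produces a computable time bound $t'$ with $2^{-K^t}\leq^\times\mathbf{m}^{t'}$. I expect the only real difficulty to be the time-bound construction in (i), where one must convert the uniform-constant domination by $\mathbf{m}$ into domination by a single finite-stage approximation $\mathbf{m}^t$; the maneuver that makes this work is bundling the per-string convergence stages $s(\sigma)$ into a non-decreasing bound using the finiteness of each length class, with a minor care needed in comparing the rational approximants $\mathbf{m}_s(\sigma)$ against the possibly irrational computable values $m(\sigma)$.
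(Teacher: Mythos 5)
The paper does not actually prove this lemma: it is imported verbatim from \cite{BienvenuDM2023}, so there is no internal proof to compare against. Your proposal is correct and supplies the standard argument one would expect: for (i), the key maneuver of converting the (non-uniform) multiplicative domination $m\leq c\cdot\mathbf{m}$ into domination by a single finite-stage snapshot, via a computable per-string stage $s(\sigma)$ found by semi-deciding the strict inequality $2c\cdot\mathbf{m}_s(\sigma)>m(\sigma)$ (which eventually holds thanks to strict positivity of $\mathbf{m}$, covering the case $m(\sigma)=0$) and then taking the non-decreasing envelope $t(n)=\max\{s(\sigma)\colon|\sigma|\leq n\}$ over the finite length classes; for (ii), bounded simulation plus Kraft's inequality applied to the prefix-free set of shortest witnesses; and (iii) by composing (i) and (ii). Two minor points are worth making explicit: the constant $c$ in (i) is not computable from $m$, but since the lemma only asserts the \emph{existence} of $t$, hard-wiring $c$ into the search is legitimate (as you implicitly do); and in (ii) the claim that any program halting within $T$ steps has length at most $T$ rests on the self-delimiting machine model in which $U$ reads at most one input bit per step---under that standard convention your finite search is indeed exhaustive, and the values $2^{-K^t(\tau)}$ are exactly computable dyadic rationals.
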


In addition, we need the following theorem (see, e.g. \cite[Theorem 4.3(2)]{JuedesLL1994}).

\begin{thm}\label{thm-comp-semi}
For every computable time bound $t$, there is a computable time bound $t'$ such that $\mathbf{m}^t\leq^\times 2^{-K^{t'}}$.
\end{thm}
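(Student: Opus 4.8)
The plan is to establish a time-bounded version of one direction of Levin's coding theorem, carrying out the usual Kraft--Chaitin construction while tracking the running time at every stage. Fix a computable time bound $t$. Since $\mathbf{m}^t$ is a \emph{computable} discrete semimeasure, for each $\sigma\in\str$ we may compute the rational $\mathbf{m}^t(\sigma)=\mathbf{m}_{t(|\sigma|)}(\sigma)$, and the number of steps this takes is bounded by a computable function of $|\sigma|$. First I would assign to each $\sigma$ with $\mathbf{m}^t(\sigma)>0$ a requested length $r(\sigma)=\lceil -\log\mathbf{m}^t(\sigma)\rceil+1$, which is computable by comparing the rational $\mathbf{m}^t(\sigma)$ with powers of $2$. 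Then $2^{-r(\sigma)}\leq\tfrac12\mathbf{m}^t(\sigma)$, so $\sum_\sigma 2^{-r(\sigma)}\leq\tfrac12\sum_\sigma\mathbf{m}^t(\sigma)\leq 1$ and the Kraft--Chaitin condition holds. Feeding the computable list of requests $(r(\sigma),\sigma)$ to the Kraft--Chaitin theorem yields a prefix-free machine $M$ with $K_M(\sigma)\leq r(\sigma)=-\log\mathbf{m}^t(\sigma)+O(1)$.

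The crucial point absent from the untimed coding theorem is that $M$ must decode \emph{quickly}. I would enumerate the requests in order of nondecreasing $|\sigma|$; then by the time the Kraft--Chaitin assignment has processed every string of length at most $n$, each such string has been issued its code. Since computing $\mathbf{m}^t(\sigma)$ for all $\sigma$ of length at most $n$ and running the purely combinatorial interval bookkeeping both take time bounded by a computable function of $n$, the machine $M$, run on the code it assigns to a string $\sigma$, halts and outputs $\sigma$ within $s(|\sigma|)$ steps for some computable time bound $s$. Note that this bound is expressed in terms of the \emph{output} length $|\sigma|$, which is exactly what the definition of $K^{t'}$ measures, even though the code itself may be longer than $\sigma$.

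Finally I would transfer from $M$ to the fixed universal prefix-free machine $U$. By universality there is a constant $d$ and a self-delimiting prefix $p$ with $|p|=d$ such that $U(p\rho)$ simulates $M(\rho)$, and the simulation incurs only a computable overhead; hence $U(p\rho)=\sigma$ within $t'(|\sigma|)$ steps for a suitable computable, nondecreasing time bound $t'$ obtained by monotonizing $s$ together with the simulation cost. Consequently $K^{t'}(\sigma)\leq K_M(\sigma)+d\leq -\log\mathbf{m}^t(\sigma)+O(1)$, which rearranges to $\mathbf{m}^t(\sigma)\leq^\times 2^{-K^{t'}(\sigma)}$, as required.

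I expect the main obstacle to be the bookkeeping of the middle paragraph: one must argue that the Kraft--Chaitin decoder halts within a computable time bound expressed as a function of the output length rather than the code length, and that this bound survives the passage to $U$. Both reduce to the fact that $\mathbf{m}^t$ is genuinely computable with a computable bound on its evaluation time, so every search performed by $M$ is finite and uniformly time-bounded; the only real care needed is to enumerate the requests so that a string's code is guaranteed to be issued by the time all shorter strings have been handled. Together with Lemma~\ref{lem-comp-semi}(iii), this completes the time-bounded analogue of Levin's coding theorem in both directions.
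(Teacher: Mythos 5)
Your argument is correct. Note first that the paper does not actually prove this statement: it is quoted from Juedes, Lathrop, and Lutz \cite[Theorem 4.3(2)]{JuedesLL1994}, so there is no in-paper proof to compare against. What you give is essentially the standard argument behind the cited fact: a time-bounded Kraft--Chaitin construction. You correctly identify and handle the two points where the untimed coding theorem argument needs modification. First, since $K^{t'}(\tau)$ counts steps as a function of the \emph{output} length $|\tau|$, it suffices that the decoder halt within a computable bound in $|\sigma|$, and your device of enumerating requests in order of nondecreasing $|\sigma|$ guarantees that $\sigma$'s code word is issued after processing only the (computably time-bounded) finitely many strings of length at most $|\sigma|$; this works because $\mathbf{m}^t(\sigma)=\mathbf{m}_{t(|\sigma|)}(\sigma)$ is total computable with running time computably bounded in $|\sigma|$, which also bounds the request length $r(\sigma)$, hence the input-reading time, computably in $|\sigma|$. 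Second, your Kraft weights $2^{-r(\sigma)}\leq\frac12\mathbf{m}^t(\sigma)$ make the Kraft sum valid and give $K_M(\sigma)\leq -\log\mathbf{m}^t(\sigma)+O(1)$, which rearranges to the claim. The one point you pass over slightly quickly is the final transfer to $U$: bare universality of a prefix-free machine only concerns domains and outputs, not simulation time, so the claim that $U(p\rho)$ simulates $M(\rho)$ with computably bounded overhead requires that the fixed universal machine be time-efficient. This is the standard convention in resource-bounded Kolmogorov complexity (and is implicit in the paper's definition of $K^t$, as well as in \cite{JuedesLL1994}), since otherwise $K^{t'}$ would depend badly on the choice of $U$; it deserves a sentence but is not a gap in substance.
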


\noindent Note that by combining Lemma \ref{lem-comp-semi}(iii) and Theorem \ref{thm-comp-semi}, we obtain a resource-bounded analogue of Levin's coding theorem.

In the case of continuous semimeasures, we directly define $\KA(\sigma):=-\log \mathbf{M}(\sigma)$ to be the \emph{a priori complexity} of $\sigma\in\str$.  Just as we defined $\mathbf{m}^t$ for any computable time bound $t$, we can similarly define $\mathbf{M}^t$, which is a computable continuous semimeasure.  Moreover, we can establish the analogue of Lemma \ref{lem-comp-semi}(i):  For every computable continuous semimeasure $P$, there is some computable time bound $t$ such that $P\leq^\times  \mathbf{M}^t$.  We will also define $\KA^t:=\mathbf{M}^t$ for any given computable time bound $t$.

Lastly, we define monotone complexity in terms of monotone machines, where a monotone machine $M:\str\rightarrow\str$ satisfies the property that for $\sigma,\tau\in\mathrm{dom}(M)$, if $\sigma\preceq\tau$, then either $M(\sigma)\preceq M(\tau)$ or $M(\tau)\preceq M(\sigma)$.  Given a universal monotone machine $U$, we set $\Km(\tau)=\min\{|\tau|\colon U(\sigma)\halts\succeq\tau\}$.  Given a computable time bound, we can also define $\Km^t$ in the obvious way.

\subsection{Randomness and depth notions}
Given a computable measure $\mu$ on $\cs$ (i.e., a measure on $\cs$ where the values $\mu(\llb\sigma\rrb)$ are computable uniformly in $\sigma\in\str$), recall that a \emph{$\mu$-Martin-L\"of test} is a uniformly $\Sigma^0_1$ sequence $(U_i)_{i\in\omega}$ such that $\mu(U_i)\leq 2^{-i}$.  Recall further that a sequence $X\in\cs$ passes the test $(U_i)_{i\in\omega}$ if $X\notin\bigcap_{i\in\omega} U_i$ and $X$ is \emph{$\mu$-Martin-L\"of random} if it passes all $\mu$-Martin-L\"of tests.  In the case that $\mu$ is the Lebesgue measure on $\cs$ (which we denote by~$\lambda$), we will refer to $\lambda$-Martin-L\"of random sequences simply as Martin-L\"of random sequences.

Next,  $X\in\cs$ is \emph{strongly deep} if for every computable time bound $t$, we have $K^t(X\uh n) - K(X\uh n) \rightarrow \infty$.  A slightly stronger notion is given by order-depth, where $X\in\cs$ is \emph{order-deep} if there is a computable order function $g:\omega\rightarrow\omega$ such that $K^t(X\uh n) - K(X\uh n) \geq g(n)$ for almost every $n\in\omega$. Here we use the term `order function', or simply `order' to mean a non-decreasing and unbounded function. When $h$ is such a function, $h^{-1}(k)$ denotes the smallest $n$ such that $h(n) \geq k$. Note that $h^{-1}$ is computable when $h$ is. 

In the rest of the paper we will sometimes use an equivalent characterization of order-depth, given by the following lemma. 

\begin{lem}\label{lem-alt-order-deep} For $X\in\cs$, the following are equivalent.
\begin{itemize}
\item[(i)] $X$ is order-deep
\item[(ii)] For some computable increasing function~$h$, for any computable time bound~$t$ and almost all~$n$, $K^t(X \uh h(n)) - K(X \uh h(n)) \geq^+ n$. 
\item[(iii)] For some computable increasing function~$h$, for any computable time bound~$t$ and almost all~$n$, $\frac{\mathbf{m}(X \uh h(n))}{\mathbf{m}^t(X \uh h(n))} \geq^\times 2^n$. 
\end{itemize}
\end{lem}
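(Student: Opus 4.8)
The plan is to prove the equivalences $(i) \Leftrightarrow (ii) \Leftrightarrow (iii)$, treating $(iii)$ as essentially a restatement of $(ii)$ under Levin's coding theorem and its resource-bounded analogue, so that the real content lies in $(i) \Leftrightarrow (ii)$. For $(iii) \Leftrightarrow (ii)$, I would take logarithms: since $K = -\log \mathbf{m} + O(1)$ (Levin's coding theorem) and, by combining Lemma~\ref{lem-comp-semi}(iii) with Theorem~\ref{thm-comp-semi}, we have $K^t = -\log \mathbf{m}^{t'} + O(1)$ for a suitable shifting of the time bound in each direction, the quantity $\log\frac{\mathbf{m}(X\uh h(n))}{\mathbf{m}^t(X\uh h(n))}$ equals $K^{t}(X\uh h(n)) - K(X\uh h(n))$ up to an additive constant and up to replacing $t$ by a related computable time bound. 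Because $(ii)$ and $(iii)$ each quantify over \emph{all} computable time bounds, the constant shift in the bound is harmless, and the $\geq^+ n$ versus $\geq^\times 2^n$ bookkeeping matches up exactly. I would state this carefully but not belabor it.

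The heart is $(i) \Leftrightarrow (ii)$. For the direction $(i) \Rightarrow (ii)$, suppose $X$ is order-deep via a computable order $g$, so that for every computable $t$ and almost all $m$ we have $K^t(X\uh m) - K(X\uh m) \geq g(m)$. I would define $h := g^{-1}$ (which is computable since $g$ is a computable order), or rather build an increasing $h$ from $g^{-1}$ so that $g(h(n)) \geq n$; then evaluating the order-depth inequality at $m = h(n)$ gives $K^t(X\uh h(n)) - K(X\uh h(n)) \geq g(h(n)) \geq n$ for almost all $n$, which is $(ii)$ (in fact with the sharper $\geq$ rather than $\geq^+$). The only care needed is ensuring $h$ is genuinely increasing and that $g^{-1}$ being nondecreasing and unbounded translates the ``almost all $m$'' quantifier correctly into ``almost all $n$'' along the subsequence $h(n)$.

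For the converse $(ii) \Rightarrow (i)$, I start from an increasing computable $h$ witnessing $(ii)$ and must manufacture a computable order $g$ that lower-bounds the complexity gap at \emph{every} length, not merely along the sparse subsequence $\{h(n)\}$. The natural choice is $g(m) := h^{-1}(m)$, the largest $n$ with $h(n) \leq m$, which is a computable order since $h$ is computable and increasing. The subtlety, and what I expect to be the main obstacle, is filling in the gap between consecutive points $h(n)$ and $h(n+1)$: at an intermediate length $m$ with $h(n) \leq m < h(n+1)$ I only directly know the gap at $h(n)$, so I must argue that the complexity gap cannot drop much as the length increases past $h(n)$. This requires a monotonicity-type estimate---intuitively, knowing an initial segment $X\uh m$ lets one recover the shorter $X\uh h(n)$ cheaply, so $K(X\uh h(n)) \leq K(X\uh m) + O(\log m)$ and similarly (with an adjusted time bound) for $K^t$, which lets the lower bound $n$ at length $h(n)$ propagate, up to an additive logarithmic loss, to all $m \in [h(n), h(n+1))$. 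Provided the order $g$ is chosen to grow slowly enough to absorb this logarithmic loss (e.g.\ by composing $h^{-1}$ with a sufficiently slow adjustment), one recovers order-depth at all lengths. I would make this propagation argument the focal point, since it is where the ``for almost all $n$'' along the subsequence must be upgraded to ``for almost all $m$'' everywhere while simultaneously handling the quantifier over all time bounds.
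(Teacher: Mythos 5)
Your treatment of (i)$\Rightarrow$(ii) (building $h$ from $g^{-1}$) and of (ii)$\Leftrightarrow$(iii) (via Levin's coding theorem and its resource-bounded analogue, Lemma~\ref{lem-comp-semi}(iii) plus Theorem~\ref{thm-comp-semi}) matches the paper. But the propagation step that you correctly identify as the heart of (ii)$\Rightarrow$(i) has a genuine gap: your ``monotonicity-type estimate'' bounds the wrong side. To lower-bound $K^t(X\uh k)-K(X\uh k)$ for $k\in[h(n),h(n+1))$ you need a lower bound on $K^t(X\uh k)$ \emph{and an upper bound on} $K(X\uh k)$. Your inequality $K(X\uh h(n))\leq K(X\uh k)+O(\log k)$ (in fact $O(1)$ suffices, since $n$ is computable from $k$) only gives a \emph{lower} bound on $K(X\uh k)$, which is useless here; the relevant upper bound is the subadditivity estimate $K(X\uh k)\leq^+ K(X\uh h(n))+K(X\uh [h(n),k)\mid X\uh h(n))$, and that conditional term can be as large as $k-h(n)\leq h(n+1)-h(n)$ --- linear in the gaps of $h$, not logarithmic. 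Since $h$ is an arbitrary computable increasing function witnessing (ii), its gaps can dwarf both $\log k$ and the target value $n=h^{-1}(k)$ itself, so no ``sufficiently slow'' adjustment of $g=h^{-1}$ absorbs the loss: with only your two truncation inequalities, the complexity gap could a priori collapse to nothing inside $[h(n),h(n+1))$.

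What is needed --- and what the paper isolates as its auxiliary lemma --- is a time-bounded counterweight to this extension cost, a resource-bounded symmetry-of-information inequality: for a computable map $E$ sending each string to a finite set of strings, for every computable time bound $t$ there is a computable time bound $s$ such that $K^s(\sigma)+K^s(\tau\mid\sigma)\leq^+ K^t(\sigma,\tau)$ for all $\tau\in E(\sigma)$. The paper proves this measure-theoretically, factoring $\m^{t'}(\sigma,\tau)=p(\tau\mid\sigma)\cdot q(\sigma)$ with $p(\tau\mid\sigma)=\m^{t'}(\sigma,\tau)/\sum_{\rho\in E(\sigma)}\m^{t'}(\sigma,\rho)$ and $q(\sigma)=\sum_{\rho\in E(\sigma)}\m^{t'}(\sigma,\rho)$, both computable precisely because $E(\sigma)$ is a uniformly computable finite set, and then converting back to time-bounded complexity via Theorem~\ref{thm-comp-semi}. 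Applied with $\sigma=X\uh h(n)$, $\tau=X\uh [h(n),k)$, and $E$ returning all strings of length $<h(n+1)-h(n)$ on inputs of length $h(n)$, it yields $K^t(X\uh k)\geq^+ K^s(X\uh h(n))+K^s(X\uh [h(n),k)\mid X\uh h(n))$; since $K(\cdot\mid\cdot)\leq K^s(\cdot\mid\cdot)$ pointwise, the conditional terms cancel against the subadditivity upper bound, giving $K^t(X\uh k)-K(X\uh k)\geq^+ n\geq^+ h^{-1}(k)$ for all $k$. In short: the time-bounded complexity of the longer prefix gains exactly the conditional cost of the extension, offsetting the loss on the plain-complexity side. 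Your overall architecture is right, but without this time-bounded symmetry-of-information step the (ii)$\Rightarrow$(i) direction does not go through.
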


The proof of this lemma is technical; for the sake of readability, we defer it to the appendix. \\

One of the key properties of strong depth is the slow growth law, given in terms of truth-table reductions.  Recall 
that a $\tt$-functional is a Turing functional that is total on all oracles; equivalently, there is a computable function $f$ such that for all $X\in\cs$, $|\Phi^{X\uh f(n)}|\geq n$.

\begin{thm}[Slow Growth Law \cite{Bennett1995}]
For $X,Y\in\cs$, if $X$ is strongly deep and $X\leq_\tt Y$, then $Y$ is strongly deep.
\end{thm}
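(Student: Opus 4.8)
The plan is to prove the contrapositive: assuming $Y$ is not strongly deep, I will exhibit a single computable time bound witnessing that $X$ is not strongly deep. Fix the $\tt$-functional $\Phi$ with $X=\Phi^Y$ together with its computable use bound $f$, so that $|\Phi^\sigma|\geq k$ whenever $|\sigma|\geq f(k)$; we may take $f$ strictly increasing. The hypothesis supplies a computable time bound $t$ and a constant $c$ with $K^t(Y\uh n)\leq K(Y\uh n)+c$ for infinitely many $n$, which by the resource-bounded coding theorem (combining Lemma~\ref{lem-comp-semi}(iii) with Theorem~\ref{thm-comp-semi}) is the same as $\mathbf{m}^t(Y\uh n)\geq^\times\mathbf{m}(Y\uh n)$ for infinitely many $n$; it is cleanest to work throughout in this semimeasure formulation of strong depth. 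The easy half of the argument is then to decompress $Y\uh n$ quickly and apply $\Phi$: for $m=|\Phi^{Y\uh n}|$ this produces a fast program for $X\uh m=\Phi^{Y\uh n}$ of length at most $K^t(Y\uh n)+O(1)$, so $K^{t'}(X\uh m)\leq^+K^t(Y\uh n)\leq^+K(Y\uh n)+c$, where $t'$ is computable from $t$ and $\Phi$.

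The natural tool to organize both this bound and the reverse one is the pushforward of a semimeasure along $\Phi$: for a string $\rho$ put $\Phi_*(g)(\rho)=\sum_{\Phi^\sigma=\rho}g(\sigma)$. The key structural point, and the place where totality of a $\tt$-functional is essential, is that the fiber $\{\sigma:\Phi^\sigma=\rho\}$ is finite, since any such $\sigma$ has length below $f(|\rho|+1)$, and is uniformly computable from $\rho$. Hence $\Phi_*(\mathbf{m}^t)$ is a genuinely computable discrete semimeasure while $\Phi_*(\mathbf{m})$ is lower semicomputable, and each has total mass at most that of $\mathbf{m}^t$ and $\mathbf{m}$ respectively, because every $\sigma$ lies in exactly one fiber. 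By Lemma~\ref{lem-comp-semi}(i) there is a computable time bound $t'$ with $\Phi_*(\mathbf{m}^t)\leq^\times\mathbf{m}^{t'}$, and by universality $\Phi_*(\mathbf{m})\leq^\times\mathbf{m}$. Evaluating at $\rho=X\uh m$ and using that $Y\uh n$ lies in the fiber over $X\uh m$, these recover the program above in the form $\mathbf{m}^t(Y\uh n)\leq\Phi_*(\mathbf{m}^t)(X\uh m)\leq^\times\mathbf{m}^{t'}(X\uh m)$, and symmetrically yield $\mathbf{m}(Y\uh n)\leq^\times\mathbf{m}(X\uh m)$.

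The step I expect to be the main obstacle is the comparison of $K(X\uh m)$ with $K(Y\uh n)$ in the opposite direction. All of the pushforward inequalities point the same way, since summing over a fiber can only increase mass, so they deliver $K(X\uh m)\leq^+K(Y\uh n)$ but never the reverse; and $\Phi$ may discard unboundedly much information, so that $X\uh m$ is far simpler than $Y\uh n$ and the fast program I produced, of length about $K(Y\uh n)$, does not certify $K^{t'}(X\uh m)\approx K(X\uh m)$. In particular the naive pointwise transfer at the image length $m=|\Phi^{Y\uh n}|$ need not succeed even when $X$ is in fact shallow there, because the intrinsic quick compressibility of $X\uh m$ is invisible to descriptions routed through $Y$. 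The crux is therefore to recover $X$'s own shallowness: rather than fixing $m$ pointwise, I would argue globally, carrying the entire fiber in the estimate and showing that the time-bounded pushforward mass $\Phi_*(\mathbf{m}^t)(X\uh m)$ is comparable to $\mathbf{m}(X\uh m)$ for infinitely many $m$, so that the information $\Phi$ discards is itself quickly decodable and cancels in the complexity deficiency. Getting this cancellation, or the correct selection of lengths, exactly right—rather than the mechanical pushforward bounds—is where the genuine content of the slow growth law lies, mirroring the way randomness preservation controls the information a $\tt$-reduction discards.
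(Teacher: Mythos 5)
Your setup is essentially the paper's own framework: extend the $\tt$-functional to a total computable map $F$ on strings with finite, uniformly computable fibers, and push semimeasures forward along it; moreover your contrapositive formulation is logically equivalent to the paper's direct argument. But the proof is not complete, and you say so yourself. All of your inequalities, namely $\m^t(Y\uh n)\leq \Phi_*(\m^t)(X\uh m)\leq^\times \m^{t'}(X\uh m)$ and $\m(Y\uh n)\leq^\times \m(X\uh m)$, run in the same direction, and combining them with the shallowness hypothesis $\m^t(Y\uh n)\geq^\times \m(Y\uh n)$ (infinitely often) only yields $\m^{t'}(X\uh m)\geq^\times \m(Y\uh n)$, which says nothing against $\m(X\uh m)$ when $\Phi$ discards information. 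So the crux you flag is a genuine gap, and the repair you sketch --- showing $\Phi_*(\m^t)(X\uh m)$ comparable to $\m(X\uh m)$ for infinitely many $m$ --- is also not the right target as stated, since nothing forces the fast pushforward mass to capture the intrinsic compressibility of $X\uh m$.

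The missing idea, which is the entire content of the paper's proof, is a single auxiliary semimeasure that converts the one-directional pushforward bounds into the reverse comparison, and you never need to compare $\m(X\uh m)$ with $\m(Y\uh n)$ at all. Let $p$ be any computable discrete semimeasure witnessing $Y$'s shallowness ($p=\m^t$ will do) and let $q=\Phi_* p$ be its pushforward, computable precisely because of the fiber properties you established. Define
\[
m(\sigma) \;=\; \frac{\m(F(\sigma))}{q(F(\sigma))}\cdot p(\sigma).
\]
This is lower semicomputable, and summing over fibers telescopes:
\[
\sum_\sigma m(\sigma) \;=\; \sum_\tau \frac{\m(\tau)}{q(\tau)}\sum_{\sigma\in F^{-1}(\tau)}p(\sigma) \;=\; \sum_\tau \m(\tau)\;\leq\;1.
\]
Universality of $\m$ --- applied on the \emph{oracle} side, i.e.\ to the strings $\sigma$, not on the image side --- gives $m(\sigma)\leq^\times\m(\sigma)$, that is,
\[
\frac{\m(F(\sigma))}{q(F(\sigma))}\;\leq^\times\;\frac{\m(\sigma)}{p(\sigma)},
\]
which is exactly the reverse-direction control you declared unreachable. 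Evaluating at $\sigma=Y\uh n$ along the infinitely many $n$ where $\m(Y\uh n)\leq^\times p(Y\uh n)$ shows that $\m(X\uh m_n)/q(X\uh m_n)$ stays bounded with $m_n=|F(Y\uh n)|\to\infty$ (by totality of the use); since $q\leq^\times \m^{t''}$ for some computable time bound $t''$ by Lemma~\ref{lem-comp-semi}(i), $X$ is shallow, closing the contrapositive. Note this dissolves your worry about $X\uh m$ being far simpler than $Y\uh n$: shallowness of $X$ is certified against the pushforward $q$, so the information $\Phi$ discards is absorbed into $q$ rather than needing to ``cancel'' in a comparison between $\m(X\uh m)$ and $\m(Y\uh n)$.
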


\noindent The slow growth law also holds for order-depth.

Bennett proved that no computable sequence and no Martin-L\"of random sequence is strongly deep.  Hereafter, we will refer to sequences that are not strongly deep as being \emph{shallow}.  Bennett further showed that the halting set $\hs=\{e:\phi_e(e)\halts\}$ (where $(\phi_e)_{e\in\omega}$ is a standard enumeration of the partial computable functions) is strongly deep.

Bennett defined a weaker notion of depth:  a sequence $X\in\cs$ is \emph{weakly deep} if $X$ is not $\tt$-reducible to a Martin-L\"of random.  By the slow growth law and the fact that no Martin-L\"of random sequences are strongly deep, it follows that every strongly deep sequence is weakly deep; as shown by Bennett \cite{Bennett1995}, the converse does not hold. 
Note that it is a folklore result that a sequence is not truth-table reducible to a Martin-L\"of random sequence if and only if it is not random with respect to a computable measure, thereby providing an alternative characterization of weak depth.

\subsection{Deep $\Pi^0_1$ classes and negligiblity}\label{subsec-depth-neg}

As noted in the introduction, the authors in \cite{BienvenuP2016} introduced the notion of a deep $\Pi^0_1$ class as the abstraction of a phenomenon first isolated by Levin in \cite{Levin2013} Given a $\Pi^0_1$ class $\P$, recall that there is a canonical co-c.e.\ tree $T\subseteq\str$ such that $\P=[T]$, i.e., $\P$ is the collection of all infinite paths through $T$; more specifically, this tree $T$ is the set of all initial segments of members of $\P$.  For $n\in\omega$, let $T_n$ be the set of all strings in $T$ of length $n$.  We say that a $\Pi^0_1$ class $\P$ is \emph{deep} if there is some order $g$ such that $\M(T_n)\leq 2^{-g(n)}$.  Equivalently, $\P$ is deep is there is some order $h$ such that $\M(T_{h(n)})\leq 2^{-n}$.

An analogue of the slow growth law holds for deep $\Pi^0_1$ classes in a suitable degree structure, namely the strong degrees (also referred to as the Medvedev degrees).  Given $\Pi^0_1$ classes $\P$ and $\QQ$, we say that $\P$ is strongly reducible to $\QQ$, written $\P\leq_s\QQ$, if there is some Turing functional $\Phi$ such that for every $Y\in \QQ$, there is some $X\in\P$ such that $X=\Phi(Y)$; equivalently, we have $\Phi(\QQ)=\P$.  As noted in \cite{BienvenuP2016}, we can assume here that $\Phi$ is a $\tt$-functional, a fact that will be useful in this study.  Then we have:

\begin{thm}[Slow Growth Law for $\Pi^0_1$ classes, \cite{BienvenuP2016}]
For $\Pi^0_1$ classes $\P,\QQ\subseteq\cs$, if $\P$ is deep and $\P\leq_s \QQ$, then $\QQ$ is deep.
\end{thm}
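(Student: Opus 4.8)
The plan is to transfer depth across the strong reduction by pulling back the tree of $\QQ$ through the witnessing truth-table functional $\Phi$. Suppose $\P\leq_s\QQ$ via a $\tt$-functional $\Phi$ with $\Phi(\QQ)=\P$, and let $S=[U]$ for the canonical co-c.e.\ tree $U$ with $\QQ=[U]$. Since $\P$ is deep, fix an order $g$ with $\M(T_n)\leq 2^{-g(n)}$, where $T$ is the canonical co-c.e.\ tree for $\P$. My goal is to produce an order $\tilde g$ witnessing $\M(U_m)\leq 2^{-\tilde g(m)}$ for the tree of $\QQ$.

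First I would record the key geometric fact about $\tt$-functionals: there is a computable function $f$ such that $|\Phi^{\rho}|\geq n$ whenever $|\rho|\geq f(n)$, for every string $\rho$ (since $\Phi$ is total on all oracles, this follows by compactness and is computable by searching). Next, the central step is to build, from the universal continuous semimeasure $\M$ and the functional $\Phi$, a new lower semicomputable continuous semimeasure $P$ that pushes $\M$-mass forward along $\Phi$: roughly, for a string $\sigma$ I would set $P(\sigma)=\M\bigl(\{\rho:\sigma\preceq\Phi^\rho\}_{\min}\bigr)$, the $\M$-measure of the minimal strings $\rho$ whose $\Phi$-image extends $\sigma$. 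One checks $P$ is a continuous semimeasure (the branching inequality $P(\sigma)\geq P(\sigma0)+P(\sigma1)$ holds because the minimal $\rho$ mapping above $\sigma0$ and above $\sigma1$ form disjoint antichains whose union lies below the antichain for $\sigma$) and that it is lower semicomputable (enumerate pairs from $S_\Phi$). By universality of $\M$ there is a constant $c$ with $P\leq^\times c\cdot\M$.

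The heart of the argument is then a covering estimate. Fix a level $m$ of the tree $U$ for $\QQ$. Because $\Phi$ maps every $Y\in\QQ$ into $\P$, every string in $U_m$ that is extendible in $\QQ$ has its $\Phi$-image inside $T$; choosing $n$ so that $f(n)\leq m$ guarantees each such image has length $\geq n$, hence lies in $\bigcup_{k\geq n}T_k$ and in particular its length-$n$ prefix lies in $T_n$. Summing $P$ over the strings in $U_m$ and comparing with $\M(T_n)$ is the crux: the minimal preimages contributing to $P(\tau)$ for distinct $\tau\in U_m$ are essentially disjoint, so $\M(U_m)\leq^\times P(\text{preimages})\lesssim \M(T_n)\leq 2^{-g(n)}$. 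Taking $n$ as large as $f$ permits for a given $m$, namely $n=\sup\{k:f(k)\leq m\}$, which tends to infinity with $m$, yields $\M(U_m)\leq^\times 2^{-g(n)}$, and absorbing the multiplicative constant into the order produces the required order $\tilde g$ for $\QQ$.

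The main obstacle I expect is the disjointness bookkeeping in the covering estimate: making precise that the $\M$-mass counted by $P$ over the antichain $U_m$ (restricted to $\QQ$-extendible strings) does not overcount, and that it is dominated by $\M$ of a single level $T_n$ rather than of the whole tree $T$. The cleanest route is probably to avoid defining $P$ on all of $\str$ and instead argue directly: the sets $\{\rho:\Phi^\rho\succeq\tau\}$ for $\tau$ ranging over $U_m$ are prefix-incomparable in a way that lets their $\M$-mass be bounded by $\M$ applied to the corresponding length-$n$ images, which all sit inside $T_n$. I would also need to confirm that $\tilde g$ can be taken nondecreasing and unbounded, which follows since $f$ is computable and unbounded, so its inverse-style counting function grows without bound. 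Once the semimeasure-pushforward and the single-level domination are in place, the conclusion that $\QQ$ is deep is immediate from the definition.
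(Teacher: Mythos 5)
The paper does not actually reprove this theorem---it is imported from \cite{BienvenuP2016}---and your argument is essentially the original proof there: push $\M$ forward along the $\tt$-functional to obtain a lower semicomputable continuous semimeasure $P$, invoke universality to get $P\leq^\times\M$, and use the computable use bound $f$ to align levels, giving $\M(U_{f(n)})\leq P(T_n)\leq^\times\M(T_n)\leq 2^{-g(n)}$, whence the order $\tilde g(m)=g(\max\{k: f(k)\leq m\})$ (minus a constant) witnesses depth of $\QQ$. The only repair needed is notational: in your covering estimate $\tau$ should range over $T_n$ (the image side), not over $U_m$, i.e., the correct chain is $\M(U_m)\leq\sum_{\tau\in T_n}P(\tau)\leq^\times\M(T_n)$ for $m\geq f(n)$, which holds because each $\rho\in U_m$ is extendible in $\QQ$ (so $\Phi^\rho\uh n\in T_n$) and extends a unique minimal element of the disjoint open sets $\{\rho':\Phi^{\rho'}\succeq\tau\}$, the branching inequality for $\M$ then transferring the mass as you describe.
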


Depth for $\Pi^0_1$ classes implies a property that holds more broadly for subsets of $\cs$, namely the property of being negligible. 
First, observe that a lower semicomputable semimeasure $P$ can be trimmed back to a measure $\overline P\leq P$ (see \cite{BienvenuHPS2014} details).  In particular, we can trim back the universal lower semicomputable semimeasure
 $\mathbf{M}$ to get a measure $\mathbf{\overline{M}}$.  One key result concerning $\mathbf{\overline{M}}$ is that for a measurable set $\A\subseteq \cs$, $\mathbf{\overline{M}}(\A)=0$ if and only if $\lambda(\{X: (\exists Y\in\A)\; Y\leq_T X\})=0$; that is, from the point of view of Lebesgue measure, only relatively few sequences can compute of member of~$\A$. Following Levin (see for example~\cite{Levin1984}), we call such sets $\A$ \emph{negligible}.  As we can equivalently consider the collection of random sequences that compute a member of $\A$, we can recast negligibility in terms of probabilistic computation:  a collection~$\A$ is negligible if the probability of probabilistically computing a member of $\A$ is zero.  Note that every deep $\Pi^0_1$ class is thus negligible; in fact, we can interpret the property of depth for a $\Pi^0_1$ class $\P$ as the property that the probability of computing the first~$n$ bits of a member of $\P$ converges to 0 effectively in $n\in\omega$.  As shown in \cite{BienvenuP2016}, not every negligible $\Pi^0_1$ class is deep.
 
Two other notions related to depth and negligibility for $\Pi^0_1$ classes studied in \cite{BienvenuP2016} are the notions of $\tt$-depth and $\tt$-negligibility:
\begin{itemize}
\item A $\Pi^0_1$ class $\P$ with canonical co-c.e.\ tree $T$ is \emph{$\tt$-deep} if for every computable measure $\mu$ there is some computable order $g$ such that $\mu(T_n)\leq 2^{-g(n)}$
for all $n\in\omega$, or equivalently, for every computable measure $\mu$, there is some computable order $h$ such that $\mu(T_{h(n)})\leq 2^{-n}$.

\item A measurable set $\C\subseteq\cs$ is \emph{$\tt$-negligible} if $\mu(\C)=0$ for every computable measure $\mu$, or equivalently, $\lambda(\{X\in\C\colon\exists Y\in\cs \;X\leq_\tt Y\})=0$.
\end{itemize}
Unlike the notions of depth and negligibility, we have the following equivalence:
\begin{thm}[\cite{BienvenuP2016}]\label{thm-tt-depth}
Let $\P\subseteq\cs$ be a $\Pi^0_1$ class.  The following are equivalent:
\begin{itemize}
\item[(i)] $\P$ is $\tt$-deep.
\item[(ii)] $\P$ is $\tt$-negligible. 
\item[(iii)] For every computable measure~$\mu$ on $\cs$, $\P$ contains no $\mu$-Martin-L\"of random element. 
\end{itemize}
\end{thm}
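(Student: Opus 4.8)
The plan is to establish the cycle $(i)\Rightarrow(ii)\Rightarrow(iii)\Rightarrow(i)$, working with one computable measure $\mu$ at a time so that everything reduces to a statement about the canonical co-c.e.\ tree $T$ with $\P=[T]$. Write $T^s$ for the decreasing computable approximation with $\bigcap_s T^s=T$, and recall that $\mu(T_n)$ abbreviates $\sum_{\sigma\in T_n}\mu(\llb\sigma\rrb)=\mu\bigl(\bigcup_{\sigma\in T_n}\llb\sigma\rrb\bigr)$, a non-increasing sequence whose limit is $\mu(\P)$. The single observation that drives the argument is that $n\mapsto\mu(T_n)$ is upper semicomputable: since each $T^s_n$ is finite and computable, $\mu(T^s_n)$ is computable and non-increasing in $s$ with $\lim_s\mu(T^s_n)=\mu(T_n)$.

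For $(i)\Rightarrow(ii)$ I would argue directly: if $\mu(T_n)\leq 2^{-g(n)}$ for a computable order $g$, then since $g$ is unbounded $\mu(T_n)\to 0$, so $\mu(\P)=\lim_n\mu(T_n)=0$; as this holds for every computable $\mu$, the class $\P$ is $\tt$-negligible. For $(ii)\Rightarrow(iii)$ I would turn the convergence $\mu(T_n)\to 0$ into an explicit $\mu$-Martin-L\"of test capturing all of $\P$. Fixing $\mu$ with $\mu(\P)=0$ and given $i$, search over pairs $(n,s)$ for one with $\mu(T^s_n)<2^{-i}$; such a pair exists because $\mu(T_n)\to 0$ and $\mu(T^s_n)$ decreases to $\mu(T_n)$, and it is found effectively since $\mu(T^s_n)$ is computable. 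Setting $U_i=\bigcup_{\sigma\in T^s_n}\llb\sigma\rrb$ yields a uniformly $\Sigma^0_1$ (indeed clopen) sequence with $\mu(U_i)<2^{-i}$ and $\P\subseteq\bigcup_{\sigma\in T_n}\llb\sigma\rrb\subseteq U_i$. Hence $\P\subseteq\bigcap_i U_i$, so no member of $\P$ is $\mu$-Martin-L\"of random, and since $\mu$ was arbitrary, $(iii)$ follows.

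The implication $(iii)\Rightarrow(i)$ is where the real work lies. Fixing $\mu$, the hypothesis that $\P$ contains no $\mu$-Martin-L\"of random element gives $\mu(\P)=0$ at once, since the $\mu$-random reals have $\mu$-measure $1$ and $\P$ is disjoint from them. The task is then to upgrade $\mu(T_n)\to 0$ to a \emph{computable} rate, i.e.\ to produce a computable order $h$ with $\mu(T_{h(n)})\leq 2^{-n}$. I would define $h(i)$ by searching over $(n,s)$ for $\mu(T^s_n)<2^{-i}$, taking the witnessing $n$ and forcing strict monotonicity via $h(i)=\max(h(i-1)+1,\,n)$, which is harmless because $\mu(T_n)$ is non-increasing; the search halts for the same reason as before, $h$ is a computable order, and $\mu(T_{h(i)})<2^{-i}$. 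This is precisely the second (equivalent) formulation of $\tt$-depth, so $(i)$ holds.

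The one genuine obstacle is the passage from a merely convergent sequence of level-measures to a computable rate of convergence. This is exactly where the $\Pi^0_1$ hypothesis is indispensable: for an arbitrary $\mu$-null set no such effective rate need exist, but the upper semicomputability of $n\mapsto\mu(T_n)$ — a feature of co-c.e.\ trees — lets an unbounded search certify each bound $2^{-i}$ and thereby manufacture the order $h$. Everything else is a routine combination of this search with the standard facts that computable measures admit universal Martin-L\"of tests and that the $\mu$-random reals carry full $\mu$-measure.
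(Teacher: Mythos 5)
Your proof is correct, and since the paper states this theorem as imported background from \cite{BienvenuP2016} without reproducing a proof, the relevant comparison is with the original argument there, which yours essentially matches. In both, the driving observation is exactly the one you isolate: for the canonical co-c.e.\ tree $T$ the level measures $n\mapsto\mu(T_n)$ are upper semicomputable, which simultaneously yields the clopen $\mu$-Martin-L\"of test covering $\P$ for (ii)$\Rightarrow$(iii) and the effective search manufacturing the computable order $h$ for (iii)$\Rightarrow$(i).
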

Note that by the alternative characterization of weak depth discussed in the previous subsection, Theorem \ref{thm-tt-depth}, a $\Pi^0_1$ class is $\tt$-deep if and only if all of its members are weakly deep.

\section{On the Slow Growth Law}\label{sec-slow-growth}

In this section, we provide a proof of the slow growth law and provide some hitherto unobserved consequences of the result.  In particular, the proof of the slow growth law that we offer here is distinct from others in the literature in two respects.  First, unlike other proofs in the literature, such as the one found in \cite{JuedesLL1994}, which are more complexity-theoretic (using the machinery of Kolmogorov complexity), our proof is measure-theoretic, being based on computable semimeasures.  Second, the proof offered here is much more direct than currently available proofs of the slow growth law.

We begin with a few words to simplify the setting. First, as noted in the previous section $X\in\cs$ is strongly deep when 
\[
K^t(X \uh n) - K(X \uh n) \rightarrow \infty
\]
for every computable time bound~$t$, but a more pleasant way to rephrase this, by elevating to the power of~$2$ on both sides, is to require that
\[
\frac{\m(X \uh n)}{p(X \uh n)} \rightarrow \infty
\]
for every computable discrete semi-measure~$p$ (that the two phrasings are equivalent follows directly from Lemma~\ref{lem-comp-semi} and Theorem~\ref{thm-comp-semi}). 

Second, if $\Gamma$ is a tt-functional with use $\gamma$, we can naturally extend $\Gamma$ to strings by defining $\Gamma(\sigma)$ to be the first~$n$ bits of $\Gamma^\sigma$, where~$n$ is such that $|\sigma| \in [\gamma(n), \gamma(n+1))$. Seen as a function on strings, $\Gamma$ has two properties that we will need:
\begin{itemize}
\item[(i)] $\Gamma$ is total;
\item[(ii)] For every string~$\tau$, the preimage $\Gamma^{-1}(\tau)$ is a finite set that can be computed uniformly in~$\tau$. 
\end{itemize}

We are now ready to prove our main theorem.

\begin{thm}
Let $F$ be total computable function on strings such that for all~$\tau$, $F^{-1}(\tau)$ is a finite set which can be computed uniformly in~$\tau$. Let $p$ be a computable discrete semimeasure. Then, there exists a computable discrete semimeasure~$q$ such that 
\[
\frac{\m(F(\sigma))}{q(F(\sigma))} \leq^\times \frac{\m(\sigma)}{p(\sigma)}
\]
\end{thm}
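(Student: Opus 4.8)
The plan is to take for $q$ the \emph{pushforward} of $p$ along $F$, and then to verify the inequality not by computing with $\m$ directly but by transporting the problem back to the domain and invoking the universality of $\m$ there. Concretely, I would set
\[
q(\tau) := \sum_{\sigma \in F^{-1}(\tau)} p(\sigma).
\]
Since $F$ is total with finite, uniformly computable preimages and $p$ is computable, each $q(\tau)$ is a finite sum of computable reals obtained uniformly in $\tau$, so $q$ is computable; and because the sets $F^{-1}(\tau)$ partition $\str$ as $\tau$ ranges over all strings, $\sum_\tau q(\tau) = \sum_\sigma p(\sigma) \leq 1$, so $q$ is a discrete semimeasure. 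To keep all divisions harmless I would first reduce to the case where $p$ is strictly positive with a computable positive lower bound: replacing $p$ by $\tilde p = \tfrac12(p+u)$ with $u(\sigma) = 2^{-2|\sigma|-1}$ gives a computable semimeasure with $\tilde p \geq \tfrac12 p$, so that $\tfrac{\m(\sigma)}{\tilde p(\sigma)} \leq^\times \tfrac{\m(\sigma)}{p(\sigma)}$ and it suffices to prove the statement for $\tilde p$; meanwhile $q(F(\sigma)) \geq \tilde p(\sigma) \geq \tfrac12 u(\sigma) > 0$ for every $\sigma$, with a computable positive lower bound.

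The heart of the argument is the auxiliary function defined on the \emph{domain} side,
\[
\mu(\sigma) := \m\big(F(\sigma)\big)\cdot \frac{p(\sigma)}{q\big(F(\sigma)\big)} .
\]
I would check that $\mu$ is lower semicomputable, being the product of the lower semicomputable value $\m(F(\sigma))$ with the computable, strictly positive factor $p(\sigma)/q(F(\sigma))$, and, crucially, that $\mu$ is a semimeasure: grouping the sum by the value $\tau = F(\sigma)$ makes the inner sum $\sum_{\sigma\in F^{-1}(\tau)} p(\sigma)$ cancel exactly against $q(\tau)$, so that
\[
\sum_\sigma \mu(\sigma) = \sum_{\tau} \m(\tau)\,\frac{1}{q(\tau)} \sum_{\sigma \in F^{-1}(\tau)} p(\sigma) = \sum_{\tau : q(\tau) > 0} \m(\tau) \leq 1 .
\]
Applying the universality of $\m$ to the lower semicomputable semimeasure $\mu$ then yields a constant $c$ with $\mu \leq c\,\m$, that is, $\m(F(\sigma))\,p(\sigma) \leq c\,\m(\sigma)\,q(F(\sigma))$ for all $\sigma$, which rearranges precisely into the claimed bound $\frac{\m(F(\sigma))}{q(F(\sigma))} \leq^\times \frac{\m(\sigma)}{p(\sigma)}$.

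The step I expect to carry the whole proof is the exact cancellation in the displayed summation, since it is what pins the definition of $q$ down to the pushforward: any other normalization would break the telescoping and cost a factor $\m(F(\sigma))/\m(\sigma)$ that is genuinely unbounded. The conceptual obstacle to flag is that one must resist building $q$ so as to dominate $\m(\tau)\,p(\sigma)/\m(\sigma)$ pointwise, which is hopeless for a \emph{computable} $q$ because it would require upper bounds on the merely lower semicomputable $\m(\sigma)$; instead the $\m(\sigma)$ in the denominator is manufactured for free by pushing the whole estimate to the domain side and letting universality of $\m$ supply it. This keeps the construction of $q$ entirely within computable semimeasures and gives the promised direct, measure-theoretic proof without any appeal to the coding theorem.
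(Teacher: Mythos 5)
Your proposal is correct and follows essentially the same route as the paper's proof: take $q$ to be the pushforward of $p$ along $F$, form the auxiliary function $\m(F(\sigma))\,p(\sigma)/q(F(\sigma))$ on the domain side, verify it is a lower semicomputable discrete semimeasure via the same grouping-by-$\tau$ cancellation, and conclude by universality of $\m$. Your preliminary perturbation of $p$ to a strictly positive $\tilde p$ is a harmless extra precaution the paper skips (note $q(F(\sigma)) \geq p(\sigma)$ automatically, so the quotient is unproblematic wherever $p(\sigma)>0$), and does not change the argument.
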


It is now clear that this theorem implies the slow growth law with $F$ identified with the extension of $\Gamma$ to $\str$, setting $\sigma=X \uh n$, and letting~$n$ tend to $\infty$. Let us prove the theorem.\\

\begin{proof}
Define~$q$ simply as the push-forward measure of $p$ under~$F$:
\[
q(\tau) = \sum_{\sigma \in F^{-1}(\tau)} p(\sigma)
\]
Our assumption on $F$ and $p$ imply that $q$ is computable. It is a semimeasure as $\sum_\tau q(\tau) = \sum_\tau \sum_{\sigma \in F^{-1}(\tau)} p(\sigma) = \sum_\sigma p(\sigma) \leq 1$.\\

 Now, define for all~$\sigma$:
\[
m(\sigma) = \frac{\m(F(\sigma))}{q(F(\sigma))} \cdot p(\sigma) 
\]
We claim that $m$ is a lower semicomputable discrete semimeasure. That $m$ is lower semicomputable is clear, since $\m$ is lower semicomputable and $p, q,$ and  $F$ are computable. That $m$ is a discrete semimeasure can be established as follows:
\[
\sum_\sigma m(\sigma) = \sum_\tau \sum_{\sigma \in F^{-1}(\tau)} m(\sigma) = \sum_{\tau} \sum_{\sigma \in F^{-1}(\tau)} \frac{\m(F(\sigma))}{q(F(\sigma))} \cdot p(\sigma) \]
\[
 = \sum_\tau \frac{\m(\tau)}{q(\tau)} \cdot \left( \sum_{\sigma \in F^{-1}(\tau)} p(\sigma) \right) = \sum_\tau \frac{\m(\tau)}{q(\tau)} \cdot q(\tau) = \sum_\tau \m(\tau) \leq 1.
\]
By maximality of $\m$, we have $m(\sigma) \leq^\times \m(\sigma)$, which by definition of~$m$ gives the desired inequality. 
\end{proof}

We note in passing that our proof also implies the slow growth law for order-depth: indeed if in the above we have $\frac{\m(\Gamma^X \uh n)}{q(\Gamma^X \uh n)} \geq h(n)$ for some computable order~$h$, then $\frac{\m(X \uh \gamma(n))}{p(X \uh \gamma(n))} \geq h(n)$ where $\gamma$ is the use of $\Gamma$, which by Lemma~\ref{lem-alt-order-deep} shows that $X$ is order-deep.

We note here some previously unnoticed consequence of slow growth law.  First,  observe that the standard unsolvable problems from computability theory are strongly deep, including:
\begin{itemize}
\item Fin = $\{x: W_x\text{ is finite}\}$
\item Inf =$\{x: W_x\text{ is infinite}\}$
\item Tot = $\{x: \phi_x\text{ is total}\}$
\item Cof = $\{x: W_x\text{ is cofinite}\}$
\item Comp = $\{x: W_x\text{ is computable}\}$
\item Ext $\{x: \phi_x\text{ is extendible to a total computable function}\}$
\end{itemize}
Indeed, for each such class $C$, we have $\hs \leq_1C$, i.e., there is a computable function $f:\omega\rightarrow\omega$ such that $n\in \hs$ if and only if $f(n)\in C$ for all $n\in\omega$.  Since every 1-reduction defines a $\tt$-functional, the result follows from the slow growth law and the fact that $\hs$ is strongly deep.  

In fact, we can strengthen this observation for \emph{all} non-trivial index sets, thereby strengthening Rice's theorem in terms of strong depth:

\begin{thm}
If $C\subseteq\omega$ is a shallow index set, i.e., $C$ is not strongly deep, then either $C=\emptyset$ or $C=\omega$.
\end{thm}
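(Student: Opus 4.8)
The plan is to prove the contrapositive: assuming $C$ is a non-trivial index set (so $C\neq\emptyset$ and $C\neq\omega$), I will show $C$ is strongly deep by exhibiting a truth-table reduction $\hs\leq_{\tt}C$ and invoking the slow growth law together with the fact that $\hs$ is strongly deep. The only thing that makes this delicate compared to the explicit examples (Fin, Inf, Tot, etc.) listed above is that a completely arbitrary non-trivial index set need not have $\hs\leq_1 C$ via an obvious reduction; we must produce a truth-table (equivalently, total Turing) functional uniformly from the mere assumption of non-triviality plus the fact that $C$ is an index set.

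The key tool is the Rice--Shapiro-style observation that every non-trivial index set is \emph{productive} in the relevant sense. Concretely, first I would fix indices $a,b\in\omega$ with $a\in C$ and $b\notin C$. Since $C$ is an index set, membership of $x$ in $C$ depends only on the partial computable function $\phi_x$, i.e. on $W_x$ in the case of a $\Sigma^0_1$ index set (or more generally on $\phi_x$). The standard proof of Rice's theorem builds, by the recursion theorem, a computable function $g$ such that for each $n$, $\phi_{g(n)}$ either equals $\phi_a$ or equals $\phi_b$ according to whether $n\in\hs$; one arranges that $g(n)$ computes $\phi_a$ if $\phi_n(n)\halts$ and $\phi_b$ otherwise (or the reverse, depending on which of $a,b$ sits inside $C$). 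This yields $n\in\hs \iff g(n)\in C$ (up to complementing, which is harmless since strong depth is preserved under $1$-reduction of the complement as well), so that $g$ witnesses $\hs\leq_1 C$ or $\overline{\hs}\leq_1 C$.

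The main obstacle is that the naive Rice construction via the recursion theorem produces a reduction $g$ that is $1$-reducible in the sense of many-one theory, but one must check it genuinely yields a \emph{truth-table} functional, not merely a Turing reduction. This is where the hypothesis that $C$ is an index set is essential: a $1$-reduction $\hs\leq_1 C$ is automatically a truth-table reduction, since a single-query reduction (map $n\mapsto g(n)$, then ask the oracle whether $g(n)\in C$) is total on every oracle and hence a $\tt$-functional, as noted in the excerpt (``every $1$-reduction defines a $\tt$-functional''). Thus once the Rice construction delivers $g$ with $n\in\hs\iff g(n)\in C$, the slow-growth law applies verbatim. If the construction instead gives $\overline{\hs}\leq_1 C$, I would note that $\overline{\hs}=\{e:\phi_e(e)\diverges\}$ is itself strongly deep — it is $\tt$-equivalent to $\hs$ (each is the complement of the other, and complementation is a $\tt$-functional with constant use) — so strong depth of $C$ follows regardless of which of $a,b$ lies in $C$.

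Putting the pieces together: given non-trivial $C$, pick $a\in C$, $b\notin C$, run the recursion-theorem construction to get a computable $g$ realizing either $\hs\leq_1 C$ or $\overline{\hs}\leq_1 C$; in either case the relevant halting-type set is strongly deep and the reduction is a $\tt$-functional, so by the slow growth law $C$ is strongly deep. Contrapositively, a shallow index set must be trivial, i.e. $C=\emptyset$ or $C=\omega$. I expect the only genuine care needed is the bookkeeping in the recursion-theorem step to guarantee totality of the reduction (hence the $\tt$ property) and to handle the two cases of which index lands inside $C$; everything else is a direct appeal to results already established.
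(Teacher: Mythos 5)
Your proposal follows essentially the same route as the paper: Rice's theorem yields $\hs\leq_1 C$ or $\overline{\hs}\leq_1 C$, every $1$-reduction defines a $\tt$-functional, and the slow growth law together with the strong depth of $\hs$ (hence of $\overline{\hs}$, which is $\tt$-equivalent to $\hs$) finishes the argument. The paper treats this as immediate, citing the classical version of Rice's theorem rather than re-deriving it.

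However, one step in your re-derivation of Rice's theorem would fail as literally stated: no computable $g$ can satisfy ``$\phi_{g(n)}=\phi_a$ if $\phi_n(n)\halts$ and $\phi_{g(n)}=\phi_b$ otherwise'' for \emph{arbitrary} $a\in C$, $b\notin C$. If $\phi_a$ and $\phi_b$ were, say, distinct total functions differing at some input $x_0$, then evaluating $\phi_{g(n)}(x_0)$ --- which always converges --- would decide $\hs$, a contradiction. The classical construction must route through the nowhere-defined function: by the s-m-n theorem (the recursion theorem is not needed here) one builds $g$ with $\phi_{g(n)}=\phi_a$ when $n\in\hs$ and $\phi_{g(n)}$ nowhere defined when $n\notin\hs$; the case split on whether indices of the nowhere-defined function lie in $C$ or in $\overline{C}$ is precisely what produces the dichotomy $\hs\leq_1 C$ versus $\overline{\hs}\leq_1 C$. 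Since your argument already handles both horns of that dichotomy correctly --- including the observation that a $1$-reduction is automatically a $\tt$-reduction and that $\overline{\hs}$ is strongly deep --- the repair is entirely local, and with it your proof coincides with the paper's.
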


\noindent This follows immediately from the classical version of Rice's theorem, for if $C$ is a non-trivial index set, then either $\hs \leq_1 C$ or $\overline \hs \leq _1C$.  Since $\overline \hs$ is strongly deep by the slow growth law, the result is clear.

We will see additional applications of the slow growth law in the remaining sections.

\section{Members of deep $\Pi^0_1$ classes}\label{sec-members}

When deep $\Pi^0_1$ classes were defined in \cite{BienvenuP2016}, the authors referred to the notion as a type of depth in analogy with Bennett's original notion of logical depth (as, for instance, an analogue of the slow growth law for deep $\Pi^0_1$ classes was established in \cite{BienvenuP2016}).  We now show that the connection between these two depth notions is much closer than merely satisfying an analogy, as we prove that the members of deep $\Pi^0_1$ classes are strongly deep; in fact, we prove the stronger result that all such members are order-deep.   Note that this is analogous to the result that every member of a $tt$-deep $\Pi^0_1$ class is weakly deep, a consequence of Theorem \ref{thm-tt-depth} discussed at the end of Section \ref{subsec-depth-neg}.  

\begin{thm}\label{thm-deep-members}
Every member of a deep $\Pi^0_1$ class $\P$ is order-deep. 
\end{thm}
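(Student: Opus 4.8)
The plan is to verify that $X$ is order-deep directly through characterization (iii) of Lemma~\ref{lem-alt-order-deep}. So I fix once and for all an order $h$ coming from the depth of $\P$, and for each computable time bound $t$ I aim to produce a single lower semicomputable discrete semimeasure $d_t$ with $\sum_\sigma d_t(\sigma)\le 1$ and $d_t(X\uh h(n))\ge 2^n\,\m^t(X\uh h(n))$ for almost all $n$. Granting this, the universality of $\m$ supplies a constant $c_t$ with $\m\ge c_t\,d_t$, whence $\frac{\m(X\uh h(n))}{\m^t(X\uh h(n))}\ge^\times 2^n$ for almost all $n$; since $t$ was arbitrary and $h$ does not depend on $t$, condition (iii) of Lemma~\ref{lem-alt-order-deep} holds and $X$ is order-deep.

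The reason a semimeasure of bounded total can carry this much weight is that the time-bounded mass sitting on a single level of the tree is small. The continuous semimeasure $\tau\mapsto\sum_{\rho\succeq\tau}\m^t(\rho)$ is lower semicomputable, so by universality of $\M$ it is $\le^\times\M$, and in particular $\m^t(\sigma)\le^\times\M(\sigma)$; summing over the level gives $\m^t(T_{h(n)})=\sum_{\sigma\in T_{h(n)}}\m^t(\sigma)\le^\times\M(T_{h(n)})$. Using the depth of $\P$, I choose $h$ from the canonical depth order with room to spare, say so that $\M(T_{h(n)})\le 2^{-3n}$; then $\m^t(T_{h(n)})\le 2^{-2n}$ for almost all $n$. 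Thus if I could place weight $2^n\m^t(\sigma)$ on each $\sigma\in T_{h(n)}$, the total weight would be at most $\sum_n 2^n\,\m^t(T_{h(n)})\le\sum_n 2^{-n}<\infty$, and $X\uh h(n)\in T_{h(n)}$ would receive exactly the weight I need.

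The difficulty, and the crux of the argument, is that this placement cannot be carried out naively: the tree $T$ is only co-c.e., so ``$\sigma\in T$'' is a $\Pi^0_1$ condition and a semimeasure concentrated on $T$ fails to be lower semicomputable (equivalently, selecting the small-$\M$ nodes is not a lower semicomputable operation). The fix I propose is to commit mass against a \emph{detectable} deadline rather than against tree membership itself. Writing $T[s]$ for the stage-$s$ approximation of $T$, the quantity $\m^t(T[s]_{h(n)})$ is computable and non-increasing in $s$, with limit $\m^t(T_{h(n)})\le 2^{-2n}$; hence for almost all $n$ there is a least stage $s_n$ with $\m^t(T[s_n]_{h(n)})\le 2^{-2n}$, and $s_n$ is found by a halting search. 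At that stage I set $d_t(\sigma):=2^n\m^t(\sigma)$, permanently, for every $\sigma$ in the finite set $T[s_n]_{h(n)}$. This is a lower semicomputable definition, and although some $\sigma$ I pay out on are transient (they later leave $T$), the total committed at level $h(n)$ is $2^n\,\m^t(T[s_n]_{h(n)})\le 2^{-n}$, so $\sum_\sigma d_t(\sigma)\le\sum_n 2^{-n}\le 1$; meanwhile every permanent node of $T_{h(n)}$, in particular $X\uh h(n)$, which lies in $T[s]$ for all $s$, does receive weight $2^n\m^t(X\uh h(n))$.

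Combining this with the universality step of the first paragraph finishes the proof. I expect the only genuine obstacle to be the one isolated above: controlling a lower semicomputable semimeasure that is forced to pay out on an approximation-from-above to a co-c.e. set. The device that makes it go through is to threshold on the computable quantity $\m^t(T[s]_{h(n)})$, the time-bounded mass of the current level-approximation, instead of on the undetectable event $\sigma\in T$; the slack built into the choice $\M(T_{h(n)})\le 2^{-3n}$ is exactly what guarantees both that the deadline $s_n$ exists for almost every $n$ and that the over-payment on transient nodes stays summable.
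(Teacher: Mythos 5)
Your proposal is correct and takes essentially the same route as the paper's own proof: fix $h$ from the depth of $\P$, use $\m^t \leq^\times \M$ to bound the time-bounded mass of the levels $T_{h(n)}$, exploit co-enumerability of $T$ to compute stages $s_n$ at which $\m^t(T[s_n]_{h(n)})$ drops below the threshold, place weight $2^n\m^t(\sigma)$ on the finite sets $T[s_n]_{h(n)}$ to get a semimeasure of total mass at most $\sum_n 2^{-n}$, and conclude via universality of $\m$ and characterization (iii) of Lemma~\ref{lem-alt-order-deep}, with $h$ independent of $t$. The only (cosmetic) differences are your explicit $2^{-3n}$ slack to absorb the multiplicative constant in $\m^t \leq^\times \M$ and your observation that lower semicomputability of $d_t$ suffices, where the paper computes $s_n$ outright and obtains a computable semimeasure.
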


\begin{proof}
Let $T$ be the canonical co-c.e.\ tree associated to the $\Pi^0_1$ class $\P$. Let $h$ be a computable order such that 
\[
\sum_{\sigma \in T_{h(n)}} \M(\sigma) \leq 2^{-2n}.
\]
for all~$n$. 

 Let $t$ be a computable time bound. By virtue of the inequalities $\M \geq^\times \m \geq^\times \m^t$, the above inequality implies
\[
\sum_{\sigma \in T_{h(n)}} \m^t(\sigma) \leq 2^{-2n}.
\]
Since $\m^t$ is computable and $T$ is co-c.e., one can effectively compute a sequence $s_n$ such that 
\[
\sum_{\sigma \in T_{h(n)[s_n]}} \m^t(\sigma) \leq 2^{-2n}.
\]

Let now $p$ be the computable discrete semimeasure defined by $p(\tau) = 2^n \cdot \m^t(\tau)$ if~$\tau$ belongs to $T_{h(n)}[s_n]$ for some~$n>0$, and $p(\tau)=0$ otherwise. That $p$ is computable is clear from the definition (and the computability of the sequence $s_n$), and that it is a semimeasure follows from
\[
\sum_\tau p(\tau) = \sum_{n>0} \sum_{\tau \in T_{h(n)}[s_n]} 2^n\cdot \m^t(\tau) \leq \sum_{n>0} 2^n \cdot 2^{-2n} \leq 1.
\]

Now, if $X$ is a member of~$\P$, that is, $X$ is a path through~$T$,  then for each $n>0$, we have $X\uh h(n) \in T_{h(n)}$ and thus
\[
p(X\uh h(n)) = 2^n \cdot \m^t(X \uh h(n)).
\]
As $\m \geq^\times p$ (because $p$ is a discrete semi-measure), we have
\[
\m(X\uh h(n)) \geq^\times 2^n \cdot \m^t(X \uh h(n))
\] 
By Lemma~\ref{lem-alt-order-deep}, we can conclude that~$X$ is order-deep.
\end{proof}

One immediate consequence of Theorem \ref{thm-deep-members} is the following.

\begin{cor}\label{cor-deep-tree}
If $\P$ is a deep $\Pi^0_1$ class, then the canonical co-c.e.\ tree $T$ associated with $\P$ is order-deep.
\end{cor}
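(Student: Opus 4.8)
The plan is to reduce the corollary to Theorem~\ref{thm-deep-members} via a single truth-table reduction, using the slow growth law for order-depth. Throughout we identify the co-c.e.\ tree $T\subseteq\str$ with its characteristic sequence $\chi_T\in\cs$ along a fixed computable bijection $\str\cong\omega$ (say by length and then lexicographically), so that ``order-deep'' applies to $\chi_T$ and an initial segment of $\chi_T$ encodes a finite truncation $T^{\le\ell}$ of the tree. Since $\P$ is nonempty (a deep class is nonempty), $\varnothing\in T$ and the tree is infinite. The crucial structural fact I would exploit is that $T$ is the \emph{canonical} tree, i.e.\ $T$ is exactly the set of initial segments of members of $\P$; consequently every node of $T$ lies on an infinite path, so for $\sigma\in T$ the predicate ``$\sigma$ has an infinite extension in $T$'' coincides with ``$\sigma\in T$''.

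The key step is to show that the leftmost path $X_L$ of $T$ satisfies $X_L\leq_\tt\chi_T$. I define $X_L$ greedily: starting from $\varnothing$, given $X_L\uh\ell\in T$, set $X_L\uh(\ell{+}1)=X_L\uh\ell\,0$ if $X_L\uh\ell\,0\in T$, and $X_L\uh\ell\,1$ otherwise. Because $X_L\uh\ell$ lies on an infinite path, at least one of its children is in $T$, so the chosen extension is again in $T$; inductively $X_L\in[T]=\P$. The associated functional $\Phi$ operates on an arbitrary oracle $Z$ by, at each level, computing the index $i$ of the candidate left child under the bijection, querying $Z(i)$, and branching left on $Z(i)=1$ and right otherwise. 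This $\Phi^Z$ is defined for \emph{every} oracle $Z$ (it always produces an infinite sequence), and the queries needed to output the first $n$ bits sit at indices bounded by the position of a length-$n$ string, which is at most $2^{\,n+1}$; hence $\Phi$ has a computable use bound and is a genuine $\tt$-functional with $\Phi^{\chi_T}=X_L$.

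With the reduction in hand the conclusion is immediate: $X_L$ is a member of the deep class $\P$, so it is order-deep by Theorem~\ref{thm-deep-members}, and since $X_L\leq_\tt\chi_T$, the slow growth law for order-depth yields that $\chi_T$ is order-deep, which is the assertion of the corollary.

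I expect the only real obstacle to be verifying that $\Phi$ is a bona fide $\tt$-functional rather than merely a Turing reduction. The danger is that selecting an infinite path ordinarily requires checking extendibility, which is a $\Pi^0_1(\chi_T)$ condition and would force unbounded use; what rescues the argument is precisely the no-dead-ends property of the \emph{canonical} tree, which collapses extendibility to plain membership and thereby bounds the use computably and guarantees totality on all oracles. I would make sure to state this reliance on the canonical tree explicitly, and to note the harmless assumption that $\P$ is nonempty so that $X_L$ exists.
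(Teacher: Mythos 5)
Your proposal is correct and follows essentially the same route as the paper's proof: take the leftmost path $X_L\in\P$, which is order-deep by Theorem~\ref{thm-deep-members}, observe that $X_L\leq_{tt} T$, and conclude via the slow growth law for order-depth. The paper states the reduction $X_L\leq_{tt} T$ without verification; your explicit check---that the canonical tree has no dead ends, so the greedy left-child construction is total on all oracles with computable use---is precisely the justification the paper leaves implicit.
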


\begin{proof}
Let $X\in\P$ be the leftmost path through $\P$, which is order-deep by Theorem~\ref{thm-deep-members}. Then if $T$ is the canonical co-c.e.\ tree associated with $\P$, we have $X\leq_{tt} T$, so by the slow growth law, $T$ is order-deep.
\end{proof}

The converse of this result does not hold.  

\begin{thm}\label{thm-deep-tree-counterexample}
There is an order-deep co-c.e.\ tree $T$ such that $[T]$ is not a deep $\Pi^0_1$ class.
\end{thm}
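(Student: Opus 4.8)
The goal is to construct a co-c.e.\ tree $T\subseteq\str$ whose paths form a $\Pi^0_1$ class $[T]$ that fails to be deep, yet such that $T$ itself, viewed as a sequence (its characteristic function under a standard coding $\str\cong\omega$), is order-deep. The tension we must exploit is the gap between $\overline{\M}$-measure of the levels $T_n$ (which controls depth of the class) and the complexity of the tree as an oracle.

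\textbf{Proof proposal.} The plan is to build $T$ so that $[T]$ is not deep by ensuring that $\M(T_n)$ does \emph{not} decay like $2^{-g(n)}$ for any order $g$ — concretely, by keeping the class $[T]$ large enough in the sense that its $\M$-measure stays bounded below, or at least that no computable order dominates the decay. The cleanest way is to arrange that $[T]$ contains a ``cheap'' path, e.g., a single computable path or a class of positive measure, which forces $\M(T_n)\geq^\times 2^{-O(1)}$ or at least fails the deep decay condition; by the definition of depth given in Section~\ref{subsec-depth-neg}, $[T]$ is then not deep. Separately, we must inject order-depth into the \emph{tree} $T$ as an object. Since by Corollary~\ref{cor-deep-tree} any deep class already yields an order-deep tree, we instead graft the complexity onto $T$ in a way that does not increase the $\M$-measure of the levels of the path-class. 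The idea is to code an order-deep sequence $Z$ — for instance the halting set $\hs$, or better a sequence arising as the leftmost path of a known deep $\Pi^0_1$ class — into the \emph{shape} of $T$ (which branching nodes die at which stages) rather than into the surviving paths.

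\textbf{Key steps, in order.} First, I would fix a genuinely deep $\Pi^0_1$ class $\QQ$ with canonical co-c.e.\ tree $S$, so that $S$ is order-deep by Corollary~\ref{cor-deep-tree}; this gives us a reservoir of tree-level complexity to transplant. Second, I would define $T$ as a co-c.e.\ tree that embeds all the ``death'' information of $S$ — i.e., the co-c.e.\ enumeration telling us which strings leave $S$ — into its own structure, but along branches that are eventually pruned so that the surviving infinite paths $[T]$ form a trivial (e.g.\ single computable path, or a full subtree of positive measure) class. Third, I would verify the two required properties: (a) that $[T]$ is not deep, by checking directly that $\M(T_n)$ does not admit the order-bounded decay — this follows if $[T]$ is nonempty and ``fat,'' since a positive-measure or computable-path class cannot be negligible, and non-negligible implies not deep (as every deep class is negligible, per Section~\ref{subsec-depth-neg}); and (b) that $T$ as a sequence is order-deep, by exhibiting a truth-table reduction from the order-deep object ($S$ or $Z$) to $T$ and invoking the slow growth law for order-depth noted after Theorem~\ref{thm-deep-members}.

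\textbf{Main obstacle.} The crux is reconciling (a) and (b): the construction must simultaneously make the \emph{path class} shallow while making the \emph{tree encoding} order-deep, and these pull in opposite directions, since naively a complex tree tends to have a complex path-class. The delicate part is arranging the coding so that the computationally deep information lives entirely in the finite, eventually-pruned portions of $T$ (the stages at which nodes are removed), from which one can still truth-table recover the order-deep sequence $S$, while the infinite paths are forced to be few and simple enough that $[T]$ is non-negligible and hence not deep. Making the reduction $S\leq_{tt}T$ total and use-bounded — so that the slow growth law for order-depth applies — while keeping $[T]$ fat, is where the real care is required; I would expect to spend most of the effort verifying that the pruning schedule can be read off from $T$ uniformly and that it genuinely reconstructs $S\uh n$ with computably bounded use.
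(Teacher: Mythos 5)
Your ingredients are exactly the right ones---an order-deep co-c.e.\ tree $S$ supplied by Corollary~\ref{cor-deep-tree}, a truth-table reduction $S\leq_{tt}T$ combined with the slow growth law for order-depth to get depth of $T$, and a shallow path in $[T]$ (computable, or non-negligibility more generally) to defeat depth of the class---but the proof itself is not there: the construction of $T$, which you defer as the ``main obstacle,'' is never actually specified, and it is the whole content of the theorem. More importantly, the tension you spend your last paragraph worrying about is illusory. You assume the deep information must be confined to ``finite, eventually-pruned portions'' of $T$ and that $[T]$ must be made trivial. Neither is needed: to show $[T]$ is not deep it suffices that $[T]$ contain \emph{one} shallow member, since by Theorem~\ref{thm-deep-members} every member of a deep $\Pi^0_1$ class is order-deep; it is perfectly harmless for $[T]$ to also contain the entire deep class $[S]$. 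The paper's construction is accordingly one line: $T=0^\frown S\cup 1^\frown\str$. Then $1^\omega\in[T]$ is computable, hence not order-deep, so $[T]$ is not deep; and $\sigma\in S$ if and only if $0^\frown\sigma\in T$, a truth-table reduction with computable use, so $S\leq_{tt}T$ and the slow growth law for order-depth makes $T$ order-deep. No pruning schedule, no care about which branches die, and no need to keep the complexity out of the infinite paths.

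For what it is worth, your dead-branch coding can in fact be carried out, and it is far less delicate than you anticipate: fix a computable bijection $c:\str\rightarrow\omega$ and let $T$ consist of all strings $0^n$ together with all strings $0^{c(\sigma)}1$ for $\sigma\in S$. This set is downward closed, it is co-c.e.\ (remove the node $0^{c(\sigma)}1$ exactly when $\sigma$ leaves $S$), every node $0^{c(\sigma)}1$ is a dead end so $[T]=\{0^\omega\}$ (a single computable path, hence $[T]$ is not deep---indeed not even negligible), and $\sigma\in S$ if and only if $0^{c(\sigma)}1\in T$ gives the truth-table reduction with computable use, so the slow growth law again yields that $T$ is order-deep. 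So your route works once written down---but the observation that you never needed to kill $[S]$ in the first place is what collapses the argument to the paper's one-liner, and flagging the construction as an open obstacle rather than exhibiting one is the genuine gap in the proposal.
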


\begin{proof}
Let $S\subseteq\str$ be any order-deep co-c.e.\ tree and let $T=0^\frown S\cup1^\frown\str$.
Clearly $T$ is a co-c.e.\ tree because $S$ is.  To see that $T$ is order-deep, note that $\sigma\in S$ if and only if either (i) $\sigma$ is the empty string or (ii) $0 \preceq \sigma$ and $\sigma \in T$. It thus follows that $S\leq_{tt} T$, and so by the slow growth law, $T$ is order-deep.  Finally, as the sequence $1^\omega$ is a member of the $\Pi^0_1$ class $[T]$ and every member of a deep $\Pi^0_1$ class must be order-deep by Theorem \ref{thm-deep-members}, it follows that $[T]$ cannot be deep.

\end{proof}

A similar pair of results hold for $\tt$-deep $\Pi^0_1$ classes. 

\begin{thm}
If $\P$ is a $\tt$-deep $\Pi^0_1$ class, then the canonical co-c.e.\ tree $T$ associated with $\P$ is weakly deep.
\end{thm}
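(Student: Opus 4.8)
The plan is to mirror the argument for Corollary~\ref{cor-deep-tree}, replacing order-depth with weak depth throughout. First I would let $X$ be the leftmost path through $\P$, which is a member of $\P$. Since $\P$ is $\tt$-deep, the remark following Theorem~\ref{thm-tt-depth} tells us that every member of $\P$ is weakly deep; in particular $X$ is weakly deep. This is the only place the hypothesis of $\tt$-depth is used.

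Next I would establish that $X \leq_{tt} T$, exactly as in the proof of Corollary~\ref{cor-deep-tree}. The key observation is that the canonical tree $T$ is the set of all initial segments of members of $\P$, so every node of $T$ is extendible, i.e.\ $T$ is pruned. Because of this, the length-$n$ prefix of the leftmost path is precisely the lexicographically least string of length $n$ lying in $T$, which can be located by querying the oracle $T$ on all $2^n$ strings of that length. This is a fixed, finite block of queries that terminates for every oracle, so the functional computing $X$ from $T$ is a genuine $\tt$-functional.

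Finally I would invoke the fact that weak depth is closed upward under $\leq_{tt}$, a slow growth law that is immediate from the observation that the composition of two $\tt$-functionals is again a $\tt$-functional. Indeed, if $T$ were not weakly deep, say $T \leq_{tt} R$ with $R$ Martin-L\"of random, then $X \leq_{tt} T \leq_{tt} R$ would give $X \leq_{tt} R$, contradicting the weak depth of $X$. Hence $T$ is weakly deep.

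The main obstacle is the verification that $X \leq_{tt} T$ rather than merely $X \leq_T T$, since a general leftmost-path extraction asks $\Pi^0_1$ extendibility questions about $T$ and is not truth-table computable. Everything hinges on the canonical tree being pruned, so that leftmost-path extraction at level $n$ needs only the predetermined block of queries to the $2^n$ strings of length $n$. Once that is in place, the slow growth step is the one-line composition argument above, exactly paralleling the order-deep case.
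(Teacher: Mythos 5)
Your proof is correct, and its skeleton---take the leftmost path $X$ of $\P$, establish $X \leq_\tt T$, then propagate depth upward---is exactly the paper's. Where you differ is the final step. The paper argues by contrapositive: if $T$ is not weakly deep, then $T$ is Martin-L\"of random with respect to some computable measure (the folklore characterization of weak depth), so by \emph{randomness preservation} under $\tt$-reductions $X$ is random with respect to a computable measure, whence $\P$ is not $\tt$-deep by Theorem~\ref{thm-tt-depth}(iii). You argue forward instead: $\tt$-depth of $\P$ gives weak depth of $X$ via the remark following Theorem~\ref{thm-tt-depth}, and then you need only the literal definition of weak depth plus transitivity of $\leq_\tt$ (if $T \leq_\tt R$ with $R$ Martin-L\"of random, then $X \leq_\tt R$, a contradiction). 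These are dual moves---as the introduction observes, randomness preservation is precisely the dual of the slow growth law---and yours is the more elementary at this step, invoking no preservation theorem, at the cost of routing the computable-measure machinery through the remark after Theorem~\ref{thm-tt-depth} rather than citing it explicitly. A genuine bonus of your write-up is that you actually verify $X \leq_\tt T$, which the paper (in Corollary~\ref{cor-deep-tree}) merely asserts: since the canonical tree consists of the initial segments of members of $\P$, it is pruned, so $X \uh n$ is the lexicographically least string of length $n$ in $T$, obtainable from a predetermined block of $2^n$ queries. One small point of hygiene: for the functional to be a genuine $\tt$-functional it must be total on \emph{all} oracles, so on an oracle containing no string of some length $n$ you should output a default bit; this is a trivial patch that does not affect the computation on the true oracle $T$ (and, like the paper's own proof, your argument implicitly assumes $\P \neq \emptyset$, else there is no leftmost path).
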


\begin{proof}
Given a $\Pi^0_1$ class $\P$ with canonical co-c.e.\ tree $T$, suppose that $T$ is not weakly deep.  Then $T$ is Martin-L\"of random with respect to some computable measure.  
As in the proof of Corollary \ref{cor-deep-tree}, if $X\in\P$ is the leftmost path through $\P$, then have $X\leq_{tt} T$.  By randomness preservation (according to which if $Y$ is Martin-L\"of random with respect to a computable measure and $X\leq_{\tt}Y$, then $X$ is Martin-L\"of random with respect to a computable measure; see, e.g., \cite{BienvenuP2012}), $X$ is Martin-L\"of random with respect to a computable measure, and hence $\P$ is not $\tt$-deep.
\end{proof}

\begin{thm}
There is a weakly deep co-c.e.\ tree $T$ such that $[T]$ is not a $\tt$-deep $\Pi^0_1$ class.
\end{thm}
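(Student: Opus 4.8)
The goal is to construct a weakly deep co-c.e.\ tree $T$ whose class of paths $[T]$ fails to be $\tt$-deep. By Theorem~\ref{thm-tt-depth}, $[T]$ fails to be $\tt$-deep precisely when it contains a member that is Martin-L\"of random with respect to some computable measure. So the plan is to build $T$ so that $[T]$ harbors a sequence random for a computable measure, while the tree $T$ itself---viewed as an element of $\cs$ via its characteristic function---is weakly deep, i.e.\ not $\tt$-reducible to any Martin-L\"of random sequence. The natural strategy mirrors the construction in Theorem~\ref{thm-deep-tree-counterexample}: graft a ``rich'' branch that supplies a random member onto a ``deep'' branch that forces weak depth of the whole tree.

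Concretely, I would take a weakly deep co-c.e.\ tree $S$ (for instance the canonical tree of a $\tt$-deep $\Pi^0_1$ class, whose tree is weakly deep by the preceding theorem) and set $T = 0^\frown S \cup 1^\frown \str$. As in the proof of Theorem~\ref{thm-deep-tree-counterexample}, the full tree $T$ below $1$ is available, so $1^\omega \in [T]$; but here the point of the $1$-side is different---we want it to contribute a member of $[T]$ that is Martin-L\"of random for a computable measure (the sequence $1^\omega$ already works, since it is random for the Dirac measure concentrated on it, which is computable), so that $[T]$ is not $\tt$-deep. Meanwhile the $0$-side preserves the deficiency: exactly as before, $S \leq_\tt T$ because membership of $\sigma$ in $S$ is decided by checking whether $\sigma$ is empty or $0^\frown\sigma \in T$.

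The remaining obligation, and the step I expect to be the crux, is to verify that $T$ is weakly deep given that $S$ is. Weak depth is closed upwards under $\tt$-reducibility (by the slow growth law together with the alternative characterization of weak depth as non-$\tt$-reducibility to a Martin-L\"of random, equivalently non-randomness for any computable measure). Since $S \leq_\tt T$ and $S$ is weakly deep, it follows that $T$ is weakly deep. The delicate point is to confirm that the slow growth law applies in this weak-depth setting: one uses the characterization that $T$ fails to be weakly deep iff $T$ is random for a computable measure, and then invokes randomness preservation (as cited in the previous theorem) in the contrapositive direction---if $T$ were random for a computable measure, then $S \leq_\tt T$ would make $S$ random for a computable measure as well, contradicting that $S$ is weakly deep. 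This gives both halves: $T$ is weakly deep, yet $[T]$ contains a computable-measure-random member and so is not $\tt$-deep, completing the construction.
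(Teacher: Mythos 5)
Your proposal is correct and follows essentially the same route as the paper: graft a weakly deep co-c.e.\ tree below $0$ and the full tree below $1$, use $S\leq_\tt T$ to propagate weak depth upward, and apply Theorem~\ref{thm-tt-depth}(iii) to see that $[T]$ is not $\tt$-deep. The only cosmetic difference is your witness for the random member: you use $1^\omega$ as an atom of a computable (Dirac) measure, while the paper observes that $[T]$ contains every Martin-L\"of random sequence beginning with $1$; both are valid instances of the same criterion (and note the weak-depth step needs nothing beyond transitivity of $\leq_\tt$, since $T\leq_\tt Z$ for random $Z$ would give $S\leq_\tt Z$).
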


\begin{proof}
The proof is nearly identical to that of Theorem \ref{thm-deep-tree-counterexample}. Given any  weakly deep co-c.e.\ tree $T_0$, let $T=0^\frown T_0\cup1^\frown\str$.  Since $T_0\leq_\tt  T$, it follows that $T$ must be weakly deep.  However, $[T]$ contains every Martin-L\"of sequence that begins with a 1, it cannot be $\tt$-deep.
\end{proof}

Note earlier that we stated the result that being a $\tt$-deep $\Pi^0_1$ class not only implies that every member of the class is weakly deep, but also that this latter condition is equivalent to the property of being $\tt$-deep.  This equivalence does not hold in the case of deep $\Pi^0_1$ classes:  

\begin{thm}
There is a $\Pi^0_1$ class $\P$ such that (i) every $X\in\P$ is strongly deep but (ii) $\P$ is not deep.
\end{thm}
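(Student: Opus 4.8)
The goal is to build a $\Pi^0_1$ class $\P$ all of whose members are strongly deep, yet such that $\P$ is not deep --- meaning there is \emph{no} single computable order $g$ with $\M(T_n)\leq 2^{-g(n)}$. The strategy I would pursue is to take a known strongly deep sequence and ``thicken'' it into a $\Pi^0_1$ class by appending, at each level, a block of free bits whose length grows fast enough to defeat any candidate order bound on $\M(T_n)$, while keeping every path strongly deep via the slow growth law. The tension to exploit is precisely that depth for a $\Pi^0_1$ class is a \emph{uniform} (single-order) condition on the measure of the tree levels, whereas strong depth of the individual members is not.

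Concretely, I would fix a strongly deep (indeed order-deep) $\Pi^0_1$ singleton, e.g.\ take $A=\{Z\}$ where $Z$ is a strongly deep sequence lying in a deep $\Pi^0_1$ class (such singletons exist by Theorem~\ref{thm-deep-members} applied to any deep class with a computable leftmost path). I would then define a co-c.e.\ tree $T$ whose paths look like $Z$ interleaved with long runs of arbitrary bits: pick a fast-growing computable function $\ell$, and at the $k$-th stage insert a free block of length $\ell(k)$ into which \emph{both} children are allowed, so that $[T]$ contains $2^{\ell(k)}$-many branches at that block. Every $X\in[T]$ then satisfies $X\leq_{tt} Z$ after deleting the free blocks --- more precisely, the free blocks are recoverable by a $tt$-functional from $X$ itself and $Z$ is $tt$-computable from the ``thinned'' version of $X$ --- so I would arrange the construction so that the thinning map is a $tt$-reduction witnessing $Z\leq_{tt} X$. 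By the slow growth law (for strong depth), since $Z$ is strongly deep and $Z\leq_{tt}X$, every $X\in[T]$ is strongly deep, giving (i).

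For (ii), the point is that at each free block the tree branches fully, so $\lvert T_{n}\rvert$, and hence the uniform measure and therefore $\M(T_n)$ (recall $\M\geq^\times 2^{-K}\geq^\times\lambda$-type lower bounds on branching), fails to decay at the required uniform computable rate: by choosing $\ell$ to grow faster than the inverse of any prescribed order on the relevant subsequence of levels, I can force that for every computable order $g$ there are infinitely many $n$ with $\M(T_n)>2^{-g(n)}$. The cleanest way to see non-depth is to observe that $\M(T_n)$ stays large across each free block (it does not drop when we only add branching), so the \emph{running minimum} of $\M$ along levels cannot be dominated by $2^{-g(n)}$ for a single computable $g$ once the blocks are spaced super-computably. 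This is where most of the care is needed.

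The main obstacle, and the step I expect to be delicate, is \emph{simultaneously} guaranteeing strong depth of \emph{every} path and non-depth of the class: the free blocks must be long and sparse enough to kill any uniform order bound on $\M(T_n)$, yet the underlying deep skeleton $Z$ must remain $tt$-recoverable from each path so that the slow growth law applies uniformly across all of $[T]$. The $tt$-reduction $Z\leq_{tt}X$ must have a use bound that does not depend on which free bits were chosen, which forces the positions and lengths $\ell(k)$ of the free blocks to be \emph{computable} (so the thinning map is computable and total), even though their growth rate outpaces every order function relevant to the depth bound. Verifying that a single computable reduction works uniformly for all $X\in[T]$, and separately that the branching genuinely prevents a computable order from bounding $\M(T_n)$ at the branch levels, is the crux; the rest is bookkeeping. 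I would therefore organize the proof as: (1) set up $Z$ and the block function $\ell$; (2) define $T$ and check it is co-c.e.; (3) exhibit the uniform $tt$-reduction $Z\leq_{tt}X$ and invoke the slow growth law for (i); (4) estimate $\M(T_n)$ at the free blocks to defeat every computable order, yielding (ii).
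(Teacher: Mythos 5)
Your part (i) is essentially fine (modulo a slip noted below), but part (ii) contains a genuine gap --- in fact the construction provably fails: the padded class you build \emph{is} deep. The problem is the tension you yourself flag and defer to ``bookkeeping'': for $T$ to be co-c.e.\ and the thinning map to be a $\tt$-functional, the positions and lengths $\ell(k)$ of the free blocks must be computable, and once they are, depth survives the padding. Concretely, suppose the skeleton lives in a deep class $\QQ$ with canonical tree $S$ and order $h$, i.e.\ $\M(S_m)\leq 2^{-h(m)}$, and let $n(m)$ be the (computable) level of $T$ by which $m$ skeleton bits have appeared. Define $Q(\sigma)=\sum\{\M(\rho): |\rho|=n(|\sigma|),\ \mathrm{thin}(\rho)=\sigma\}$. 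This $Q$ is a lower semicomputable continuous semimeasure (grouping the length-$n(m+1)$ extensions of each $\rho$ by their common prefix and using $\M(\rho0)+\M(\rho1)\leq\M(\rho)$ gives $Q(\sigma0)+Q(\sigma1)\leq Q(\sigma)$), so $Q\leq^\times\M$ by universality. Since $T_{n(m)}$ is exactly the set of length-$n(m)$ strings thinning into $S_m$, we get $\M(T_{n(m)})=Q(S_m)\leq^\times \M(S_m)\leq 2^{-h(m)}$, and since $\M(T_n)$ is non-increasing in $n$, $\M(T_n)\leq^\times 2^{-h(g(n))}$ where $g(n)$ is the (computable, unbounded) count of skeleton bits below level $n$. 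Thus $h\circ g$ is a computable order witnessing that your class is deep, \emph{no matter how fast the computable $\ell$ grows}. Your intuition that ``$\M(T_n)$ stays large across free blocks'' misses that depth permits arbitrarily slow computable orders: computably scheduled stretching merely composes the order with a computable function. Non-computably spaced blocks would evade this, but then $T$ is not co-c.e.\ and the slow growth law argument for (i) collapses --- so the obstruction is structural, not delicate. (A repairable minor slip: there is no deep class with computable leftmost path --- by Proposition~\ref{prop-incomplete-nonmember} the leftmost path of a deep class is Turing complete --- but you could interleave with an entire deep class $\QQ$ via $\P=\Gamma^{-1}(\QQ)$ instead of a singleton, which is what the computation above assumes.)

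The paper avoids this trap entirely: it cites \cite[Theorem 4.7]{BienvenuP2016}, which produces from any deep class $\QQ$ a \emph{non-deep} $\Pi^0_1$ class $\P$ each of whose members is a \emph{finite modification} of a member of $\QQ$, and then notes that strong (indeed order-) depth is invariant under finite modifications by the slow growth law. The mechanism that kills depth there is precisely the non-uniformity your construction cannot supply: the number and location of the altered bits admit no computable bound across members of $\P$ (for instance, members with arbitrarily long constant prefixes force $\M(T_n)$ to stay bounded below along infinitely many levels, defeating every computable order at once), whereas any modification scheme with a computable schedule, like your blocks, is absorbed into the order as shown above.
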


\begin{proof}
As shown in \cite[Theorem 4.7]{BienvenuP2016}, given any deep $\Pi^0_1$ class $\QQ$, there is a non-deep $\Pi^0_1$ class $\P$ such that every member of $\P$ is a finite modification of some member of $\QQ$.  By the slow growth law, order-depth is invariant under finite modifications, so every member of any such class $\P$ is  strongly deep.
\end{proof}

\subsection{Additional examples of deep sequences}

Theorem \ref{thm-deep-members} also allows us to derive a number of examples of deep sequences.
In \cite{BienvenuP2016} it was shown that the following collections of sequences form deep $\Pi^0_1$ classes:
\begin{enumerate}
\item the collection of consistent completions of Peano arithmetic;
\item the collection of ($\alpha, c$)-shift-complex sequences for computable $\alpha\in(0,1)$ and $c\in\omega$, where a sequence $X\in\cs$ is ($\alpha, c$)-shift-complex if $K(\tau)\geq \alpha|\tau|-c$ for every substring $\tau$ of $X$;
\item the collection of $\mathrm{DNC}_q$ functions with $\sum_{n\in\omega}\frac{1}{q(n)}=\infty$, where $f:\omega\rightarrow\omega$ is a $\mathrm{DNC}_q$ function if $f$ is total function such that $f(n)\neq \phi_n(n)$  and $f(n)<q(n)$ for all $n\in\omega$;
\item the collection of codes of infinite sequences of finite sets $(F_0,F_1,\dotsc)$ of strings of maximal complexity, i.e., there are computable functions $\ell,f,d:\omega\rightarrow\omega$ such that for all $n\in\omega$, (i) $|F_n|=f(n)$, (ii) $|\sigma|=\ell(n)$ for $\sigma\in F_n$, and (iii) $K(\sigma)\geq\ell(n)-d(n)$ for $\sigma\in F_n$; and
\item the collection of codes of $K$-compression functions with constant $c$, where $g:\str\rightarrow\omega$ is a $K$-compression function with constant $c$ if (i) $g(\sigma)\leq K(\sigma)+c$ for all $\sigma\in\str$ and (ii) $\sum_{\sigma\in\str}2^{-g(\sigma)}\leq 1$.

\end{enumerate}

As an immediate consequence of Theorem \ref{thm-deep-members} and the above results from \cite{BienvenuP2016}, we have:

\begin{cor} Every sequence in the following collections is strongly deep:
\begin{enumerate}
\item the collection of consistent completions of Peano arithmetic;
\item the collection of shift-complex sequences (i.e., the sequences that are $(\alpha,c)$-shift complex for some computable $\alpha\in(0,1)$ and $c\in\omega$);
\item the collection of codes of $\mathrm{DNC}_q$ functions with $\sum_{n\in\omega}\frac{1}{q(n)}=\infty$;
\item the collection of codes of infinite sequences of finite sets of strings of maximal complexity; and
\item the collection of codes of $K$-compression functions (i.e., $K$-compression functions with constant $c$ for some $c\in\omega$).
\end{enumerate}
\end{cor}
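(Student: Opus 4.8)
The plan is to obtain the corollary directly from Theorem~\ref{thm-deep-members} together with the five results of \cite{BienvenuP2016} quoted immediately above, using the elementary fact that order-depth implies strong depth. First I would record this implication: if $X$ is order-deep then for every computable time bound $t$ the difference $K^t(X\uh n)-K(X\uh n)$ is eventually bounded below by a computable order $g$, and since an order is unbounded this gives $K^t(X\uh n)-K(X\uh n)\to\infty$ for every such $t$, which is exactly strong depth. It therefore suffices, for each sequence $X$ in one of the five collections, to name a deep $\Pi^0_1$ class containing $X$ and then to invoke Theorem~\ref{thm-deep-members}.

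For collections (1), (3), and (4) this is immediate: the consistent completions of Peano arithmetic, the $\mathrm{DNC}_q$ functions with $\sum_n 1/q(n)=\infty$, and the codes of infinite sequences of finite sets of strings of maximal complexity are each shown in \cite{BienvenuP2016} to be deep $\Pi^0_1$ classes, so Theorem~\ref{thm-deep-members} applies verbatim, every member is order-deep, and hence every member is strongly deep.

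The one point I would handle with care is that collections (2) and (5) appear in the corollary as unions over a parameter, whereas the depth results in \cite{BienvenuP2016} are stated for each fixed parameter. For (2), a sequence is shift-complex exactly when it is $(\alpha,c)$-shift-complex for some computable $\alpha\in(0,1)$ and some $c\in\omega$; fixing such $\alpha$ and $c$, the class of all $(\alpha,c)$-shift-complex sequences is a deep $\Pi^0_1$ class by \cite{BienvenuP2016}, and $X$ lies in it. Likewise for (5), any $K$-compression function carries some constant $c$, and the codes of $K$-compression functions with that fixed $c$ form a deep $\Pi^0_1$ class. In each case I would apply Theorem~\ref{thm-deep-members} to the fixed-parameter deep class that contains $X$ and conclude that $X$ is order-deep, hence strongly deep.

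I do not expect any genuine obstacle here: the substance of the corollary is carried entirely by Theorem~\ref{thm-deep-members}, and the only subtlety—passing from the union over parameters in (2) and (5) to a single fixed-parameter deep class—is dispatched by the trivial observation that membership in the union entails membership in one of the parameterized deep classes, to which the theorem then applies.
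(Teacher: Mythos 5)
Your proposal is correct and follows exactly the paper's route: the paper presents the corollary as an immediate consequence of Theorem~\ref{thm-deep-members} applied to the five deep $\Pi^0_1$ classes identified in \cite{BienvenuP2016}, which is precisely your argument. Your explicit handling of the parameterized unions in (2) and (5) --- reducing membership in the union to membership in a fixed-parameter deep class --- is a point the paper leaves implicit, and your observation that order-depth implies strong depth is the same (correct) bridge the paper relies on.
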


We can obtain further examples of members of deep $\Pi^0_1$ classes using a version of the slow growth law for members of deep $\Pi^0_1$ classes.

\begin{lem}\label{lem-deep-tt}
If $X$ is a member of a deep $\Pi^0_1$ class and $X\leq_{tt}Y$, then $Y$ is a member of a deep $\Pi^0_1$ class.
\end{lem}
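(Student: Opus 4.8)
The plan is to reduce the statement to the slow growth law for $\Pi^0_1$ classes (the Medvedev version) by building, from the reduction $X \leq_{tt} Y$ and the deep class containing $X$, an explicit deep $\Pi^0_1$ class that contains $Y$. Fix a deep $\Pi^0_1$ class $\P$ with $X \in \P$, and a $\tt$-functional $\Gamma$ with $\Gamma^Y = X$. The candidate class is the preimage
\[
\QQ = \Gamma^{-1}(\P) = \{Z \in \cs : \Gamma^Z \in \P\}.
\]
Since $\Gamma^Y = X \in \P$, we immediately have $Y \in \QQ$, so everything comes down to showing that $\QQ$ is a deep $\Pi^0_1$ class.

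First I would verify that $\QQ$ is a $\Pi^0_1$ class. Because $\Gamma$ is a $\tt$-functional it is total on every oracle and has a computable use $\gamma$; hence $\Gamma^Z \uh n$ depends only on $Z \uh \gamma(n)$, and $Z \in \QQ$ if and only if $\Gamma^{Z \uh \gamma(n)}\uh n$ lies in the canonical co-c.e.\ tree $T$ of $\P$ for every $n$. Enumerating the complement of $T$ and pulling back along the computable, finite-to-one string map induced by $\Gamma$ (whose preimages $\Gamma^{-1}(\tau)$ are finite and uniformly computable, as recorded before the main theorem of Section~\ref{sec-slow-growth}) gives a $\Sigma^0_1$ description of $\cs \setminus \QQ$, so $\QQ$ is effectively closed.

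The key step is then to observe that $\Gamma$ witnesses $\P \leq_s \QQ$: by the very definition of $\QQ$, every $Z \in \QQ$ satisfies $\Gamma^Z \in \P$, so $\Gamma(\QQ) \subseteq \P$, which is exactly the condition for a strong (Medvedev) reduction $\P \leq_s \QQ$ recorded in Section~\ref{subsec-depth-neg} (and $\Gamma$ may indeed be taken to be a $\tt$-functional, as remarked there). Since $\P$ is deep, the slow growth law for $\Pi^0_1$ classes yields that $\QQ$ is deep. As $Y \in \QQ$, this exhibits $Y$ as a member of a deep $\Pi^0_1$ class.

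I expect the main obstacle to be the first step, namely checking carefully that $\QQ$ is genuinely $\Pi^0_1$ rather than merely closed: this relies essentially on the totality and the computable, finite use of the $\tt$-functional, which is what lets the condition ``$\Gamma^Z \in \P$'' be monitored by a co-c.e.\ tree. One should also confirm that the orientation of the Medvedev reduction matches the slow growth law (depth is preserved upward under $\leq_s$, and here $\QQ$ is the harder preimage class), and that the inclusion $\Gamma(\QQ) \subseteq \P$ already suffices as a witness for $\P \leq_s \QQ$, so that no surjectivity of $\Gamma$ onto $\P$ is needed.
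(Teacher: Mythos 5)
Your proposal is correct and takes essentially the same route as the paper, which also forms the preimage $\Phi^{-1}(\P)$ under the $\tt$-functional and invokes the slow growth law for deep $\Pi^0_1$ classes to conclude that this preimage is a deep $\Pi^0_1$ class containing $Y$. The details you spell out --- that the preimage is genuinely $\Pi^0_1$ thanks to the totality and computable use of the functional, and that the inclusion $\Gamma(\QQ)\subseteq\P$ suffices to witness $\P\leq_s\QQ$ --- are exactly the points the paper's terse proof leaves implicit.
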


\begin{proof}
Let $\P$ be a $\Pi^0_1$ class contain $X$, and let $\Phi$ be a total Turing functional satisfying $\Phi(Y)=X$.  Then by the slow growth law for deep $\Pi^0_1$ classes, $\Phi^{-1}(\P)$ is a deep $\Pi^0_1$ class that contains $Y$, which must be strongly deep by Theorem \ref{thm-deep-members}.
\end{proof}

\begin{thm}{\ }
\begin{itemize}
\item[(i)] The halting set $\emptyset'=\{e\in\omega\colon\phi_e(e)\halts\}$ is a member of a deep $\Pi^0_1$ class.
\item[(ii)]For every $X\in\cs$, $X'=\{e\colon \phi^X_e(e)\halts\}$ is a member of a deep $\Pi^0_1$ class.
\item[(iii)] Every non-trivial index set is a member of a deep $\Pi^0_1$ class.
\end{itemize}
\end{thm}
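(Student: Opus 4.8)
The plan is to reduce all three parts to a single construction together with Lemma~\ref{lem-deep-tt}, the slow growth law for members of deep $\Pi^0_1$ classes. Since that lemma propagates membership in a deep class \emph{upward} under $\leq_{tt}$, for part~(i) it suffices to exhibit one member of an already-established deep $\Pi^0_1$ class that is truth-table reducible to $\emptyset'$. I would use the class of $\mathrm{DNC}_2$ functions, which is deep by the cited result (take $q\equiv 2$, so that $\sum_n 1/q(n)=\infty$) and whose members are literally elements of $\cs$.

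First I would build an explicit $\mathrm{DNC}_2$ function $f$ with $f\leq_{tt}\emptyset'$. For each $n$, the predicates ``$\phi_n(n)\halts$ with output $0$'' and ``$\phi_n(n)\halts$ with output $1$'' are $\Sigma^0_1$, hence each many-one (and so truth-table) reducible to $\emptyset'$ via a fixed, computable query position. Setting $f(n)=1$ if $\phi_n(n)=0$, $f(n)=0$ if $\phi_n(n)=1$, and $f(n)=0$ otherwise, produces a total $\{0,1\}$-valued function that differs from $\phi_n(n)$ at every $n$ for which the latter is defined, and whose computation makes only these two fixed queries per input. Thus $f$ is a $\mathrm{DNC}_2$ function with $f\leq_{tt}\emptyset'$, and Lemma~\ref{lem-deep-tt} immediately yields part~(i).

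For part~(ii), I would observe that $\emptyset'\leq_{tt}X'$ for every $X\in\cs$, uniformly and independently of $X$: by the $s$-$m$-$n$ theorem there is a computable $r$ such that $\phi_{r(n)}$ ignores both its oracle and its input and simply runs $\phi_n(n)$, so that $n\in\emptyset'\iff r(n)\in X'$. This is a one-one (hence truth-table) reduction valid for every oracle $X$, since $\phi_{r(n)}$ never consults $X$. Combining this reduction with part~(i) through Lemma~\ref{lem-deep-tt} shows that $X'$ is a member of a deep $\Pi^0_1$ class.

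For part~(iii), I would invoke the classical form of Rice's theorem, as recalled earlier in the paper: any non-trivial index set $C$ satisfies $\emptyset'\leq_1 C$ or $\overline{\emptyset'}\leq_1 C$. By part~(i), $\emptyset'$ is a member of a deep class; since $\emptyset'\leq_{tt}\overline{\emptyset'}$ (complementation is a truth-table reduction), Lemma~\ref{lem-deep-tt} shows $\overline{\emptyset'}$ is one as well. As $\leq_1$ implies $\leq_{tt}$, a final application of Lemma~\ref{lem-deep-tt} in either case places $C$ in a deep $\Pi^0_1$ class. The only step carrying genuine content is the construction of the $\mathrm{DNC}_2$ function $f\leq_{tt}\emptyset'$ in part~(i)---in particular checking that bounding the two $\Sigma^0_1$ queries per input keeps the reduction truth-table rather than merely Turing; everything else is a composition of standard reductions feeding into the slow growth law.
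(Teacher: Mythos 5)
Your proposal is correct and follows essentially the same route as the paper: establish part~(i) by exhibiting a $\mathrm{DNC}_2$ function truth-table reducible to $\emptyset'$ and applying Lemma~\ref{lem-deep-tt}, then derive (ii) from $\emptyset'\leq_{tt}X'$ and (iii) from Rice's theorem. The only difference is that you construct the $\mathrm{DNC}_2$ function $f\leq_{tt}\emptyset'$ explicitly (correctly, including the check that the fixed-query reduction is truth-table), where the paper simply cites this standard fact from Nies \cite[Remark 1.8.30]{Nies2009}.
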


\begin{proof}
(i) There is a $\mathrm{DNC}_2$ function $f$ such that $f\leq_{tt}\emptyset'$ (see \cite[Remark 1.8.30]{Nies2009}).  Since the collection of $\mathrm{DNC}_2$ functions forms a deep $\Pi^0_1$ class, the result follows from Lemma \ref{lem-deep-tt}.\\
\noindent (ii) Since $\emptyset'\leq_{tt}X'$ for every $X\in\cs$ and $\emptyset'$ is a member of a deep $\Pi^0_1$ class by part (i), the result follows from Lemma \ref{lem-deep-tt}.\\
\noindent (iii). By Rice's theorem, every non-trivial index set $C$ satisfies $\emptyset'\leq_1 C$ or $\overline {\emptyset'}\leq_1 C$, and so the result follows again from part (i) and Lemma \ref{lem-deep-tt}.

\end{proof}

\subsection{Separating depth notions}

In light of Theorem \ref{thm-deep-members}, it is natural to ask whether every order-deep sequence is a member of a deep $\Pi^0_1$ class.  We show that this does not hold by establishing several propositions of independent interest.

Recall that a sequence $X$ is complex if there is some computable order $g$ such that $K(X\uh n)\geq g(n)$.  As shown explicitly in \cite{HolzlP2017}, one can equivalently define a sequence to be complex in terms of a priori complexity, i.e., $X$ is complex if and only if there is some computable order $h$ such that $KA(X\uh n)\geq h(n)$.  We use this second characterization to derive the following:

\begin{prop}\label{prop-deep-complex}
Every member of a deep $\Pi^0_1$ class is complex.
\end{prop}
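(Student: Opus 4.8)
The plan is to leverage the a priori complexity characterization of complexity together with the deepness hypothesis expressed through the universal continuous semimeasure $\M$. Recall that $X$ is complex if and only if there is a computable order $h$ with $\KA(X \uh n) \geq h(n)$, and that $\KA(\sigma) = -\log \M(\sigma)$ by definition. So the goal reduces to producing a computable order $h$ such that $-\log \M(X \uh n) \geq h(n)$ for all~$n$, or equivalently $\M(X \uh n) \leq 2^{-h(n)}$.

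The deepness hypothesis gives us exactly the right kind of decay for $\M$ along the tree. By definition of a deep $\Pi^0_1$ class with canonical co-c.e.\ tree $T$, there is a computable order $g$ such that $\M(T_n) \leq 2^{-g(n)}$, where $\M(T_n) = \sum_{\sigma \in T_n} \M(\sigma)$. First I would observe that for any $X \in \P = [T]$ and any~$n$, the initial segment $X \uh n$ lies in $T_n$, so in particular the single term $\M(X \uh n)$ is bounded above by the full sum: $\M(X \uh n) \leq \M(T_n) \leq 2^{-g(n)}$. Taking logarithms, this immediately yields $\KA(X \uh n) = -\log \M(X \uh n) \geq g(n)$.

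Since $g$ is a computable order, this is precisely the a priori complexity witness we need. Invoking the characterization from \cite{HolzlP2017} that complexity in terms of $\KA$ is equivalent to the usual complexity defined via $K$, we conclude that $X$ is complex. I would take $h = g$ directly; no modification of the order is required, since the per-string bound inherits the decay of the whole level.

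I do not anticipate a genuine obstacle here — the argument is short because the deepness condition is stated in terms of $\M$-mass of an entire level, and dropping down to one path only strengthens the bound. The only point requiring a small amount of care is ensuring we are invoking the correct direction of the $\KA$-characterization of complexity (that a computable lower bound on $\KA(X\uh n)$ suffices to conclude complexity), but this is exactly what \cite{HolzlP2017} provides. It is worth remarking that this proposition is strictly weaker than Theorem~\ref{thm-deep-members} in content, but its value lies in the method: it isolates the a priori complexity lower bound, which will presumably be used to separate order-depth from membership in a deep $\Pi^0_1$ class by exhibiting an order-deep sequence that fails to be complex.
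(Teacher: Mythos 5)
Your proposal is correct and coincides with the paper's own proof: both use the depth condition to get a computable order $g$ with $\M(T_n)\leq 2^{-g(n)}$, drop to the single term $\M(X\uh n)\leq \M(T_n)$ since $X\uh n\in T_n$, take negative logarithms to get $\KA(X\uh n)\geq g(n)$, and invoke the $\KA$-characterization of complexity from \cite{HolzlP2017}. No differences worth noting.
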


\begin{proof}
Let $\P$ be a deep $\Pi^0_1$ class with associated co-c.e.\ tree $T$.  Then there is some computable order $h:\omega\rightarrow\omega$  such that $\mathbf{M}(T_n)\leq 2^{-h(n)}$.  Given $X\in\P$, since $X\uh n\in T_n$, we have $\mathbf{M}(X\uh n)\leq\mathbf{M}(T_n)\leq 2^{-h(n)}$.  Taking the negative logarithm yields $\mathit{KA}(X\uh n)\geq h(n)$, from which the conclusion follows.
\end{proof}

Next, we have:

\begin{prop}\label{prop-incomplete-nonmember}
No sequence that is Turing equivalent to an incomplete c.e.\ set is a member of a deep $\Pi^0_1$ class.
\end{prop}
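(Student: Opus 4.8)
The strategy is to show that membership in a deep $\Pi^0_1$ class forces $X$ to compute a $\mathrm{DNC}$ function, which is incompatible with $X$ being Turing equivalent to an incomplete c.e.\ set. So I would suppose toward a contradiction that $X\equiv_T A$ for some incomplete c.e.\ set $A$ and that $X$ is a member of some deep $\Pi^0_1$ class.

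First I would apply Proposition~\ref{prop-deep-complex} to conclude that $X$ is complex, i.e.\ there is a computable order $g$ with $K(X\uh n)\geq g(n)$ for all $n$; in particular $X$ computes a complex sequence, namely itself. I would then invoke the theorem of Kjos-Hanssen, Merkle and Stephan, which states that a set computes a $\mathrm{DNC}$ function if and only if it computes a complex sequence. Hence $X$ computes a $\mathrm{DNC}$ function, and since $A\equiv_T X$, so does $A$. Finally, using the standard degree-theoretic equivalence of $\mathrm{DNC}$ functions with fixed-point-free functions, together with Arslanov's completeness criterion---according to which a c.e.\ set is Turing complete exactly when it computes a fixed-point-free function---I would conclude $A\equiv_T\hs$, contradicting the incompleteness of $A$.

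The argument is short once these two imported results are in place, so the crux lies in citing them correctly rather than in any new construction. The one point that needs care is matching the notion of \emph{complex} supplied by Proposition~\ref{prop-deep-complex} with the hypothesis of the Kjos-Hanssen--Merkle--Stephan theorem; both are phrased as $K(\cdot\uh n)$ bounded below by a computable order, so they coincide, and the reduction $X$ computing a complex sequence is witnessed by $X$ itself. I would also remark that negligibility alone cannot yield this result: by Sacks' theorem every non-computable set is computed by only a measure-zero class of oracles, so the measure-theoretic viewpoint does not separate incomplete c.e.\ sets from complete ones, and the finer computability-theoretic input through $\mathrm{DNC}$ functions is what makes the statement go through.
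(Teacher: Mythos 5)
Your proof is correct and takes essentially the same route as the paper's: deduce from Proposition~\ref{prop-deep-complex} that $X$ is complex, invoke the Kjos-Hanssen--Merkle--Stephan theorem to conclude that $X$ (hence the c.e.\ set $A\equiv_T X$) has DNC degree, and contradict Arslanov's completeness criterion. The only difference is cosmetic---you make explicit the DNC/fixed-point-free equivalence that the paper leaves implicit in its citation of Arslanov's criterion.
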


\begin{proof}
Suppose that $X$ is a member of a deep $\Pi^0_1$ class and is Turing equivalent to some incomplete c.e.\ set $Y$.  By Proposition \ref{prop-deep-complex}, $X$ is complex.  It follows from work of Kjos-Hanssen, Merkle, and Stephan \cite{KjosHanssenMS2011} that $X$ has DNC degree.  But then $Y$ is an incomplete c.e.\ set of DNC degree, which contradicts Arslanov's completeness criterion (see, e.g., \cite[Theorem 4.1.11]{Nies2009}).
\end{proof}


\begin{thm}\label{thm-order-deep-nonmember}
There is an order-deep sequence that is not complex (hence is not a member of any deep $\Pi^0_1$ class).
\end{thm}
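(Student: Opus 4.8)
The plan is to exhibit the desired sequence as (the characteristic sequence of) a Turing-incomplete, high, computably enumerable set $A$, and to split the argument into a non-complexity half and an order-depth half. The non-complexity half is immediate from the machinery already assembled for Proposition~\ref{prop-incomplete-nonmember}: if $A$ were complex, then by the result of Kjos-Hanssen, Merkle, and Stephan it would have DNC degree, and a Turing-incomplete c.e.\ set of DNC degree contradicts Arslanov's completeness criterion. Hence \emph{every} Turing-incomplete c.e.\ set fails to be complex, and by Proposition~\ref{prop-deep-complex} it is a fortiori not a member of any deep $\Pi^0_1$ class. So it suffices to produce a single Turing-incomplete c.e.\ set that is order-deep, and the parenthetical claim will then follow for free.

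For the order-depth half I would exploit highness. For a c.e.\ set $A$ with enumeration $(A_s)$, the settling-time function $m_A(n)=\mu s\,(A_s\uh n = A\uh n)$ satisfies $m_A\equiv_T A$, and by Martin's domination theorem a c.e.\ set is high exactly when some enumeration of it has settling time dominating every total computable function. High Turing-incomplete c.e.\ sets exist classically (e.g.\ by the Sacks jump theorem there is a high c.e.\ degree strictly below $\mathbf{0}'$). Presenting such an $A$ so that $m_A$ is dominant, one gets, for every computable time bound $t$, that $m_A(n)>t(n)$ for almost all $n$; consequently the canonical fast description of $A\uh n$ — namely $n$ together with the count $|A\cap[0,n)|$, decoded by enumerating until that many elements appear below $n$ — cannot be decoded within $t(n)$ steps. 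One then argues that in fact no program much shorter than $A\uh n$ decodes in time $t(n)$, so $K^t(A\uh n)$ exceeds the plain complexity $K(A\uh n)=O(\log n)$ by an unbounded margin, and packaging this through the increasing function $h$ of Lemma~\ref{lem-alt-order-deep}(iii) yields order-depth.

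The main obstacle is precisely this packaging together with the coexistence of the two global requirements. Dominance of $m_A$ gives the pointwise gap $K^t(A\uh n)-K(A\uh n)\to\infty$ for each fixed $t$ fairly directly, but \emph{order}-depth demands a single computable $h$ with $K^t(A\uh h(n))-K(A\uh h(n))\geq^+ n$ \emph{uniformly} in $t$. To secure this I would not merely invoke highness abstractly but build $A$ by a priority construction whose enumeration is deliberately delayed on a schedule tied to $h$: at the stage governing $h(n)$ one withholds the relevant elements long enough to defeat the first $n$ time bounds in a fixed effective list at once, forcing the gap at length $h(n)$ to be at least $n$. The incompleteness requirement (say lowness of $A$, or a direct diagonalization $\Phi_e^A\neq\emptyset'$) is met on a separate stream of requirements, and the genuine difficulty is verifying that the slow, depth-producing enumeration never conflicts with keeping $A$ below $\mathbf{0}'$. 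This is possible exactly because order-depth, unlike membership in a deep $\Pi^0_1$ class, does not force complexity, so there is room below $\mathbf{0}'$; showing that the depth stream and the incompleteness stream can be run together is the heart of the proof, after which $A$ is c.e., incomplete, hence not complex, yet order-deep, as required.
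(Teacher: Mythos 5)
Your first half is exactly the paper's: the paper also derives non-complexity from the reasoning of Proposition~\ref{prop-incomplete-nonmember} (complex $\Rightarrow$ DNC degree by Kjos-Hanssen--Merkle--Stephan, contradicting Arslanov for an incomplete c.e.\ set), with the parenthetical following from Proposition~\ref{prop-deep-complex}. Where you diverge is the order-depth half, and there your proposal has a genuine gap. The paper does not construct anything: it cites Juedes, Lathrop, and Lutz \cite{JuedesLL1994}, who proved that every \emph{weakly useful} sequence is order-deep and that every high degree contains a weakly useful sequence, and then takes a high incomplete c.e.\ degree (Sacks) to finish. You instead propose to get order-depth from a dominant settling-time function plus a bespoke priority construction, and the crucial step is asserted rather than proved: from $m_A(n)>t(n)$ for almost all $n$ you conclude that ``no program much shorter than $A\uh n$ decodes in time $t(n)$,'' but a short program halting within $t(n)$ steps need not operate by enumerating $A$ until the count $|A\cap[0,n)|$ is reached, so slow settling time by itself does not exclude short \emph{fast} descriptions of $A\uh n$. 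Converting domination into time-bounded incompressibility is precisely the hard content of the JLL theorem you would be re-proving; as written, you have assumed the conclusion at its most difficult point.

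The uniformization step has a second concrete problem: there is no effective list of all total computable time bounds (totality is $\Pi^0_2$), so ``defeat the first $n$ time bounds in a fixed effective list'' is not available as stated; one must diagonalize against all partial computable functions, treating requirements for non-total ones as vacuous, which substantially complicates the injury pattern. And you yourself flag that verifying the coexistence of the depth-producing delayed enumeration with the incompleteness requirements is ``the heart of the proof'' --- i.e., it is left undone. Your target is at least the right one: by Theorem~\ref{thm-moser-stephan} an order-deep incomplete c.e.\ set is forced to be high, so a high incomplete c.e.\ set is the only possible witness of this kind. But as it stands the proposal is a plausible plan whose central lemma (dominant settling time, suitably scheduled, yields $K^t(A\uh h(n))-K(A\uh h(n))\geq^+ n$ uniformly in $t$ via Lemma~\ref{lem-alt-order-deep}) is unestablished, whereas the paper's citation of weak usefulness closes exactly this gap.
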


\begin{proof}
In~\cite{JuedesLL1994}, Juedes, Lathrop, and Lutz introduced the notion of weakly useful sequence (we do not recall the definition of this notion here and refer the reader to their paper) and showed that (i) every weakly useful sequence is order-deep and (ii) every high degree contains a weakly useful sequence. Our theorem then follows: Let $X$ be high, incomplete, and weakly useful (hence order-deep). By the same reasoning in the proof of Proposition~\ref{prop-incomplete-nonmember}, $X$ is not complex. 
\end{proof}

We note another consequence of Proposition \ref{prop-incomplete-nonmember}, namely  that the leftmost path of every deep $\Pi^0_1$ class is Turing complete.  Indeed, the leftmost path of a $\Pi^0_1$ class has c.e.\ degree and thus must be Turing complete by Proposition \ref{prop-incomplete-nonmember}.

Having separated order-depth from being a member of a deep $\Pi^0_1$ class, we can use a similar line of reasoning to further separate order-depth from Bennett's original notion of depth.  We need one auxiliary result.
 
\begin{thm}[Moser, Stephan \cite{MoserS2017}]\label{thm-moser-stephan}
Every order-deep sequence is either high or of DNC degree.
\end{thm}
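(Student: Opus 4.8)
The plan is to prove the contrapositive: assuming that $X$ is neither high nor of DNC degree, I would show that $X$ fails to be order-deep (indeed that it is shallow). The two hypotheses serve complementary purposes. Non-highness, via Martin's domination characterization of the high degrees, tells us that \emph{every} function $f\leq_T X$ is escaped infinitely often by a total computable function, i.e.\ there is a total computable $g$ with $g(n)>f(n)$ for infinitely many $n$. The absence of DNC degree, via the theorem of Kjos-Hanssen, Merkle, and Stephan \cite{KjosHanssenMS2011} that a set is of DNC degree iff it computes a complex sequence, tells us that $X$ computes no complex sequence; in particular the initial segments of $X$ are, infinitely often, highly compressible: for every computable order $g$ we have $K(X\uh n)<g(n)$ for infinitely many $n$.

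The core idea is to manufacture, from these two facts, a single computable time bound $t$ for which $K^t(X\uh n)-K(X\uh n)$ fails to grow along an infinite set of $n$, contradicting order-depth. Recall that $K^t(\tau)\leq K(\tau)+e$ as soon as $t(|\tau|)$ exceeds the halting time of some program of length $\leq K(\tau)+e$ that outputs $\tau$. So it would suffice to produce, first, an $X$-computable function $f$ such that for infinitely many $n$ the value $f(n)$ bounds the halting time of a program for $X\uh n$ whose length exceeds $K(X\uh n)$ by less than the putative depth order; and, second, to invoke non-highness to find a computable $t$ with $t(n)\geq f(n)$ infinitely often, thereby placing $K^t(X\uh n)$ within a small additive margin of $K(X\uh n)$ at those $n$.

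The delicate point---and the step I expect to be the main obstacle---is obtaining the time function $f$ \emph{$X$-computably}. The naive definition, ``the halting time of a shortest program for $X\uh n$,'' is only $\emptyset'$- (or $X'$-) computable, since recognizing a minimal program requires knowing $K(X\uh n)$, and non-highness of $X$ tells us nothing about $X'$. This is exactly where the failure of DNC degree must be used: the aim is to exploit the traceability and infinitely-often approximation behaviour of non-DNC degrees to carry out the search for near-optimal descriptions of $X\uh n$ in an $X$-computable fashion, so that $f$ together with the accompanying programs is genuinely $X$-recursive, with only their near-optimality (rather than their $X$-computability) confined to an infinite subset of indices.

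A secondary obstacle is the coordination of two infinitely-often conditions: the set of $n$ on which the descriptions are near-optimal and the set of $n$ on which the computable bound $t$ escapes $f$ must be arranged to intersect infinitely. I would address this by running Martin's escape argument relative to the $X$-effective enumeration of the ``good'' indices produced in the previous step, forcing the escaping witnesses to land on good indices infinitely often. Once this is in place, the inequality $K^t(X\uh n)-K(X\uh n)<g(n)$ holds for infinitely many $n$ against the candidate order $g$, which contradicts order-depth through the characterization of Lemma~\ref{lem-alt-order-deep}, and the theorem follows.
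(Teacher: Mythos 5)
The paper itself does not prove this theorem: it is imported verbatim from Moser and Stephan \cite{MoserS2017}, so your proposal can only be judged against the known argument and its own internal completeness. Your high-level plan is pointed in the right direction --- the contrapositive, Martin's domination characterization of highness, and the Kjos-Hanssen--Merkle--Stephan connection between DNC degrees and complex sequences are indeed the ingredients of the actual proof. But the step you yourself flag as ``the main obstacle'' is precisely the mathematical content of the theorem, and your proposal offers only an intention (``the aim is to exploit the traceability and infinitely-often approximation behaviour of non-DNC degrees''), not an argument. Concretely, the form of non-DNC-ness you quote --- $X$ computes no complex sequence, hence $K(X\uh n) < g(n)$ infinitely often --- only asserts that short programs \emph{exist} infinitely often; it gives no control over their halting times and no $X$-effective way to locate them, since recognizing a near-optimal description requires knowing $K(X\uh n)$. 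The device that actually powers the proof is the agreement property: if $X$ is not of DNC degree, then \emph{every} total $q \leq_T X$ satisfies $q(e) = \phi_e(e)$ for infinitely many $e$ (otherwise a finite modification of $q$ would be DNC). Applying this to $q(e) = X \uh e$ produces an infinite, $X$-c.e.\ set $G$ of indices with $\phi_e(e)\halts = X\uh e$; at such $e$ one automatically has $K(X\uh e) = K(e) \pm O(1)$, and the total $X$-computable function ``the stage by which $k$ elements of $G$ have been enumerated'' is exactly what one feeds to Martin's theorem to bound the halting times of $\phi_e(e)$ computably at infinitely many $e \in G$. Your sketch contains neither this device nor any substitute for it, and ``traceability'' of non-DNC degrees is not a characterization available in the form you would need.

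Even granting that repair, a third infinitely-often condition remains that your sketch never confronts: to convert the program ``run $\phi_e(e)$'' into a witness for $K^t(X\uh e) \leq K(X\uh e) + O(1)$, one must describe $e$ itself in about $K(e)$ bits \emph{in a way that decodes within the computable time bound $t$} --- the shortest program for $e$ may take arbitrarily long to halt. So one needs infinitely many agreement points $e$ that are simultaneously ``shallow integers,'' in the spirit of the Solovay-function points used in the paper's proof of Theorem~\ref{thm-deep-neg}, or one must absorb the discrepancy into the order margin $g(e)$ by padding. Your proposed coordination --- running Martin's escape argument relative to the enumeration of good indices --- does handle the interleaving of the halting-time and length conditions, but not this last requirement. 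In sum: the roadmap is sensible and the division of labor between the two hypotheses is correctly identified, but the decisive constructions are missing, so as it stands the proposal has a genuine gap rather than a proof.
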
 

\begin{thm}
There is a strongly deep sequence that is not order-deep.
\end{thm}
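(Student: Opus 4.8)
The plan is to build a strongly deep sequence whose depth is witnessed only through Bennett's asymptotic condition, while simultaneously arranging that it fails the dichotomy of Theorem~\ref{thm-moser-stephan}, i.e. that it is neither high nor of DNC degree. Since every order-deep sequence is high or DNC, producing a strongly deep $X$ that avoids both properties immediately yields a strongly deep sequence that is not order-deep. So the target is to construct a strongly deep $X$ with $X' \leq_T \hs$ (ruling out highness, since then $X' <_T \hs$... but more carefully, arrange $X$ to be $\mathrm{low}$, so $X' \equiv_T \hs$ which is not high relative to $X$ in the sense that $X$ is not high) and with $X$ of non-DNC degree (e.g. $X$ computable from an incomplete c.e.\ set, or $X$ hyperimmune-free, either of which can be leveraged to avoid DNC degree).

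Concretely, I would first look for an \emph{existing} sequence in the literature that is known to be strongly deep yet provably low (or hyperimmune-free, or of c.e.\ degree strictly below $\hs$). One natural candidate: use the slow growth law to transfer strong depth from $\hs$ downward is \emph{not} available since the slow growth law only propagates depth \emph{upward} under tt-reducibility. Instead, the cleaner route is to exhibit a strongly deep sequence of c.e.\ degree that is \emph{incomplete}: by Proposition~\ref{prop-incomplete-nonmember} such a sequence cannot be a member of any deep $\Pi^0_1$ class, but that alone does not make it non-order-deep. The decisive extra input is Theorem~\ref{thm-moser-stephan}: an incomplete c.e.\ set is never of DNC degree (Arslanov), and if we further arrange it to be \emph{not high}, then it cannot be order-deep. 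Thus the whole construction reduces to: find a strongly deep, non-high, incomplete c.e.\ set.

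The main step, then, is to produce an incomplete non-high c.e.\ set (for instance a low c.e.\ set, built by a standard lowness requirement with finite-injury priority) that is nonetheless strongly deep. Strong depth of a c.e.\ set is delicate because Bennett proved $\hs$ is strongly deep, but a \emph{low} c.e.\ set is computationally weak. I would therefore interleave, into a finite-injury construction of a low set $X$, explicit coding requirements that force the time-bounded complexity of initial segments to exceed their plain prefix-free complexity unboundedly: at stage $s$ one encodes into a sparse, computably-located block of $X \uh n$ an object whose production requires large computation time relative to its description length, guaranteeing $K^t(X\uh n) - K(X\uh n) \to \infty$ for every computable $t$ while keeping the set low via the usual lowness restraints. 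The delicate balancing act—ensuring the depth-coding blocks are not so informative as to violate lowness, yet informative enough to keep the $K^t - K$ gap diverging for every $t$ simultaneously—is where I expect the real difficulty to lie.

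The hard part will be simultaneously guaranteeing strong depth for \emph{all} computable time bounds $t$ while maintaining lowness and incompleteness under finite injury; the tension is that a strongly deep set looks computationally rich, yet lowness demands it be weak. I would manage this by diagonalizing depth-requirements against a computable enumeration of time bounds and choosing the deep content at each stage from a computable family of ``deep gadgets'' (e.g.\ initial segments of members of a fixed deep $\Pi^0_1$ class, copied in with large use so that reproducing them demands superpolynomial time), while the lowness requirements only ever impose finite restraint. If a fully self-contained construction proves too intricate, a fallback is to cite the existence, from~\cite{JuedesLL1994} or~\cite{MoserS2017}, of a strongly deep sequence that is hyperimmune-free (hence not high) and not of DNC degree, since Theorem~\ref{thm-moser-stephan} would then directly forbid order-depth; but the priority-argument route above is the one I would attempt first, as it keeps the example internal to the framework already developed in this section.
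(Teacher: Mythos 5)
Your reduction is exactly the one the paper uses: by Theorem~\ref{thm-moser-stephan} it suffices to exhibit a strongly deep sequence that is neither high nor of DNC degree, and you correctly supply the two standard ingredients (an incomplete c.e.\ degree is never DNC by Arslanov's completeness criterion, and lowness rules out highness; your parenthetical about $X'\leq_T\hs$ versus $X'\equiv_T\hs$ is garbled, but the intended implication ``low $\Rightarrow$ not high'' is right). So the entire content of the theorem has been compressed into one existence claim: there is a low (or at least non-high), incomplete, strongly deep c.e.\ set. The paper discharges this by citation --- Downey, McInerney, and Ng \cite{DowneyMN2017} constructed precisely such a low, strongly deep sequence of c.e.\ degree --- whereas you propose to build one yourself by finite injury, and that is where the genuine gap lies. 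Your sketch does not work as stated: ``deep content'' cannot be drawn from ``a computable family of deep gadgets,'' because strings $\sigma$ with $K^t(\sigma)-K(\sigma)$ large cannot be computably exhibited --- $K$ is only approximable from above, and the initial segments of members of a deep $\Pi^0_1$ class form a co-c.e., not computable, tree. Moreover, the depth requirements are not finitary: strong depth demands $K^t(X\uh n)-K(X\uh n)\to\infty$ for \emph{every} computable $t$ and \emph{almost every} $n$, so each requirement needs infinitely many coding actions along the whole sequence, in permanent tension with the lowness restraints; this is not a routine finite-injury argument but a delicate construction that is exactly the content of \cite{DowneyMN2017}. Acknowledging that ``the delicate balancing act \ldots is where I expect the real difficulty to lie'' names the hole; it does not fill it.

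Your fallback citation is also not grounded: neither \cite{JuedesLL1994} nor \cite{MoserS2017} provides a hyperimmune-free (or otherwise non-high, non-DNC) strongly deep sequence. The relevant result of \cite{JuedesLL1994} points in the opposite direction (every high degree contains a strongly deep sequence), and \cite{MoserS2017} supplies the dichotomy you want to apply, not a witness escaping it. As written, then, your argument is a correct reduction plus an unproven existence statement. Replacing the priority-construction sketch with the citation to \cite{DowneyMN2017} closes the proof immediately, and at that point it coincides with the paper's.
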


\begin{proof}
Downey, MacInerny, and Ng  \cite{DowneyMN2017} constructed a low, deep sequence $A$ of c.e.\ degree.  As $A$ can neither be high nor of DNC degree (as it is incomplete), it follows from Theorem \ref{thm-moser-stephan} that $A$ is not order-deep. 
\end{proof}

We have seen by Theorem~\ref{thm-deep-members} that members of deep $\Pi^0_1$ classes are order-deep, and by Proposition~\ref{prop-deep-complex} that they are complex. We end this section by showing that these two properties alone are not enough to characterize members of deep $\Pi^0_1$ classes. 

\begin{thm}\label{thm:complex-order-deep-not-member}
There exists a sequence $X$ which is complex, order-deep, and not a member of any deep $\Pi^0_1$ class. 
\end{thm}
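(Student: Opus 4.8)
The plan is to construct $X$ by a direct diagonalization rather than to read it off the earlier separations, since the two mechanisms behind Theorem~\ref{thm-order-deep-nonmember} and Proposition~\ref{prop-incomplete-nonmember} are unavailable. By the theorem of Kjos-Hanssen, Merkle and Stephan every complex sequence has DNC degree; hence, by Arslanov's criterion, a complex sequence can be neither of incomplete c.e.\ degree nor truth-table below a set of non-DNC degree, so non-membership cannot be inferred from Proposition~\ref{prop-incomplete-nonmember}, and the fact that we want $X$ \emph{complex} forbids appealing to the contrapositive of Proposition~\ref{prop-deep-complex}. I would therefore build $X$ to meet three families of requirements at once: order-depth, complexity, and, for each $e$, an escape from the $e$-th $\Pi^0_1$ class that is guaranteed to succeed whenever that class is deep.

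Fix a weakly useful sequence $W$ that is a non-member (a high, incomplete, c.e.\ weakly useful sequence, as in the proof of Theorem~\ref{thm-order-deep-nonmember}) and a Martin-L\"of random $R$; note that $R$, being random, is not order-deep and so is itself a non-member. I would reserve a computable, value-independent \emph{sparse} set of positions in $X$ to carry the bits of $W$, a disjoint computable sparse set to carry the bits of $R$, and leave long contiguous ``free'' gaps between reserved positions. Because these positions are fixed in advance, $W\le_{tt}X$ and $R\le_{tt}X$ with computable use; the slow growth law for order-depth then makes $X$ order-deep (as $W$ is), while $R\le_{tt}X$ together with the fact that every Martin-L\"of random is complex forces $K(X\uh n)\ge^+ h(n)$ for an explicit computable order $h$, so $X$ is complex. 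Both conclusions are insensitive to the bits placed in the free gaps, since order-depth is upward closed under $\le_{tt}$ and the complexity bound only uses that $X\uh n$ recovers an initial segment of $R$. Coding the non-members $W$ and $R$ does not by itself force $X$ into any deep class, but it does not prevent membership either, which is precisely what the escapes are for.

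Non-membership is where the work lies. For the canonical co-c.e.\ tree $T^e$ of the $e$-th $\Pi^0_1$ class I would, using the free gaps, try to drive $X$ permanently off $T^e$: since $T^e$ is downward closed, a single initial segment with $X\uh m\notin T^e$ already gives $X\notin[T^e]$. The engine of the escape is that every deep class is Lebesgue-null—indeed $\lambda([T^e_n])\le^\times\M(T^e_n)\le 2^{-g_e(n)}\to 0$—so for deep $\P_e$ no cylinder is contained in $\P_e$, whence for any committed prefix $\sigma$ there is a finite extension off $T^e$, which the co-c.e.\ enumeration eventually reveals. Dovetailing these searches across all $e$ while steadily filling the reserved positions with $W$ and $R$ yields the construction. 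The main obstacle is to guarantee that, for every deep $\P_e$, such an off-$T^e$ extension can actually be realized within the free gaps subject to the reserved bits being pinned to $W$ and $R$: a counting argument shows that the ``coding fiber''—the set of sequences agreeing with the pinned $W$- and $R$-bits on the reserved positions—fails to be contained in $T^e$ exactly when $g_e$ outgrows the (logarithmic) density of reserved positions, so the fast-decaying deep classes are handled immediately, whereas the slowly decaying ones, where the reserved bits could a priori remain consistent with $T^e$ forever, require a finer argument. I expect to resolve this by choosing $R$ (and, if necessary, refining the reserved positions) generically enough to avoid the countably many ``alignment'' events; establishing the nullity of each such event, so that a single random filling escapes every deep class at once, is the crux of the proof.
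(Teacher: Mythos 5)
Your skeleton --- interleave an order-deep non-member (your weakly useful $W$) with Martin-L\"of random bits, get order-depth from $W \leq_{tt} X$ via the slow growth law and complexity from the random bits via Levin--Schnorr --- is essentially the paper's construction ($X = Y \oplus Z$ with $Y$ the non-member from Theorem~\ref{thm-order-deep-nonmember} and $Z$ random), and those two parts of your argument are sound. But the proposal has a genuine gap at exactly the point you flag as ``the crux'': non-membership is never established, and the diagonalization engine you propose for it does not work. Nullity of a deep class $\P_e$ guarantees off-tree extensions of any prefix \emph{in the full binary tree}, not within the coding fiber: nothing you prove rules out that a deep class contains the entire fiber above your current commitment, in which case the co-c.e.\ enumeration never reveals an escape and the strategy stalls forever on a class it must defeat. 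Your counting argument only excludes this when the depth order $g_e$ dominates the density $r$ of reserved positions, and since for any fixed computable order $r$ there are deep classes whose depth order grows more slowly than $r$ (depth requires only \emph{some} computable order), a fixed reservation scheme cannot beat all deep classes by counting. Your fallback --- choose the random filling ``generically enough'' so that each ``alignment event'' is null --- is precisely the unproven statement on which everything rests, not a known fact you may invoke.

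That statement is the paper's Lemma~\ref{lem-ae-deep}, and its proof needs a real idea rather than genericity bookkeeping: if positive measure many fillings $Z$ put $Y \oplus Z$ into \emph{some} deep class, then since there are only countably many deep $\Pi^0_1$ classes, a single deep class $\mathcal{C}$ with canonical tree $T$ captures measure $>\delta$; the set $\mathcal{D}$ of sequences $A$ such that for every $n$ at least $\delta \cdot 2^n$ strings $\tau$ keep $(A \uh n) \oplus \tau$ in $T$ is a $\Pi^0_1$ class containing $Y$; and comparing the product semimeasure $P(\sigma \oplus \tau) = \M(\sigma)\lambda(\tau)$ against $\M$ via universality yields $\M(S_n) \leq^\times \frac{1}{\delta\, h(n)}$ for the tree $S$ of $\mathcal{D}$, so $\mathcal{D}$ is deep --- contradicting that $Y$ is a non-member. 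With this lemma in hand, your entire scaffolding (separate sparse tracks for $W$ and $R$, free gaps, stage-by-stage escapes) collapses: $X = W \oplus Z$ for almost every $Z$ already has all three properties. Without it, or an equivalent measure-theoretic argument, the proposal does not go through.
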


We will need the following auxiliary lemma of independent interest.  Recall that the join $Y \oplus Z$ of two sequences $Y$ and $Z$ is the sequence obtained by interleaving their bits: $Y \oplus Z = Y(0)Z(0)Y(1)Z(1) \ldots$. Similarly, for two strings $\sigma$ and $\tau$ of the same length we can define $\sigma \oplus \tau$ in the same way. 

\begin{lem}\label{lem-ae-deep}
If a sequence $Y$ is not a member of any deep $\Pi^0_1$ class, then for almost every $Z$ (in the sense of Lebesgue measure), $Y \oplus Z$ is not a member of any deep $\Pi^0_1$ class.
\end{lem}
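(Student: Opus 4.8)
The plan is to prove the contrapositive: assuming that $\lambda(\{Z : Y\oplus Z\text{ is a member of some deep }\Pi^0_1\text{ class}\})>0$, I will produce a deep $\Pi^0_1$ class containing $Y$. Since there are only countably many $\Pi^0_1$ classes, the set $\{Z : Y\oplus Z\in\bigcup_{\P\ \mathrm{deep}}\P\}$ is the countable union of the sets $A_\P:=\{Z : Y\oplus Z\in\P\}$, so if the union has positive measure then $\lambda(A_\P)>0$ for some fixed deep class $\P$. Fix such a $\P$ with canonical co-c.e.\ tree $T$ and a computable order $g$ with $\M(T_n)\le 2^{-g(n)}$.

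The first key observation is that positive measure of $A_\P$ forces $\P$ to be non-negligible relative to $Y$. Indeed, for every $Z\in A_\P$ the member $Y\oplus Z$ of $\P$ satisfies $Y\oplus Z\leq_T Y\oplus Z$, so $A_\P\subseteq\{Z:(\exists V\in\P)\ V\leq_T Y\oplus Z\}$; hence the $Y$-relativized trimmed measure $\overline{\M}^Y(\P)$ is positive, which I would record as $\M^Y(T_n)\ge c$ for some $c>0$ and all $n$. Since the \emph{unrelativized} level measures satisfy $\M(T_n)\le 2^{-g(n)}\to 0$ effectively while the relativized ones stay bounded below, the oracle $Y$ must supply, at every level, at least $g(n)-O(1)$ bits of information about the deep content of $T_n$. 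Making this gap quantitative and transferring it to an intrinsic depth property of $Y$ is the heart of the argument.

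To convert this into membership of $Y$ in a deep class I would pass to the push-forward description of depth. Writing $\pi_0$ for the computable projection to even-indexed bits, a short calculation using the universality of $\M$ shows that the push-forward of $\M$ under $\pi_0$ is again universal, so that $\M(u)\asymp\sum_{|v|=|u|}\M(u\oplus v)$ for all strings $u$ (lift any lower semicomputable continuous semimeasure on $u$ to one that is uniform on the odd coordinates, apply universality, and push forward). This is exactly what makes the auxiliary class $\QQ=\{W:\pi_0(W)\in\P\}$ deep, since its level-$2n$ measure is $\asymp\M(T_n)$, and it is the natural tool for building a class around $Y$. The candidate class is the co-c.e.\ tree $S^q_n=\{u:\lambda(\{v:u\oplus v\in T_{2n}\})\ge q\}$ for a non-dyadic rational $q<\lambda(A_\P)$; one checks directly that $S^q$ is prefix-closed and co-c.e.\ and that $Y\in[S^q]$. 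The step I expect to be the main obstacle is showing that $[S^q]$, or a suitable variant, is deep, i.e.\ that $\sum_{u\in S^q_n}\M(u)$ is effectively small. The obstruction is that $\M(u)\asymp\sum_{v}\M(u\oplus v)$ sums over the \emph{entire} column, including interleavings $u\oplus v\notin T_{2n}$ whose $\M$-mass is a priori uncontrolled, whereas only the in-tree part $\sum_{u,v:\,u\oplus v\in T_{2n}}\M(u\oplus v)\le\M(T_{2n})$ is guaranteed small. I would attack this by first using the Lebesgue density theorem to pass to a cylinder on which the conditional measure of $A_\P$ is close to $1$, reducing to the case of essentially full columns (where, as in the clean example $\QQ$ above, no out-of-tree mass occurs), and then using the relativized lower bound $\M^Y(T_n)\ge c$ to discard the remaining out-of-tree mass. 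Once $[S^q]$ is shown to be deep, $Y\in[S^q]$ contradicts the hypothesis that $Y$ lies in no deep $\Pi^0_1$ class, which completes the contrapositive and hence the lemma.
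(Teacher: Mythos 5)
Your setup matches the paper's proof almost line for line: the contrapositive, the countability argument fixing a single deep class $\P$ with $\lambda(A_\P)>\delta$, and the candidate tree $S^q_n=\{u:\lambda(\{v:u\oplus v\in T_{2n}\})\ge q\}$ is exactly the class $\D$ the paper builds (and your checks that it is co-c.e., prefix-closed, and contains $Y$ are fine). But there is a genuine gap at precisely the step you flag as the main obstacle, and the obstacle is of your own making: you extract from universality only the two-sided column-sum relation $\M(u)\asymp\sum_{|v|=|u|}\M(u\oplus v)$, and then correctly observe that the upper bound on $\M(u)$ sums over the whole column, including out-of-tree mass. The fix is that the lifted semimeasure you mention only in passing, $P(\sigma\oplus\tau)=\M(\sigma)\lambda(\tau)$, should be applied \emph{pointwise and one-sidedly}: universality gives $\M(u)\,2^{-|v|}=P(u\oplus v)\leq^\times \M(u\oplus v)$ for every pair, and restricting this to in-tree pairs suffices. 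For $u\in S^q_n$ at least $q\cdot 2^n$ strings $v$ of length $n$ satisfy $u\oplus v\in T_{2n}$, hence
\[
q\sum_{u\in S^q_n}\M(u)\;\le\;\sum_{u\in S^q_n}\ \sum_{v:\,u\oplus v\in T_{2n}}\M(u)\,2^{-n}\;\leq^\times\;\M(T_{2n})\;\le\;2^{-g(2n)},
\]
so $\M(S^q_n)\leq^\times 2^{-g(2n)}/q$ and $[S^q]$ is deep. The out-of-tree mass never needs to be controlled, because the inequality only needs to run in one direction and only over pairs lying in $T_{2n}$. This single application of universality is the entirety of the paper's depth verification.

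By contrast, the patch you propose would not close the gap. The Lebesgue density theorem conditions on a cylinder in $Z$-space and therefore only makes the column above $Y$'s \emph{own} initial segments nearly full; depth of $[S^q]$ is a quantitative bound on $\M(S^q_n)$ ranging over \emph{all} $u\in S^q_n$, the vast majority of which have nothing to do with $Y$, so density gives no control where it is needed. Likewise, the relativized bound $\M^Y(T_n)\ge c$ is true but irrelevant: it concerns the $Y$-relativized semimeasure and imposes no upper bound on the unrelativized quantity $\M(S^q_n)$ that the definition of depth requires. So as written your argument is incomplete exactly at the depth verification; replacing the column-sum manipulation by the pointwise use of $P(\sigma\oplus\tau)=\M(\sigma)\lambda(\tau)$ turns your outline into the paper's proof.
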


\begin{proof}
We prove this lemma by contrapositive. Suppose that $Y$ is such that for positive measure many sequences $Z$, $Y \oplus Z$ is a member of a deep $\Pi^0_1$ class. If this is so, as there are only countably many deep $\Pi^0_1$ classes, this means that there is a fixed deep $\Pi^0_1$ class~$\mathcal{C}$ such that with probability $>\delta$ over $Z$ (with $\delta$ a positive rational), we have that $Y \oplus Z$ belongs to~$\mathcal{C}$. Consider the $\Pi^0_1$ class $\mathcal{D}$ consisting of sequences $A$ such that for any~$n$, there are at least $\delta \cdot 2^n$ strings $\tau$ such that $(A \uh n) \oplus \tau$ is in the canonical tree~$T$ of~$\mathcal{C}$. By definition, $Y$ belongs to $\mathcal{D}$. We claim that $\D$ is deep, which will prove the lemma. Since $\mathcal{C}$ is deep, there is a computable order $h$ such that $\M(T_{2n}) < 1/h(n)$ for all~$n$. Let $P$ be the continuous semimeasure defined on strings of even length by $P(\sigma \oplus \tau) = \M(\sigma) \lambda(\tau)$. If $S$ is the canonical co-c.e.\ tree of $\mathcal{D}$, then by definition of $\mathcal{D}$, we have 
\[
P(T_{2n}) \geq \M(S_n) \cdot \delta.
\]
By universality of $\M$, we also have $\P(T_{2n}) \leq^\times \M(T_{2n}) < 1/h(n)$. Putting these two inequalities together, it follows that 
\[
\M(S_n) \leq^\times \frac{1}{\delta \cdot h(n)}
\]
which shows $\mathcal{D}$ is a deep $\Pi^0_1$ class. 
\end{proof}

\begin{proof}[Proof of Theorem \ref{thm:complex-order-deep-not-member}]
Having proven Lemma \ref{lem-ae-deep}, take now $Y$ a sequence that is order-deep and not a member of any deep $\Pi^0_1$ class (whose existence was established in Theorem~\ref{thm-order-deep-nonmember}). Pick $Z$ at random and form the sequence $X = Y \oplus Z$. With probability~$1$ over $Z$:
\begin{itemize}
\item $X$ is complex. Indeed $K(X \uh 2n) \geq^+ K(Z \uh n) \geq^+ n$ by the Levin-Schnorr theorem.
\item $X$ is not a member of any deep $\Pi^0_1$ class by Lemma \ref{lem-ae-deep}.
\end{itemize}

Moreover, regardless of the value of $Z$, $X$ $\tt$-computes $Y$ which is order-deep, hence by the slow growth law for order-deep sequences, $X$ is itself order-deep. These three properties tell us that with probability~$1$ over $Z$, $X = Y \oplus Z$ is as desired. 

\end{proof}

\section{Strong Depth is Negligible}\label{sec-depth-negligible}

As observed in \cite{BienvenuP2016} every deep $\Pi^0_1$ class is negligible.  Since the collection of strongly deep sequences forms a strict superclass of the collection of members of all deep $\Pi^0_1$ classes by Theorems \ref{thm-deep-members} and  \ref{thm-order-deep-nonmember}, it is natural to ask whether the collection of strongly deep sequences is negligible, which we answer in the affirmative.

\begin{thm}\label{thm-deep-neg}
The class of strongly deep sequences is negligible. 
\end{thm}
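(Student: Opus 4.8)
The plan is to use the standard characterization of negligibility: a set $\A\subseteq\cs$ is negligible precisely when, for every Turing functional $\Phi$, the set $\{X : \Phi^X \text{ is total and } \Phi^X\in\A\}$ has Lebesgue measure zero (this is the probabilistic-computation form of negligibility unwound functional by functional). Since there are only countably many functionals, it suffices to fix a single $\Phi$ and show that $\lambda\{X : \Phi^X \text{ strongly deep}\}=0$. First I would pass to the pushforward: let $\nu$ be the lower semicomputable continuous semimeasure $\nu(\sigma)=\lambda\{X : \sigma\preceq\Phi^X\}$, so that $\nu\leq^\times\M$. Because the non-$\nu$-random sequences form a $\nu$-null set, and $\Phi^X$ is $\nu$-random for $\lambda$-almost every $X$ on which it is total, it is enough to prove the single-sequence statement: every $\nu$-Martin-L\"of random sequence is shallow. (A positive-measure set of strongly deep outputs would otherwise force some $\nu$-random output to be strongly deep.) This reduction is what converts an elusive ``almost every oracle'' claim into a statement about one fixed sequence, sidestepping the need to amplify a positive-probability conclusion to a full-measure one.

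Now fix a $\nu$-random $Y$; the goal is to exhibit a computable time bound $t$ with $\m(Y\uh n)\leq^\times\m^t(Y\uh n)$ for infinitely many $n$. Two ingredients drive the argument. By the Levin--Schnorr theorem for lower semicomputable semimeasures, $\nu$-randomness of $Y$ yields $\M(Y\uh n)\leq^\times\nu(Y\uh n)$, hence $\m(Y\uh n)\leq^\times\nu(Y\uh n)$ for all $n$. For the time-bounded side I would introduce the computable ``fast-output'' approximations $\nu^t(\sigma)=\lambda\{X : \Phi^{X}\uh|\sigma|\succeq\sigma \text{ within } t(|\sigma|)\text{ steps}\}$, which are computable continuous semimeasures increasing to $\nu$, and argue that $Y$ is quick infinitely often: for a suitable computable $t$ there are infinitely many $n$ with $\nu^t(Y\uh n)\geq^\times\nu(Y\uh n)$. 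Passing from the continuous $\nu^t$ to the discrete setting of $\m^t$ requires weighting by a computable discrete semimeasure $w$ on lengths, setting $q(\sigma)=w_{|\sigma|}\,\nu^t(\sigma)$ and invoking Lemma~\ref{lem-comp-semi}(i) to get $q\leq^\times\m^{t'}$ for a single computable $t'$. The loss incurred is exactly the factor $w_n$, which must be weighed against $\sum_{|\tau|=n}\m(\tau)\asymp\m(n)$. Chaining the inequalities gives $\m^{t'}(Y\uh n)\geq^\times w_n\,\nu^t(Y\uh n)\geq^\times w_n\,\m(Y\uh n)$, so shallowness at level $n$ follows as soon as $w_n\geq^\times\m(n)$ at a quick level $n$.

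The main obstacle is therefore twofold and concentrated in this final comparison. First, one needs a computable discrete semimeasure $w$ with $w_n\geq^\times\m(n)$ for infinitely many $n$; this is precisely the assertion that the integers are ``shallow infinitely often,'' which is equivalent to (and I would extract from) Bennett's theorem that no Martin-L\"of random sequence is strongly deep, via the identity $K(X\uh n)=n+K(n)+O(1)$ for random $X$. Second, and more delicately, one must ensure that the infinitely many quick levels and the infinitely many levels at which $w_n\geq^\times\m(n)$ can be made to coincide infinitely often; I expect the bulk of the technical work to lie in choosing the time bounds defining $\nu^t$ and the weights $w_n$ coherently, so that a single computable $t$ simultaneously witnesses quickness and a favorable length-weighting along a common infinite set of $n$. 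Granting these, $Y$ is shallow, every $\nu$-random output is shallow, and hence $\lambda\{X:\Phi^X\text{ strongly deep}\}=0$; taking the union over all Turing functionals $\Phi$ shows that the class of strongly deep sequences is negligible.
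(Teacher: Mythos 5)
Your reduction to a single functional and your identification of the Solovay-function ingredient (a computable weight $w$ on lengths with $w_n \geq^\times \m(n)$ infinitely often) both match the paper, but there are two genuine gaps in the middle of your argument, and the second one is fatal even if the first is repaired. First, the ``quickness'' claim is false for general Turing functionals: there need not be any computable $t$ with $\nu^t(Y\uh n)\geq^\times \nu(Y\uh n)$ infinitely often. Take $\Phi$ to be the delayed identity which outputs bit $n$ of its oracle only at the stage when $\Omega\uh n$ has settled; since the settling time of $\Omega\uh n$ exceeds every computable bound for almost every $n$, we get $\nu^t(Y\uh n)=0$ for almost all $n$ and every computable $t$, while $\nu=\lambda$ and the outputs on random oracles are shallow. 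So quickness fails while the theorem's conclusion holds, which shows your route through $\nu^t$ versus $\nu$ is the wrong mechanism. The paper avoids this entirely: it never bounds the functional's running time by a function of the output length, but instead \emph{searches} for a stage $s_n$ at which a $0.9$-measure of oracles has produced $n$ output bits (such a stage exists and the search halts, so $s_n$ is computable), and builds the computable discrete semimeasure $q(\sigma)=\lambda(C_\sigma)\cdot p(n)$ directly; the arbitrary-but-computable stages are absorbed via the semimeasure characterization of strong depth ($\m(Y\uh n)/q(Y\uh n)\to\infty$ for every computable discrete semimeasure $q$), with no comparison of time-truncated to untruncated pushforwards.

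Second, and more seriously, even at levels where quickness held and $w_n\geq^\times\m(n)$, your chain only yields $\m^{t'}(Y\uh n)\geq^\times w_n\,\nu(Y\uh n)\geq^\times \m(n)\,\m(Y\uh n)$, i.e.\ $K^{t''}(Y\uh n)-K(Y\uh n)\leq K(n)+O(1)$ at those $n$. That does \emph{not} contradict strong depth, since $K(n)\to\infty$: the difference is merely forced to grow slowly, not to stay bounded. To close the gap pointwise you would need the sharper bound $\m(Y\uh n)\leq^\times \m(n)\,\nu(Y\uh n)$, an ample-excess/maximal-complexity property that fails for general random sequences (already in the Lebesgue case, $K(X\uh n)\geq n+K(n)-O(1)$ infinitely often characterizes $2$-randomness, not Martin-L\"of randomness). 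This lost factor of $\m(n)$ is precisely why the paper does not argue along a single random output: instead it aggregates over the whole level, bounding $\sum_{\sigma\in B^d_n}\lambda(C_\sigma)$ from above by $O(1)\cdot\m(n)/(d\, p(n))$ using the identity $\sum_{|\sigma|=n}\m(\sigma)=^\times\m(n)$, and from below by $0.8$ using the $0.9$-measure hypothesis, so that depth of a positive measure of outputs forces $\m(n)/p(n)\to\infty$, contradicting $\liminf_n \m(n)/p(n)<\infty$ for the Solovay-derived $p$. (Your appeal to a.e.\ randomness preservation for the pushforward \emph{semimeasure} $\nu$ is also left unsubstantiated --- Levin--Schnorr as cited holds for computable measures, and randomness with respect to l.s.c.\ semimeasures is a delicate matter --- but this is secondary: the measure-aggregated counting argument makes the whole pointwise reduction unnecessary.)
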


\begin{proof}
For the sake of contradiction, assume there exists a functional $\Phi$ such that 
\[
\lambda \{X \ : \ \Phi^X \text{~ is deep ~}\}  > 0.9
\]
(where we choose this latter value without loss of generality by the Lebegue Density Theorem).
Let $p$ be a computable semimeasure such that $\liminf_n \frac{\m(n)}{p(n)} < \infty$. The existence of such a $p$ follows from the existence of \emph{Solovay functions} (see~\cite{BienvenuDNM2015}), which are functions $f \geq^+ K$ such that $\liminf_n f(n)-K(n)<\infty$. Setting $p(n) = 2^{-f(n)-c}$ with $f$ a Solovay function and $c$ large enough gives us the properties of $p$ we need.

We now define a computable discrete semimeasure as follows. For every~$n\in\omega$, effectively find a family of clopen sets $\{C_\sigma \ : \  |\sigma|=n\}$ such that $\Phi^X \succeq \sigma$ for all~$X \in C_\sigma$ and $\sum_{|\sigma|=n} \lambda(C_\sigma) >0.9$. Then, set for all~$\sigma$ of length~$n$:
\[
q(\sigma) = \lambda(C_\sigma) \cdot p(n)
\]
It is clear that $q$ is computable.  Moveover, $q$ is a discrete semimeasure, since 
\[
\sum_{\sigma\in\str} q(\sigma) = \sum_{n\in\omega} \sum_{|\sigma|=n} \lambda(C_\sigma) \cdot p(n) = \sum_{n\in\omega} p(n) \sum_{\sigma\in\str\colon|\sigma|=n} \lambda(C_\sigma) \leq  \sum_{n\in\omega} p(n) \leq 1.
\]

For any~$Y$ that is strongly deep, we must have $\frac{\m(Y \uh n)}{q(Y \uh n)} \rightarrow \infty$. For all $(n,d)$, define
\[
B^d_n = \{\sigma\ : \  |\sigma| = n ~\text{~ ~ and ~ ~ }~ \m(\sigma) > d \cdot q(\sigma)\}
\]
\noindent By our hypothesis on the functional~$\Phi$, this means that for every constant~$d$, for almost all~$n$, $\lambda (\{X \ : \  \Phi^X \uh n  \in B^d_n \})  > 0.9$.

Now, consider the quantity $\sum_{\sigma \in B^d_n} \lambda(C_\sigma)$. On the one hand,

\begin{align*}
 \sum_{\sigma \in B^d_n} \lambda(C_\sigma) & = \sum_{\sigma \in B^d_n}  \frac{q(\sigma)}{p(n)} \hspace{1cm}\text{(by definition of $q$)}\\
  & \leq \sum_{\sigma \in B^d_n}  \frac{\m(\sigma)}{d \cdot p(n)} \hspace{1cm}\text{(by definition of $B^d_n$)}\\
  & \leq \frac{1}{d \cdot p(n)} \sum_{|\sigma|=n} \m(\sigma)\\
  & \leq \frac{\m(n) \cdot O(1)}{d \cdot p(n)} \hspace{1cm} \text{(using the identity $\sum_{|\sigma|=n} \m(\sigma) =^\times \m(n)$)}\\
 \end{align*}

On the other hand, for almost all~$n$:

\begin{align*}
 \sum_{\sigma \in B^d_n} \lambda(C_\sigma) & \geq \lambda(\{ X \ : \ \Phi^X \uh n \in B^d_n  \}) -0.1 \hspace{1cm}\text{(because $\lambda(\bigcup_{|\sigma|=n} C_\sigma) > 0.9$)}\\
   & \geq 0.9 - 0.1 \\
   & \geq 0.8
 \end{align*}

Putting the two together, we have established that for all~$d$, for almost all~$n$, $\frac{\m(n)}{p(n)} > d/O(1)$, i.e., $\lim_n \frac{\m(n)}{p(n)} = \infty$. This contradicts the choice of $p$.
\end{proof}

Note, by contrast, that the collection of weakly deep sequences is not negligible.  Indeed, as shown by Muchnik et al.\ \cite{MuchnikSU1998}, no 1-generic sequence is Martin-L\"of random with respect to a computable measure, and thus every 1-generic is weakly deep.  Moreover, as shown by Kautz \cite{Kautz1991}, every 2-random sequence computes a 1-generic, and hence the collection of 1-generics is not negligible.

As the collection of strongly deep sequences is negligible, it is worth asking whether the collection of sequences that are strongly deep with respect to one fixed computable time bound is negligible.  We first introduce some notation.  For a computable time bound $t$ and $c\in\omega$, let $D^t_c(n)=\{X\in\cs\colon K^t(X\uh n) - K(X\uh n)\geq c\}$, which is clopen uniformly in $n$, hence $\bigcap_{n\geq m}D^t_c(n)$ is a $\Pi^0_1$ class.  In addition, we set $D^t_c=\bigcup_{m\in\omega}\bigcap_{n\geq m}D^t_c(n)$.  Then we have:

\begin{thm}
Let $t$ be a computable time bound and $c\in\omega$.  Then $D^t_c$ is not negligible and hence does not consist entirely of strongly deep sequences.
\end{thm}

\begin{proof}
Suppose on the contrary that $D^t_c$ is negligible for some computable time bound $t$ and $c\in\omega$.  Then for each $m\in\omega$, $\bigcap_{n\geq m}D^t_c(n)$ is a negligible $\Pi^0_1$ class.  As shown in \cite[Theorem 5.2]{BienvenuP2016}, no weakly 2-random sequence can compute a member of a negligible $\Pi^0_1$ class, hence no weakly 2-random sequence can compute a member of $\bigcap_{n\geq m}D^t_c(n)$ (recall that $X\in\cs$ is weakly 2-random if $X$ is not contained in any $\Pi^0_2$ of Lebesgue measure zero).  It follows that no weakly 2-random sequence can compute a member of $D^t_c$.  However, there is some weakly 2-random sequence that computes a sequence of high Turing degree \cite{Kautz1991}, and as shown by Juedes, Lathrop, and Lutz \cite{JuedesLL1994}, every high degree contains a strongly deep sequence.  In particular, every high degree contains an element of $D^t_c$, and thus there is some weakly 2-random sequence that computes a member of $D^t_c$, a contradiction.  Note further that under the assumption that $D^t_c$ consists entirely of strongly deep sequences, then by Theorem \ref{thm-deep-neg}, it would follow that $D^t_c$ is negligible, which we have shown cannot hold.
\end{proof}

\section{Variants of Strong Depth}\label{sec-depth-variants}

\subsection{Depth via continuous semimeasures}

In \cite{BienvenuP2016}, the authors show that the definition of a deep $\Pi^0_1$ class can be equivalently defined simply by replacing the universal continuous semimeasure $\M$ with the universal discrete semimeasure $\m$:  a $\Pi^0_1$ class $\P$ with canonical co-c.e. tree $T$ is deep if and only if there is a computable order $h$ such that $\m(T_n) < 2^{-h(n)}$ for all $n\in\omega$, if and only if there is a computable order $f$ such that $\m(T_{f(n)}) < 2^{-n}$ for all $n\in\omega$. Given the characterization of deep sequences in terms of discrete semimeasures in Section \ref{sec-slow-growth}, it is natural to ask whether we can equivalently characterize strong depth by replacing the discrete semimeasures in the definition with continuous semimeasures.  We show that, in this case, we only obtain a characterization of weak depth.

\begin{thm}
$X\in\cs$ is weakly deep if and only if for every computable continuous semimeasure $P$,
\[
\frac{\mathbf{M}(X\uh n)}{P(X\uh n)}\rightarrow\infty.
\]
\end{thm}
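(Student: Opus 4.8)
The plan is to route through the folklore characterization recalled in Section~\ref{sec-background}: $X$ is weakly deep if and only if $X$ is not Martin-L\"of random with respect to any computable measure. The second ingredient is the a priori form of the Levin--Schnorr theorem, namely that for a computable measure $\mu$ the sequence $X$ is $\mu$-Martin-L\"of random if and only if $\sup_n \M(X\uh n)/\mu(X\uh n)<\infty$ (equivalently, $\KA(X\uh n)\geq -\log\mu(X\uh n)-O(1)$). Note that every computable measure is in particular a computable continuous semimeasure, so it is an admissible choice of $P$ in the statement.

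For the easy direction, that a sequence satisfying the displayed condition is weakly deep, I would argue by contraposition. If $X$ is not weakly deep, then $X$ is $\mu$-random for some computable measure $\mu$; taking $P=\mu$, the Levin--Schnorr theorem gives that $\M(X\uh n)/\mu(X\uh n)$ is bounded, so it does not tend to infinity, and the displayed condition fails. This direction is immediate.

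The substantial direction is the converse. Assuming the displayed condition fails, I must produce a computable measure for which $X$ is random. So suppose $P$ is a computable continuous semimeasure with $\liminf_n \M(X\uh n)/P(X\uh n)<\infty$; equivalently, there is a constant $c$ with $\M(X\uh n)\leq c\cdot P(X\uh n)$ for infinitely many $n$. After mixing $P$ with a negligible Lebesgue component to guarantee positivity, I would normalize $P$ to a computable measure $\mu$ by conditional splitting, setting $\mu(\sigma i)=\mu(\sigma)\cdot P(\sigma i)/(P(\sigma 0)+P(\sigma 1))$. A telescoping computation, using that the semimeasure defect $P(\sigma)-P(\sigma 0)-P(\sigma 1)$ is nonnegative, shows $\mu(X\uh n)\geq P(X\uh n)$ for every $n$, and hence $\M(X\uh n)\leq c\cdot\mu(X\uh n)$ for the same infinitely many $n$.

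The main obstacle is the final step: upgrading this \emph{infinitely-often} domination by a computable measure $\mu$ to genuine randomness. Since the deficiency $\M(X\uh n)/\mu(X\uh n)$ returns below $c$ infinitely often but may spike in between, $X$ itself need not be $\mu$-random, so I cannot simply use $\mu$. Instead I would build a new computable measure $\nu$ that freezes its mass along $X$ throughout the bad stretches and only recommits at the co-c.e.\ set of good positions $\{\sigma:\M(\sigma)\leq c\cdot\mu(\sigma)\}$; the key point is that the total mass the semimeasure $P$ can spend is at most $1$, which should cap the cumulative loss incurred by hedging over the (a priori unknown) direction of $X$ during each stretch and yield $\M(X\uh n)\leq^\times\nu(X\uh n)$ for \emph{all} $n$. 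By Levin--Schnorr this makes $X$ a $\nu$-random sequence, so $X$ is not weakly deep. Keeping $\nu$ a bona fide computable measure while controlling this cumulative loss is the delicate part, and is where I expect the real work to lie; alternatively, this infinitely-often-to-always upgrade can likely be extracted from the machinery relating lower semicomputable semimeasures to randomness with respect to computable measures developed in \cite{BienvenuHPS2014}.
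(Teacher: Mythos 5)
Your easy direction and your normalization of $P$ to a computable measure $\mu \geq P$ (up to a multiplicative constant) match the paper's proof, and you are right that the only serious point is the quantifier at the crux: the negation of $\M(X\uh n)/P(X\uh n)\rightarrow\infty$ gives domination $\M(X\uh n)\leq c\cdot\mu(X\uh n)$ only \emph{infinitely often}. But the obstacle you build your final step around is not actually an obstacle, and your proposed repair is where the gap lies. Your claim that ``$X$ itself need not be $\mu$-random'' when the deficiency returns below $c$ infinitely often is false: for a computable measure $\mu$ (positive on cylinders, as yours is after mixing with $\lambda$), the ratio $\sigma\mapsto \M(\sigma)/\mu(\sigma)$ is a lower semicomputable $\mu$-supermartingale that multiplicatively dominates every lower semicomputable $\mu$-supermartingale $v$ (since $v\cdot\mu$ is a lower semicomputable continuous semimeasure, hence $\leq^\times \M$). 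Applying this to a universal supermartingale with the savings property, one gets the standard fact that $\M(X\uh n)/\mu(X\uh n)\rightarrow\infty$ for \emph{every} non-$\mu$-random $X$; contrapositively, $\liminf_n \M(X\uh n)/\mu(X\uh n)<\infty$ already implies $X$ is $\mu$-Martin-L\"of random. So you can ``simply use $\mu$'' after all, and your frozen-mass measure $\nu$ — the step you yourself flag as incomplete and delicate, and which you leave unproved — is unnecessary. As written, your proof is not complete: its decisive step is a speculation about a construction whose feasibility you do not establish, when a one-line appeal to the savings trick closes it.

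For comparison: the paper's proof sidesteps this entirely by taking the hypothesis in the form ``$\M(X\uh n)< c\cdot P(X\uh n)$ for all $n$'' and invoking the Levin--Schnorr theorem for a priori complexity directly; strictly speaking that only treats boundedness rather than the literal negation of convergence to infinity, so the same folklore fact (bounded-liminf deficiency implies randomness) is implicitly doing the work there too. Your attention to the infinitely-often issue is thus more careful than the paper's write-up, and your measure construction (conditional renormalization after mixing in $\lambda$, versus the paper's redistribution of the defect $P(\sigma)-P(\sigma 0)-P(\sigma 1)$) is a legitimate variant; the proposal fails only at the final upgrade, which should be resolved by the supermartingale argument above rather than by a new measure construction.
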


\begin{proof}
Suppose that $X$ is not weakly deep.  Then there is some computable measure $\mu$ such that $X$ is $\mu$-Martin-L\"of random.  By the Levin-Schnorr theorem for a priori complexity with respect to the measure $\mu$ (implicit in \cite{Levin1973}), there is some $c$ such that
\[
\mathit{KA}(X\uh n)\geq -\log\mu(X\uh n)-c
\]
for all $n\in\omega$.  Equivalently, we have
\[
\mathbf{M}(X\uh n)\leq 2^c\cdot \mu(X\uh n)
\]
Since every computable measure is a computable semimeasure, the conclusion follows.

For the other direction, suppose that there is some computable, continuous semimeasure and some $c\in\omega$ such that for all $n\in\omega$, 
\begin{equation}\label{eq1}
\frac{\mathbf{M}(X\uh n)}{P(X\uh n)}<c.
\end{equation}
From $P$, we can define a computable measure $\mu$ as follows.  First, we define a function $g:\str\rightarrow\str$ by setting
\[
g(\sigma)=P(\sigma)-(P(\sigma0)+P(\sigma1))
\]
for every $\sigma\in\str$.  Clearly $g$ is computable since $P$ is.  Next we define a computable measure $\mu$ on $\cs$ by setting
\[
\mu(\sigma)=P(\sigma)+\sum_{\tau\prec\sigma}2^{|\tau|-|\sigma|}g(\tau).
\]
One can readily verify that $\mu$ is a measure (see the proof of Proposition 3.5 in \cite{BienvenuHPS2014}) and $P(\sigma)\leq \mu(\sigma)$ for all $\sigma\in\str$.
From Equation (\ref{eq1}) we can derive
\[
\mathit{KA}(X\uh n)\geq -\log P(X\uh n)-O(1)\geq -\log\mu(X\uh n)-O(1).
\]
It thus follows from the Levin-Schnorr theorem for a priori complexity that $X$ is $\mu$-Martin-L\"of random and hence is not weakly deep.
\end{proof}

We define a sequence to be \emph{$\mathit{KA}$-deep} if
\[
\mathit{KA}^t(X \uh n) - \mathit{KA}(X \uh n) \rightarrow \infty
\]
where $\mathit{KA}^t:=-\log \M^t$ as discussed in Section \ref{subsec-isc}.  Recall further from Section \ref{subsec-isc} that for every computable continuous semimeasure $P$, there is some computable time bound $t$ such that $P\leq^\times \mathbf{M}^t$.  It is thus straightforward to show that for $X\in\cs$ and every computable continuous semimeasure $P$,
\[
\frac{\mathbf{M}(X\uh n)}{P(X\uh n)}\rightarrow\infty
\]
if and only if for every computable time bound $t$,
\[
\frac{\mathbf{M}(X\uh n)}{\M^t(X\uh n)}\rightarrow\infty
\]
(see the proof of \cite[Lemma 2.6]{BienvenuDM2023} for discrete semimeasures which directly translates to the case of continuous semimeasures).  Thus we can conclude:

\begin{cor}
$X\in\cs$ is $\mathit{KA}$-deep if and only if $X$ is weakly deep.
\end{cor}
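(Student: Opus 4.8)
The plan is to chain the definition of $\mathit{KA}$-depth through the two equivalences established immediately above. First I would unfold the definitions $\mathit{KA}(\sigma) = -\log\M(\sigma)$ and $\mathit{KA}^t(\sigma) = -\log\M^t(\sigma)$, so that the quantity appearing in the definition of $\mathit{KA}$-depth becomes
\[
\mathit{KA}^t(X\uh n) - \mathit{KA}(X\uh n) = \log\frac{\M(X\uh n)}{\M^t(X\uh n)}.
\]
Since $\log$ is increasing and tends to $\infty$ with its argument, the left-hand side tends to $\infty$ exactly when the ratio $\M(X\uh n)/\M^t(X\uh n)$ tends to $\infty$. Thus $X$ is $\mathit{KA}$-deep if and only if $\M(X\uh n)/\M^t(X\uh n) \to \infty$ for every computable time bound~$t$.

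Next I would invoke the equivalence recorded in the discussion preceding the corollary, namely that $\M(X\uh n)/\M^t(X\uh n) \to \infty$ holds for every computable time bound~$t$ if and only if $\M(X\uh n)/P(X\uh n) \to \infty$ holds for every computable continuous semimeasure~$P$. This transforms the characterization of $\mathit{KA}$-depth into a statement quantified over all computable continuous semimeasures. Finally, applying the Theorem proved just above, this last condition is precisely the condition defining weak depth, which yields the desired equivalence.

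The only step carrying genuine content is the passage between the two quantifications in the second paragraph. One direction is trivial, since each $\M^t$ is itself a computable continuous semimeasure; the reverse direction relies on the fact, recalled in Section~\ref{subsec-isc}, that for every computable continuous semimeasure~$P$ there is a computable time bound~$t$ with $P \leq^\times \M^t$, so that domination by an arbitrary~$P$ is already captured by domination by some~$\M^t$. This is the continuous analogue of \cite[Lemma 2.6]{BienvenuDM2023}, and I expect it to be the main (albeit routine) point to verify; all remaining steps are formal rewritings of the definitions.
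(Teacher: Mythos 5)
Your proposal is correct and follows essentially the same route as the paper: the paper likewise reduces $\mathit{KA}$-depth to the condition $\M(X\uh n)/\M^t(X\uh n)\rightarrow\infty$ for all computable time bounds $t$, observes (citing the continuous analogue of \cite[Lemma 2.6]{BienvenuDM2023} and the domination $P\leq^\times\M^t$) that this is equivalent to quantifying over all computable continuous semimeasures $P$, and then applies the preceding theorem characterizing weak depth. You correctly identified that the quantifier exchange is the only step with content, with one direction trivial since each $\M^t$ is itself a computable continuous semimeasure.
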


\subsection{Depth and monotone complexity}

We can obtain a similar characterization of weak depth in terms of monotone complexity.  Define a sequence to be \emph{$\mathit{Km}$-deep} if
\[
\mathit{Km}^t(X \uh n) - \mathit{Km}(X \uh n) \rightarrow \infty
\]

This notion was studied by Schnorr and Fuchs \cite{SchnorrF1977}, who used the term \emph{superlearnable} to refer to the failure of being $\mathit{Km}$-deep.  In particular, Schnorr and Fuchs proved that a sequence is superlearnable if and only if it is Martin-L\"of random with respect to a computable measure.    Given that a sequence is weakly deep if and only if it is not Martin-L\"of random with respect to a computable measure, we have the following.

\begin{thm}
$X\in\cs$ is $\mathit{Km}$-deep if and only if $X$ is weakly deep.
\end{thm}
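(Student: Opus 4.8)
The plan is to reduce the statement almost entirely to the theorem of Schnorr and Fuchs, which is explicitly invoked in the excerpt: a sequence is \emph{superlearnable} (i.e.\ fails to be $\Km$-deep) if and only if it is Martin-L\"of random with respect to some computable measure. Since the excerpt has already recorded the folklore equivalence that a sequence is weakly deep if and only if it is not Martin-L\"of random with respect to any computable measure, the two biconditionals compose directly. Concretely, I would argue by contraposition on both sides simultaneously: $X$ fails to be $\Km$-deep iff $X$ is superlearnable (unwinding the definition, $\Km$-depth is exactly the negation of superlearnability), iff $X$ is $\mu$-Martin-L\"of random for some computable measure $\mu$ (Schnorr--Fuchs), iff $X$ is not weakly deep (the alternative characterization of weak depth). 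Negating the outermost terms gives $\Km$-deep iff weakly deep.

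The only genuine content to verify is that the definition of $\Km$-deep given here, namely $\Km^t(X\uh n)-\Km(X\uh n)\to\infty$ for every computable time bound $t$, coincides with the negation of Schnorr and Fuchs' notion of superlearnability. I would therefore first make explicit that $X$ is \emph{not} $\Km$-deep precisely when there is some fixed computable time bound $t$ and some constant $c$ such that $\Km^t(X\uh n)-\Km(X\uh n)\leq c$ for infinitely many $n$; this is just the logical negation of the defining limit. Then I would check that this resource-bounded condition is what Schnorr and Fuchs call superlearnability, either by citing their formulation directly or by noting that a bounded gap between time-bounded and unbounded monotone complexity along $X$ is exactly the statement that the bits of $X$ can be predicted in the limit by a computable monotone strategy up to a bounded loss. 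This matching of definitions is the one place where care is needed, and it is the main (though modest) obstacle: the paper uses the shorthand ``superlearnable'' without reproducing the original definition, so the proof hinges on the reader accepting that the present $\Km^t$-based formulation agrees with the historical one.

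Having settled that, the proof is a short chain of equivalences requiring no new construction. I would present it in two directions for clarity. For one direction: if $X$ is weakly deep, then $X$ is not $\mu$-ML-random for any computable $\mu$, so by Schnorr--Fuchs $X$ is not superlearnable, i.e.\ $X$ is $\Km$-deep. For the converse: if $X$ is $\Km$-deep then $X$ is not superlearnable, so by Schnorr--Fuchs $X$ is not $\mu$-ML-random for any computable $\mu$, whence $X$ is weakly deep. Because every step is a cited equivalence, no estimates or semimeasure manipulations are required, and the argument parallels exactly the preceding $\KA$-deep corollary, differing only in that the underlying complexity is monotone rather than a priori and the randomness characterization is supplied by Schnorr--Fuchs rather than by the Levin--Schnorr theorem for $\KA$.
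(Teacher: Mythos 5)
Your proposal is correct and follows essentially the same route as the paper: the paper likewise obtains the theorem by composing the Schnorr--Fuchs result (superlearnable $=$ not $\Km$-deep $=$ Martin-L\"of random with respect to some computable measure) with the stated characterization of weak depth as failure of randomness with respect to every computable measure. Your extra care in checking that the $\Km^t$-based formulation matches Schnorr and Fuchs' original notion of superlearnability is a reasonable addition, but the paper simply asserts this identification when citing \cite{SchnorrF1977}.
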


\bibliographystyle{alpha}
\bibliography{bridging_depth}

\begin{thebibliography}{BDNM15}

\bibitem[BDM23]{BienvenuDM2023}
Laurent Bienvenu, Valentino {Delle Rose}, and Wolfgang Merkle.
\newblock Relativized depth.
\newblock {\em Theoretical Computer Science}, 949:113694, 2023.

\bibitem[BDNM15]{BienvenuDNM2015}
Laurent Bienvenu, Rodney~G. Downey, Andr{\'{e}} Nies, and Wolfgang Merkle.
\newblock Solovay functions and their applications in algorithmic randomness.
\newblock {\em Journal of Computer and System Sciences}, 81(8):1575--1591,
  2015.

\bibitem[Ben95]{Bennett1995}
Charles~H. Bennett.
\newblock Logical depth and physical complexity.
\newblock In {\em The Universal Turing Machine A Half-Century Survey}, pages
  207--235. Springer, 1995.

\bibitem[BHPS14]{BienvenuHPS2014}
Laurent Bienvenu, Rupert H{\"o}lzl, Christopher~P. Porter, and Paul Shafer.
\newblock Randomness and semi-measures.
\newblock preprint, arXiv:1310.5133, 2014.

\bibitem[BP12]{BienvenuP2012}
Laurent Bienvenu and Christopher~P. Porter.
\newblock Strong reductions in effective randomness.
\newblock {\em Theoretical Computer Science}, 459:55--68, 2012.

\bibitem[BP16]{BienvenuP2016}
Laurent Bienvenu and Christopher~P Porter.
\newblock Deep classes.
\newblock {\em Bulletin of Symbolic Logic}, 22(2):249--286, 2016.

\bibitem[DMN17]{DowneyMN2017}
Rod Downey, Michael McInerney, and Keng~Meng Ng.
\newblock Lowness and logical depth.
\newblock {\em Theoretical Computer Science}, 702:23--33, 2017.

\bibitem[HP17]{HolzlP2017}
Rupert H{\"o}lzl and Christopher~P Porter.
\newblock Randomness for computable measures and initial segment complexity.
\newblock {\em Annals of Pure and Applied Logic}, 168(4):860--886, 2017.

\bibitem[JLL94]{JuedesLL1994}
David~W Juedes, James~I Lathrop, and Jack~H Lutz.
\newblock Computational depth and reducibility.
\newblock {\em Theoretical Computer Science}, 132(1-2):37--70, 1994.

\bibitem[Kau91]{Kautz1991}
Steven~M. Kautz.
\newblock {\em Degrees of random sequences}.
\newblock PhD thesis, Cornell University, 1991.

\bibitem[KHMS11]{KjosHanssenMS2011}
Bj{\o}rn Kjos-Hanssen, Wolfgang Merkle, and Frank Stephan.
\newblock Kolmogorov complexity and the recursion theorem.
\newblock {\em Transactions of the American Mathematical Society},
  363(10):5465--5480, 2011.

\bibitem[Lev73]{Levin1973}
Leonid~A Levin.
\newblock On the notion of a random sequence.
\newblock In {\em Soviet. Math. Dokl.}, volume~14, pages 1413--1416, 1973.

\bibitem[Lev84]{Levin1984}
Leonid Levin.
\newblock Randomness conservation inequalities; information and independence in
  mathematical theories.
\newblock {\em Information and Control}, 61:15--37, 1984.

\bibitem[Lev13]{Levin2013}
Leonid Levin.
\newblock Forbidden information.
\newblock {\em Journal of the ACM}, 60(2):9, 2013.

\bibitem[LZ70]{LevinZ1970}
L.~A. Levin and A.~K. Zvonkin.
\newblock The complexity of finite objects and the basing of the concepts of
  information and randomness on the theory of algorithms.
\newblock {\em Uspehi Mat. Nauk}, 25(6(156)):85--127, 1970.

\bibitem[MS17]{MoserS2017}
Philippe Moser and Frank Stephan.
\newblock Depth, highness and dnr degrees.
\newblock {\em Discrete Mathematics \& Theoretical Computer Science},
  19(special issue FCT'15), 2017.

\bibitem[MSU98]{MuchnikSU1998}
Andrei~A Muchnik, Alexei~L Semenov, and Vladimir~A Uspensky.
\newblock Mathematical metaphysics of randomness.
\newblock {\em Theoretical Computer Science}, 207(2):263--317, 1998.

\bibitem[Nie09]{Nies2009}
Andr{\'e} Nies.
\newblock {\em Computability and randomness}, volume~51.
\newblock Oxford University Press, 2009.

\bibitem[SF77]{SchnorrF1977}
Claus-Peter Schnorr and P~Fuchs.
\newblock General random sequences and learnable sequences.
\newblock {\em The Journal of Symbolic Logic}, 42(3):329--340, 1977.

\end{thebibliography}

\newpage
\appendix
\section{Proof of Lemma~\ref{lem-alt-order-deep}}

Working towards the proof of Lemma~\ref{lem-alt-order-deep}, we first establish the following result. 

\begin{lem}
Let $E$ be a computable function which maps every string to a finite set of strings. For every time bound $t$, there exists a time bound~$s$ such that for every $\sigma\in\str$ and every $\tau \in E(\sigma)$: 
\[
K^s(\sigma) + K^s(\tau\mid\sigma) \leq^+ K^t(\sigma,\tau).
\]
\end{lem}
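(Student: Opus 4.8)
The plan is to prove this via (time-bounded) semimeasures rather than a direct counting argument, exploiting the resource-bounded coding theorem packaged in Lemma~\ref{lem-comp-semi} and Theorem~\ref{thm-comp-semi}. The idea is to factor a universal time-bounded semimeasure on pairs into a marginal on first coordinates and a conditional on second coordinates, and then to read off the two complexities $K^s(\sigma)$ and $K^s(\tau \mid \sigma)$ from these two factors. The point worth stressing is that this inequality goes in the \emph{hard} direction of symmetry of information and fails up to $O(1)$ in the unbounded setting; it is precisely the extra time budget $s \gg t$ that rescues it.

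Concretely, first I would invoke Lemma~\ref{lem-comp-semi}(iii) to fix, from the given $t$, a computable time bound $t'$ with $2^{-K^t} \leq^\times \mathbf{m}^{t'}$, so that $\mathbf{m}^{t'}(\sigma,\tau) \geq^\times 2^{-K^t(\sigma,\tau)}$ on pairs. Then I would define the marginal
\[
W(\sigma) = \sum_{\tau' \in E(\sigma)} \mathbf{m}^{t'}(\sigma,\tau'),
\]
set $d_1(\sigma) = W(\sigma)$, and define the conditional distribution $d_2(\tau \mid \sigma) = \mathbf{m}^{t'}(\sigma,\tau)/W(\sigma)$ for $\tau \in E(\sigma)$ (the denominator being positive whenever $K^t(\sigma,\tau)<\infty$, which is the only case of interest). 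The finiteness and uniform computability of $E(\sigma)$ guarantee that both objects are computable within a fixed time budget: $d_1$ is a genuine discrete semimeasure because $\sum_\sigma W(\sigma) \leq \sum_{w} \mathbf{m}^{t'}(w) \leq 1$, and $d_2(\cdot \mid \sigma)$ is a probability distribution supported on the finite set $E(\sigma)$. The key algebraic point is that the factors multiply back to the joint value, $d_1(\sigma)\, d_2(\tau \mid \sigma) = \mathbf{m}^{t'}(\sigma,\tau)$.

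With the factorization in hand, I would apply the resource-bounded coding theorem in the direction ``computable semimeasure dominates $\Rightarrow$ small time-bounded complexity'': combining Lemma~\ref{lem-comp-semi}(i) with Theorem~\ref{thm-comp-semi} yields a time bound $s_1$ with $K^{s_1}(\sigma) \leq^+ -\log d_1(\sigma)$, and the conditional analogue of the same argument yields a time bound $s_2$ with $K^{s_2}(\tau \mid \sigma) \leq^+ -\log d_2(\tau \mid \sigma)$. Taking $s = \max(s_1, s_2)$ (so that $K^s \leq K^{s_i}$) and summing the two inequalities gives
\[
K^s(\sigma) + K^s(\tau \mid \sigma) \leq^+ -\log\bigl(d_1(\sigma)\, d_2(\tau \mid \sigma)\bigr) = -\log \mathbf{m}^{t'}(\sigma,\tau) \leq^+ K^t(\sigma,\tau),
\]
where the final step uses the choice of $t'$. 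Since $s$ depends only on $t$ and $E$, this is exactly the desired bound.

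The main obstacle I anticipate is purely one of having the right tool in hand: the second inequality requires a \emph{conditional} form of the time-bounded coding theorem, namely that a family $d_2(\cdot \mid \sigma)$ of semimeasures computable uniformly in $\sigma$ within a fixed time budget is dominated by $2^{-K^s(\cdot \mid \sigma)}$ for some computable $s$. This is not stated explicitly in the excerpt, but it follows from the same construction as Theorem~\ref{thm-comp-semi} with the conditioning string $\sigma$ carried along as an auxiliary input, so I would record it as a parametrized version of that theorem. The only other point requiring care is checking that evaluating $W(\sigma)$ and $d_2(\tau \mid \sigma)$ is genuinely bounded-time computable and that the resulting time bounds are total and non-decreasing; this is exactly where the hypothesis that $E(\sigma)$ is finite and computable uniformly in $\sigma$ is used, and it is the reason the statement must restrict to $\tau \in E(\sigma)$ rather than allow arbitrary $\tau$ (an unrestricted marginal $W(\sigma)$ would be an infinite sum and need not be computable in bounded time).
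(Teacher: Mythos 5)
Your proof is correct and takes essentially the same route as the paper's: the identical factorization of $\mathbf{m}^{t'}(\sigma,\tau)$ into the marginal $\sum_{\rho\in E(\sigma)}\mathbf{m}^{t'}(\sigma,\rho)$ and the conditional distribution on $E(\sigma)$, followed by the same appeals to Lemma~\ref{lem-comp-semi}(iii), Lemma~\ref{lem-comp-semi}(i), and Theorem~\ref{thm-comp-semi} to convert both factors back into time-bounded complexities. The one place you are more explicit than the paper---the need for a conditional, uniformly parametrized version of the time-bounded coding theorem for $K^s(\cdot\mid\sigma)$---is used silently in the paper's proof as well, so flagging it is a precision, not a deviation.
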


\begin{proof}
Given the computable time bound $t$, $t'$ be a computable time bound such that $2^{-K^t}\leq^\times \mathbf{m}^{t'}$ as guaranteed by Lemma \ref{lem-comp-semi}(iii).
Let $p$ be the conditional discrete semimeasure defined by
\[
p(\tau \mid \sigma) = \frac{\m^{t'}(\sigma,\tau)}{\sum_{\rho \in E(\sigma)} \m^{t'}(\sigma,\rho)} 
\]
when $\tau \in E(\sigma)$ and $p(\tau\mid \sigma)=0$ otherwise (note that a conditional discrete semimeasure $p(\cdot\mid\cdot )$ must satisfy the condition that $p(\cdot\mid\sigma)$ is a discrete semimeasure for each $\sigma\in\str$). It is clear that $p$ is indeed a conditional discrete semimeasure and that it is computable. The denominator in this expression $q(\sigma) := {\sum_{\rho \in E(\sigma)} \m^{t'}(\sigma,\rho)}$ is also a computable discrete semimeasure since $\sum_\sigma q(\sigma) \leq \sum_\sigma \sum_\rho \m^{t'}(\sigma,\rho) \leq \sum_\sigma \sum_\rho \m(\sigma,\rho) \leq 1$. \\

We thus have the identity
\[
\m^{t'}(\sigma,\tau) = p(\tau\mid\sigma) \cdot q(\sigma)
\]
with $p,q$ computable discrete semimeasures. By Lemma~\ref{lem-comp-semi}, there is a computable time bound $s'$ such that $p,q \leq \m^s$ and thus 
\[
\m^{t'}(\sigma,\tau) \leq \m^{s'}(\tau\mid\sigma) \cdot \m^{s'}(\sigma).
\]
Then by our initial assumption on $t$ and  $t'$, we have 
\[
2^{-K^t(\sigma,\tau)}\leq^\times \m^{s'}(\tau\mid \sigma) \cdot \m^{s'}(\sigma).
\]
Finally, by Theorem \ref{thm-comp-semi}, as there is a computable time bound $s$ such that $\mathbf{m}^{s'}\leq^\times 2^{-K^s}$, we have
\[
2^{-K^t(\sigma,\tau)}\leq^\times 2^{-K^s(\tau\mid \sigma)} \cdot 2^{-K^s(\sigma)}.
\]
The lemma follows by taking the negative logarithm of this inequality.
\end{proof}

We can now prove Lemma~\ref{lem-alt-order-deep}. The (i)$\rightarrow$(ii) implication is immediate. For the (ii)$\rightarrow$(i)  implication, suppose $X$ is a sequence and~$h$ a computable increasing function such that 
\[
K^t(X \uh h(n)) - K(X \uh h(n)) \geq^+ n
\]
for any time bound~$t$. If $k \in [h(n), h(n+1))$, we have
\begin{equation}\label{eq:k1}
K(X \uh k) \leq^+ K(X \uh h(n)) + K(X \uh [h(n),k) \mid X \uh h(n)).
\end{equation}

We now apply the previous lemma with $\sigma=X \uh h(n)$, $\tau=X \uh [h(n),k)$ and $E$ the map such that, on input $\rho$, checks whether $\rho$ has length $h(n)$ for some $n$ and if so returns all strings whose length is between $0$ and $h(n+1)-h(n)$ (otherwise $E(\rho)$ is empty). The lemma gives us a computable time bound~$s$ such that

\begin{equation}\label{eq:k2}
K^s(X \uh h(n)) + K^s(X \uh [h(n),k) | X \uh h(n)) \leq^+ K^t(X \uh k).
\end{equation}

Putting~\eqref{eq:k1} and~\eqref{eq:k2} together, and using the fact that $K(X \uh [h(n),k) \mid X \uh h(n)) \leq K^s(X \uh [h(n),k) \mid X \uh h(n))$, it follows that:
\begin{equation}\label{eq:k3}
K^t(X \uh k) - K(X \uh k) \geq^+ K^s(X \uh h(n)) - K(X \uh h(n)) \geq^+ n.
\end{equation}

This being true for all~$n$ and all $k \in [h(n), h(n+1))$, we have
\begin{equation}\label{eq:k4}
K^t(X \uh k) - K(X \uh k) \geq^+ h^{-1}(k),
\end{equation}
and thus $X$ is order-deep. 

Finally, the (ii)$\leftrightarrow$(iii) equivalence can be established using Lemma \ref{lem-comp-semi}(iii) and Theorem \ref{thm-comp-semi}.

\end{document}